\documentclass[a4paper,USenglish,autoref,thm-restate]{lipics-v2021}

\usepackage{algorithm}
\usepackage[noend]{algpseudocode}
\usepackage{amsmath,amssymb,amsthm}
\usepackage{alltt}
\usepackage{array}
\usepackage{booktabs}
\usepackage[commandnameprefix=ifneeded]{changes}  
\usepackage{cite}
\usepackage{enumitem}
\usepackage{fontawesome5}
\usepackage{graphicx}
\usepackage{multirow}
\usepackage[autolanguage]{numprint} \npthousandsep{,}
\usepackage{subcaption}
\usepackage{textcomp}
\usepackage{tikz} \usetikzlibrary{arrows.meta,decorations.pathmorphing,graphs.standard,positioning}
\usepackage{xcolor}

\graphicspath{{figs/}{./}}

\renewcommand{\Pr}[1]{\ensuremath{\mathrm{Pr}\left[#1\right]}}
\newcommand{\E}[1]{\ensuremath{\mathrm{E}\left[#1\right]}}
\newcommand{\bigO}[1]{\ensuremath{\mathcal{O}(#1)}}

\newcommand{\true}{\textsc{true}}
\newcommand{\false}{\textsc{false}}
\newcommand{\alg}{\ensuremath{\mathcal{A}}}

\makeatletter
\algrenewcommand\ALG@beginalgorithmic{\small}
\algrenewcommand\alglinenumber[1]{\footnotesize #1:}
\makeatother

\newif\ifcomment
\commenttrue    

\newif\iffigabbrv
\figabbrvfalse   
\newcommand{\figtext}{\iffigabbrv Fig.\else Figure\fi}

\newif\ifeqabbrv
\eqabbrvtrue    
\newcommand{\eqtext}{\ifeqabbrv Eq.\else Equation\fi}
\newcommand{\eqstext}{\ifeqabbrv Eqs.\else Equations\fi}

\title{Adaptive Self-Organization in Anonymous Dynamic Networks}
\titlerunning{Adaptive Self-Organization in Anonymous Dynamic Networks}

\author{Garrett Parzych}{School of Computing and Augmented Intelligence \and Biodesign Center for Biocomputing, Security and Society\\Arizona State University, Tempe, AZ, USA}{gparzych@asu.edu}{https://orcid.org/0009-0008-4789-9603}{}
\author{Joshua J. Daymude}{School of Computing and Augmented Intelligence \and Biodesign Center for Biocomputing, Security and Society\\Arizona State University, Tempe, AZ, USA}{jdaymude@asu.edu}{https://orcid.org/0000-0001-7294-5626}{}
\authorrunning{G.\ Parzych and J.\ J.\ Daymude}

\Copyright{Garrett Parzych and Joshua J.\ Daymude}

\ccsdesc[500]{Theory of computation~Distributed algorithms}
\ccsdesc[300]{Networks~Network algorithms}

\keywords{Dynamic networks, anonymous nodes, broadcast, self-organization, biological distributed algorithms}

\funding{This work is supported in part by National Science Foundation award CCF-2312537.}

\acknowledgements{J.J.D.\ thanks Annie Daymude, Monty Carson, and Julie Carson for the many hours of childcare that created the working time to complete this paper.}

\hideLIPIcs
\nolinenumbers

\begin{document}


\maketitle

\begin{abstract}
    We introduce the problem of \textit{adaptive self-organization} in which the nodes of an anonymous, synchronous dynamic network must distributively change the collective distribution of their responses (or ``colors'') as a function of \textit{time-varying environmental signals}, even when these signals are only perceived locally and the network topology changes adversarially.
    Specifically, a \textit{signal adversary} may change the type of signal and which node(s) \textit{witness} that signal arbitrarily between rounds.
    If a signal (or lack thereof) $s$ persists in the system for sufficiently long, the dynamic network must \textit{stabilize} such that nodes' colors closely approximate $r(s)$, a goal distribution defined by the problem instance.
    By symmetry, deterministic nodes can only hope to solve \textit{homogeneous} instances of adaptive self-organization, i.e., those in which all nodes stabilize with the same color.
    We present a linear-time, logarithmic-memory, deterministic algorithm for this class of instances that works even when the multiplicity and location of signal witnesses change arbitrarily.
    We then give a randomized extension of this algorithm that solves arbitrary (i.e., not necessarily homogeneous) instances of adaptive self-organization with high probability in the same time and space bounds.
\end{abstract}

\section{Introduction} \label{sec:intro}

The theory of dynamic networks analyzes nodes' abilities to distributively solve problems of coordination and communication even as the edges between them change over time~\cite{Altisen2023-selfstabilizingsystems,Casteigts2012-timevaryinggraphs}.
Many fundamental problems such as broadcast, leader election, and consensus have been studied in these models~\cite{Augustine2016-distributedalgorithmic,Casteigts2018-journeydynamic}, but always as isolated goal behaviors.
In this paper, we consider an orthogonal dimension of dynamics, not only of changes to a network's topology, but also of time-varying stimuli that a network must respond to.
We are specifically interested in collectives' fundamental ability to achieve \textit{adaptive self-organization} of their members among a set of discrete behaviors in response to \textit{time-varying environmental signals}, even when these signals may only be perceived locally by a few members and the possible interactions among members are dynamic.

Biological distributed systems rarely have just one collective function.
Instead, members respond to locally-perceived environmental signals by altering their individual behaviors which may then induce a collective adaptation to the environment at the system level.
For example, the discovery of a food source in an ant colony's vicinity triggers a shift in foraging behaviors from exploration to gathering; its depletion does the opposite~\cite{Carroll1973-ecologyforaging,Flanagan2012-quantifyingeffect}.
When slime molds (i.e., \textit{Dictyostelium discoideum}) experience nutrient scarcity, most of their cells aggregate to form a fruiting body of spores while other ``loners'' stay behind to exploit the possible return of food to the area~\cite{Tarnita2015-fitnesstradeoffs,Rossine2020-ecoevolutionarysignificance}.
In mammalian immune systems, helper T cells secrete different combinations of cytokines based on the type of invading pathogen (i.e., parasites, fungi, viruses, or bacteria) to coordinate the appropriate immune response and then return to a passive state of immune memory after the pathogen is cleared~\cite{Sompayrac2019-howimmune}.
Each of these biological collectives fluidly adapts among response behaviors in the presence of stimuli and gracefully ``resets'' to a baseline behavior in their absence.

With this inspiration, we propose and analyze the problem of \textit{adaptive self-organization} in anonymous dynamic networks.
In each round, the network's environment potentially emits a \textit{signal} that is then \textit{witnessed} by some subset of nodes.
If a particular signal persists in the system for sufficiently long, the nodes must self-organize their \textit{colors}---representing different behaviors or responses---to approximate that signal's goal distribution as defined by the problem instance.
Likewise, if there is no signal for sufficiently long, nodes must ``reset'' to a baseline distribution of colors.
All of this must be achieved despite adversarial dynamics in both the network topology and signals.

\subparagraph{Our Contributions.}

We consider distributed algorithms run by nodes that are anonymous (lacking identifiers) and port-indistinguishable (lacking port labels) in a dynamic network whose topology can change arbitrarily between synchronous rounds but always remains connected (1-interval connected).
In this setting, we prove:
\begin{itemize}[itemsep=0pt]
    \item Any instance of adaptive self-organization requiring a network to reach and remain in a configuration with multiple colors (i.e., approximating a \textit{non-homogeneous} distribution) is impossible for deterministic algorithms, regardless of dynamics (Theorem~\ref{thm:hardness}).

    \item There exists a linear-time, logarithmic-memory, deterministic algorithm that solves any \textit{homogeneous} instance of adaptive self-organization, i.e., one in which every signal maps to a goal distribution where all nodes have the same color (Theorem~\ref{thm:deterministic}). 
    This algorithm stabilizes in linear time when resetting in the absence of any signal, and stabilizes with at most a linear number of disruptions (i.e., dynamics may temporarily disrupt otherwise stable configurations in a linear number of rounds) when responding to a signal's presence.

    \item If nodes are additionally granted knowledge of $n$, the number of nodes in the network, then a small modification of the deterministic algorithm achieves stabilization in linear time for both responding to a signal's presence and resetting in its absence (Corollary~\ref{cor:knowledgen}).

    \item There exists a randomized algorithm that solves any instance of adaptive self-organization---homogeneous or otherwise---with high probability (Corollary~\ref{cor:randomized}).
    This algorithm extends the deterministic approach and thus inherits its linear time complexity, logarithmic memory bound, and stabilization guarantees.
\end{itemize}

\subsection{Model} \label{subsec:model}

\subparagraph{Dynamic Networks.}

We consider synchronous dynamic networks comprising a fixed set of nodes $V$.
Nodes communicate with each other via message passing over a communication graph whose topology changes over time.
We model this topology as a \textit{time-varying graph} $\mathcal{G} = (V, E, T, \rho)$ where $V$ is the set of nodes, $E$ is the static set of undirected edges that may appear in the graph, $T = \mathbb{N}$ is the lifetime of the graph, and $\rho : E \times T \to \{0,1\}$ is the presence function indicating whether an edge exists at a given time~\cite{Casteigts2012-timevaryinggraphs}.
We refer to the set of edges present at time $t \in T$ as $E_t = \{e \in E: \rho(e, t) = 1\}$ and the undirected graph $G_t = (V, E_t)$ as the \textit{snapshot} of $\mathcal{G}$ at time $t \in T$.
We assume an adversary controls the presence function $\rho$ and that $E$ is the complete set of edges on $V$; i.e., we do not limit which edges the adversary can introduce.
We do, however, follow the standard assumption of \textit{1-interval connectivity}; i.e., the adversary may make arbitrary topological changes at each time $t \in T$ so long as each snapshot $G_t$ is connected.

\subparagraph{Node Capabilities.}

We assume nodes are \textit{anonymous}, lacking unique identifiers, and---unless explicitly stated otherwise---have \textit{no knowledge or approximation of any global measure}, including the number of nodes $n = |V|$.
Said another way, nodes are only capable of executing \textit{uniform} algorithms that are the same regardless of any global parameters, like $n$.
We further assume that nodes have \textit{no port labels}; i.e., they cannot count or locally distinguish among their neighbors.
Consequently, when a node communicates with its neighbors via \textit{message passing}, it must send the same message to all its current neighbors.

\subparagraph{Algorithms and Execution.}

Each node in the time-varying graph $\mathcal{G}$ synchronously executes the same distributed algorithm $\alg$.
All nodes are initialized at time $t = 0$, and each synchronous round $t$ starting at time $t$ proceeds as follows (see also \figtext~\ref{fig:roundstructure}):
\begin{enumerate}
    \item The adversary fixes the network topology $G_t$ and the presence, type, and/or witness(es) of the signal (see Section~\ref{subsec:problem}) for round $t$.
    We allow the adversary to know the state of all nodes at time $t$ when it makes its choices.
    Note that this lets the adversary adapt to decisions made by randomized algorithms.

    \item Each node may send a message to its neighbors in $G_t$ according to $\alg$ as a function of its current state only; messages do not explicitly depend on the presence or type of signal.

    \item Each node may perform a state transition according to $\alg$ as a function of its current state, the multiset of messages it (reliably) receives from its neighbors in $G_t$, and the presence and type of signal it currently witnesses, if any (again, see Section~\ref{subsec:problem}).
\end{enumerate}

\begin{figure}[t]
    \centering
    \resizebox{\textwidth}{!}{\begin{tikzpicture}
        \def\r{5}

        \foreach \t/\x in {0/0,1/1,2/2,t/2.5,t+1/3.5}{
            \draw[dashed] (\r*\x, 0) -- (\r*\x, 4.25);
            \node at (\r*\x, 4.5) {time $\t$};
        }

        \tikzset{round/.style={arrows={Bracket[scale length=3.5,scale width=2]-Arc Barb[scale=2]}}}
        \foreach \t/\x in {0/0,1/1,t/2.5}
            \draw[round] (\r*\x, 0) to node[below] {round $\t$} (\r*\x+\r, 0);

        \tikzset{box/.style={rectangle,draw,align=center,inner xsep=0pt,text width=2.5cm,minimum height=1.2cm}}
        \foreach \x in {1.25,6.25,13.75} {
            \node[box] at (\x, 1) {send/recv.\ messages};
            \node[box] at (\x+2.5, 1) {perform state transition};
        }

        \node at (2.25*\r, 2.25) {$\cdots$};
        \node at (3.75*\r, 2.25) {$\cdots$};

        \node at (2.5, 4.5) {$G_0$};
        \begin{scope}[xshift=2.5cm,yshift=3cm]
            \graph {
                subgraph I_n[nodes={draw,circle},clockwise,radius=1cm,empty nodes,n=7],
                {[path] 1, 2, 4, 7, 6},
                {[path] 3, 4, 5, 2, 7},
                "$s_0$"[at={(1.5, 1)}] ->[thick,blue!70!black,decorate,decoration={snake,post length=0.07cm}] {1, 3};
            };
        \end{scope}
        \node at (7.5, 4.5) {$G_1$};
        \begin{scope}[xshift=7.5cm,yshift=3cm]
            \graph {
                subgraph I_n[nodes={draw,circle},clockwise,radius=1cm,empty nodes,n=7],
                {[path] 3, 1, 5, 7, 4},
                {[path] 7, 3, 4, 6, 1, 2},
                "$s_1 = \bot$"[at={(-1.7, -0.9)}];
            };
        \end{scope}
        \node at (15, 4.5) {$G_t$};
        \begin{scope}[xshift=15cm,yshift=3cm]
            \graph {
                subgraph I_n[nodes={draw,circle},clockwise,radius=1cm,empty nodes,n=7],
                {[path] 3, 5, 4, 2, 6, 7, 1},
                "$s_t$"[at={(-1.8, 0.8)}] ->[thick,red!70!black,decorate,decoration={snake,post length=0.07cm}] {6};
            };
        \end{scope}
    \end{tikzpicture}}
    \caption{Synchronous round structure and algorithm execution as defined in Section~\ref{subsec:model}.}
    \label{fig:roundstructure}
\end{figure}

\subsection{Adaptive Self-Organization} \label{subsec:problem}

\subparagraph{Signals and Witnesses.}

In each round $t$, the environment---represented by a \textit{signal adversary}---emits either one of $k$ distinct signals $s_t \in \{1, \ldots, k\}$ or no signal at all, denoted by $s_t = \bot$.
We assume that the number of signals is a fixed constant $k$, independent of $n$.
If a signal $s_t \neq \bot$ is present in the system, the signal adversary chooses a non-empty set of \textit{witnesses} $W_t \subseteq V$ that perceive it.
Thus, whether a signal exists in the system, which signal is emitted, and what nodes witness an emitted signal may all change arbitrarily and independent of each other between rounds.
A signal $s_t \in \{\bot, 1, \ldots, k\}$ is \textit{persistent} if $s_{t'} = s_t$ for all rounds $t' \geq t$; note that the witnesses $W_t$ may still change.

\subparagraph{Responses (``Colors'').}

Each node may change its \textit{color} over time to indicate its response to different signals (or lack thereof).
Formally, we denote by $c_t(v) \in \{1, \ldots, \ell\}$ the color of node $v \in V$ at time $t$ (i.e., at the start of round $t$) and by $C_t(i)$ the number of nodes at time $t$ with color $i$.
Color can be thought of as some component or function of a node's state, similar to the way ``outputs'' are modeled in Di Luna and Viglietta's work on history trees~\cite{DiLuna2022-computinganonymous}.
We assume the number of colors $\ell$ is a fixed constant, independent of $n$.

\subparagraph{Problem Statement.}

Let $F_\ell = \{(f_1, \ldots, f_\ell) \in \mathbb{Q}^\ell \mid \sum_{i=1}^\ell f_i = 1 \text{ and } f_i \geq 0, \forall i\}$ be all rational probability distributions over the $\ell$ colors.\footnote{We consider rational probability distributions in $\mathbb{Q}^\ell$ rather than real probability distributions in $\mathbb{R}^\ell$ so that algorithms can reasonably encode the proportions $f_i$ in a constant number of bits w.r.t.\ $n$.}
An instance of the \textit{adaptive self-organization problem} takes the form of a response function $r : \{\bot, 1, \ldots, k\} \to F_\ell$ mapping each distinct signal and the absence of a signal to a goal distribution over colors.
Informally, if a signal (or lack thereof) $s$ is persistent, the nodes must collectively change their colors to reach and remain in a distribution approximating $r(s)$.
Formally, suppose a signal $s$ becomes persistent at time $t_0$.
An algorithm execution \textit{stabilizes (w.r.t.\ signal $s$)} if there exists a time $t_1 \geq t_0$ such that $c_{t}(v) = c_{t-1}(v)$ for all times $t > t_1$ and nodes $v \in V$.
A dynamic network \textit{approximates} a goal distribution $r(s)$ at time $t$ if its nodes' color frequencies $C_t/n$ satisfy
\begin{equation} \label{eq:approximate}
    d_{TV}(C_t/n, r(s)) = \frac{1}{2}\sum_{i=1}^\ell \left|\frac{C_t(i)}{n} - r(s, i)\right| < \varepsilon(n),
\end{equation}
where $\varepsilon(n)$ is an error term satisfying $\lim_{n \to \infty} \varepsilon(n) = 0$.
An algorithm \textit{solves} an instance $r$ of adaptive self-organization if, for all sufficiently large $n$, every execution stabilizes w.r.t.\ any persistent signal $s$ in a configuration approximating $r(s)$.
For randomized algorithms, we allow \eqtext~\ref{eq:approximate} to hold \textit{with high probability (w.h.p.)}, i.e., with probability at least $1 - 1/\text{poly}(n)$.

\subparagraph{Time and Space Complexity.} 

As usual for distributed systems, we are interested in the asymptotic growth of complexity measures as a function of $n = |V|$, the number of nodes.
Because of a subtle interaction between the dynamics and signal adversaries, we define two measures of time complexity.
The first is the more natural of the two: we say an algorithm $\alg$ \textit{stabilizes for a signal $s \in \{\bot, 1, \ldots, k\}$ in $T(n)$ rounds} if, given $s$ is persistent at time $t$, every execution of $\alg$ stabilizes w.r.t.\ $s$ by round $t + T(n)$.
With this complexity measure, the algorithm must stabilize within $T(n)$ rounds regardless of how the adversary changes the network topology or signal witnesses over time.
We also introduce a weaker form of stabilization, where otherwise ``stable'' configurations may be temporarily disrupted---but only for a bounded number of rounds before permanent stabilization is reached.
Call a dynamic network \textit{stable in round $t$} if $c_t(v) = c_{t-1}(v)$ for all nodes $v \in V$; note that this one-round notion is strictly weaker than stabilization which requires colors to remain fixed for all future times.
An algorithm $\alg$ \textit{stabilizes for a signal $s \in \{\bot, 1, \ldots, k\}$ with $T(n)$ disruptions} if, given $s$ is persistent at time $t$,
there are at most $T(n)$ not necessarily consecutive rounds after round $t$ in which the dynamic network is not stable.
We note that stabilization in linear time is worst-case optimal in this model.
In particular, the adversary could choose a static path network on $n$ nodes, and if a persistent signal is witnessed at one endpoint, information about that signal cannot reach the other endpoint in fewer than $n-1$ rounds.

An algorithm's \textit{space complexity} is the maximum number of bits a node uses to store its state between rounds.
We emphasize that even if nodes have $\Omega(\log n)$ memory---sufficient for storing unique identifiers---they are anonymous and are not assigned such identifiers a priori.
We do not analyze the complexity of message sizes directly, as our execution model assumes nodes send messages based only on their states; thus, there are at most as many message types as states.

\subsection{Related Work} \label{subsec:relwork}

Adaptive self-organization is fundamentally a communication problem in that nodes disseminate local knowledge (i.e., the presence and type of signal they witness) to coordinate the appropriate response across an entire dynamic network.
But related works on broadcast~\cite{ODell2005-informationdissemination,Parzych2024-memorylower,Hussak2023-terminationamnesiac}, $k$-token dissemination~\cite{Dinitz2018-smoothedanalysis,Dinitz2022-smoothedanalysis}, and aggregate computation~\cite{Michail2013-namingcounting,DiLuna2014-countinganonymous,DiLuna2014-consciousunconscious,DiLuna2016-nontrivial,Chakraborty2018-fasterexactcounting,Kowalski2020-polynomialcounting,DiLuna2022-computinganonymous,Kowalski2022-efficientdistributed,DiLuna2023-optimalcomputation} in anonymous dynamic networks are not directly applicable here.
Unlike most algorithms which use progress as a structural feature---e.g., tracking informed vs.\ uninformed nodes in broadcast or local symmetries that remain to be broken in aggregate computation---our problem allows signals to change or disappear at any moment, suddenly requiring past progress be undone in pursuit of a different goal.
In this sense, our algorithms feature a kind of self-stabilization~\cite{Zhang2024-faulttolerantconsensus,DiLuna2025-universalfinitestate}, though our concern is achieving continuously changing goals from in-progress configurations rather than recovering from arbitrary corruption of node memory. 

Most relevant to our work are Oh, Randall, and Richa's recent results on ``adaptive collective responses'' which are also driven by local stimuli and operate on anonymous dynamic networks~\cite{Oh2025-adaptivecollective}.
If a stimulus (resp., no stimulus) is witnessed for sufficiently long, they desire for all nodes to become \textsc{Aware} (resp., \textsc{Unaware}), akin to a one-signal, two-color, homogeneous version of our problem.
In their randomized solution, witnesses continuously generate ``alert'' tokens that spread through the network using degree-weighted random walks and cause nodes they reach to become \textsc{Aware}.
Whenever a node stops witnessing stimuli, it broadcasts an ``all clear'' token that causes other nodes to become \textsc{Unaware}.
As will also be the case with our algorithm, the main challenge is ensuring the most recent information ``wins out'' over older information, even though both types of tokens can be present in the system simultaneously.
Their algorithm leverages the relative speeds of random walks vs.\ broadcasts to solve the problem in constant memory, $\bigO{n^2}$ expected time for becoming \textsc{Unaware}, and $\bigO{n^6\log n}$ expected time for becoming \textsc{Aware}.

However, Oh et al.\ make several assumptions that are incompatible with ours: (1) their scheduler sequentially activates nodes uniformly at random while we assume synchrony, (2) they restrict dynamics such that each \textsc{Aware} node's set of \textsc{Aware} neighbors always form a connected component while we assume more general 1-interval connectivity, (3) they assume the number of active stimulus locations and the dynamic degree are both upper bounded by constants known to the nodes a priori while we make no such assumptions, and (4) they guarantee all nodes will become \textsc{Aware} if the stimuli remain unchanged for sufficiently long while we allow the number and locations of witnesses to change arbitrarily so long as at least one signal is present anywhere in the network for sufficiently long.
In comparison, our algorithms solve a more general formulation of the problem under more standard assumptions in linear (worst-case optimal) time, but require logarithmic memory.

\section{Hardness of Non-Homogeneous Instances} \label{sec:hardness}

In this section, we prove that instances of adaptive self-organization requiring a network to approximate distributions with multiple colors are impossible for deterministic algorithms.
Formally, an instance $r : \{\bot, 1, \ldots, k\} \to F_\ell$ of adaptive self-organization is \textit{homogeneous} if each signal $s \in \{\bot, 1, \ldots, k\}$ has a color $c_s \in \{1, \ldots, \ell\}$ such that
\begin{equation}
    r(s, i) = \left\{ \begin{array}{cl}
        1 & \text{if $i = c_s$}; \\
        0 & \text{otherwise.}
    \end{array} \right.
\end{equation}
An instance $r$ is \textit{non-homogeneous} otherwise, i.e., if there exists a signal $s$ whose distribution $r(s)$ places non-zero probability mass on multiple colors or, equivalently, if there exists a signal $s$ and a color $i$ such that $0 < r(s, i) < 1$.

The intuition behind the impossibility result below is straightforward: if all nodes (1) are arranged in a highly symmetric graph, (2) witness the same signal, and (3) receive the same messages from their neighbors, then---because they are deterministic---they must always behave in the same ways.
In particular, they must always have the same color.

\begin{restatable}{theorem}{thmhardness} \label{thm:hardness}
    No non-homogeneous instance of adaptive self-organization can be solved by a deterministic algorithm, even in static networks.
\end{restatable}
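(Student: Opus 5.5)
We argue by symmetry: a deterministic algorithm run on a vertex-transitive static network in which every node sees exactly the same thing keeps all nodes in identical states forever. Fix a non-homogeneous instance $r$, so there are a signal $s$ and a color $i$ with $0 < r(s,i) < 1$, and suppose for contradiction that some deterministic algorithm $\alg$ solves $r$.

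For each $n$, the adversary uses the static complete graph $K_n$ (equivalently, the cycle $C_n$), so $G_t = K_n$ for all $t$, which is trivially 1-interval connected. It makes $s$ persistent from time $0$. If $s \neq \bot$, it sets $W_t = V$ in every round; this is a legal non-empty witness set, and it makes every node witness $s$. If $s = \bot$, no node witnesses anything. In either case the setting is the same at every node.

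Next we show by induction on $t$ that all nodes have the same state at time $t$. At $t=0$ this holds because nodes are anonymous and uniformly initialized. Suppose it holds at time $t$. Every node sends the same message $m_t$, since messages depend only on state. Every node then receives the same multiset, namely $n-1$ copies of $m_t$; there are no port labels to tell senders apart. Every node also sees the same signal information. Because $\alg$ is deterministic, every node makes the same transition, which gives identical states at time $t+1$. Since color is a function of state, $c_t(v)$ is the same for all $v$, so $C_t$ puts all $n$ nodes on a single color $j_t$.

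Solving $r$ requires that, for all sufficiently large $n$, the execution stabilizes and eventually satisfies $d_{TV}(C_t/n, r(s)) < \varepsilon(n)$. For a point mass on color $j$, $d_{TV}(\delta_j, r(s)) = 1 - r(s,j)$. Let $\delta = \min(r(s,i), 1 - r(s,i))$. Because $r(s,i)$ lies strictly between $0$ and $1$, no color $j$ has $r(s,j) = 1$, so $\delta' := \min_j (1 - r(s,j)) \ge \delta > 0$. This $\delta'$ is a constant independent of $n$. Since $\varepsilon(n) \to 0$, there are arbitrarily large $n$ with $\varepsilon(n) < \delta'$. For those $n$ the required approximation fails at every time, which contradicts the assumption that $\alg$ solves $r$.

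The main point needing care is the inductive step. In particular, we must use the model's definition that transitions depend on the multiset of received messages: this is what guarantees identical inputs at every node in $K_n$. The witness choice also needs attention in the $s \neq \bot$ case, since $W_t = V$ must be allowed; it is, because the only requirement on $W_t$ is that it be non-empty. The rest is a short calculation.
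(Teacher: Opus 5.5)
Your proposal is correct and follows the paper's proof essentially step for step. Both use the static complete graph $K_n$, make $s$ persistent from round $0$ with all of $V$ witnessing it, and show by induction that every node shares the same state and hence the same color forever. Both then finish with a constant lower bound on the total variation distance to $r(s)$, which no vanishing $\varepsilon(n)$ can beat. Your small refinements are sound but do not change the argument: you treat $s=\bot$ with no witnesses, compute $d_{TV}(\delta_j, r(s)) = 1 - r(s,j)$ exactly, and observe that stabilization is not needed because the approximation fails at every time.
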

\begin{proof}
    Consider any non-homogeneous instance $r$ of adaptive self-organization; w.l.o.g., let signal $s \in \{\bot, 1, \ldots, k\}$ and color $c \in \{1, \ldots, \ell\}$ satisfy $0 < r(s, c) < 1$. 
    Suppose to the contrary that there exists a deterministic algorithm $\alg$ that solves this instance, and consider the execution of $\alg$ on a dynamic network of $n$ nodes in which $G_t = K_n$, $s_t = s$, and $W_t = V$ for all rounds $t \geq 0$.
    In words, the network has a static, complete topology; signal $s$ is persistent from the very first round; and $s$ is always witnessed by all nodes.

    Argue by induction on $t \geq 0$ that at time $t$, all nodes have the same state.
    All nodes are initialized in the same state.
    So suppose all nodes have the same state at time $t \geq 0$.
    Since $\alg$ is deterministic, $G_t$ is a complete graph, and all nodes witness the same signal $s$, symmetry demands all nodes send each other identical messages and then use the $n - 1$ identical messages they receive to transition to the same state, as required.

    Since color is a component of state, this implies all nodes share the same color $c_t$ at all times $t$.
    Moreover, since $\alg$ solves this instance and signal $s$ is persistent from round $0$, there must be a round $t^*$ at which this execution of $\alg$ stabilizes, i.e., $c_t = c_{t-1}$ for all times $t > t^*$.
    Let $c^*$ be this uniform stable color.
    Then the stable count of nodes with color $c$ is
    \begin{equation}
        C(c) = \left\{\begin{array}{cl}
            n & \text{if $c = c^*$}; \\
            0 & \text{otherwise}.
        \end{array}\right.
    \end{equation}
    In either case, the total variation distance between the stable color distribution and the goal distribution is always lower bounded by a constant:
    \begin{equation}
        \frac{1}{2}\sum_{i=1}^\ell \left|\frac{C(i)}{n} - r(s, i)\right|
        \geq \frac{1}{2} \left|\frac{C(c)}{n} - r(s, c)\right|
        \geq \frac{\min\{r(s, c), 1 - r(s, c)\}}{2}.
    \end{equation}
    So there cannot exist an error term $\varepsilon(n)$ such that $d_{TV}(C/n, r(s)) < \varepsilon(n)$ and $\lim_{n \to \infty} \varepsilon(n) = 0$, contradicting the supposition that $\alg$ solves this instance by stabilizing in a configuration approximating $r(s)$.
    Since the choice of $r$ was arbitrary, all non-homogeneous instances must be impossible for deterministic algorithms.
\end{proof}

In Appendix~\ref{app:hardness}, we re-prove this result with a slightly less pathological counterexample.

\section{A Deterministic Algorithm for Homogeneous Instances} \label{sec:deterministic-algorithm}

In Section~\ref{sec:hardness}, we showed that deterministic algorithms can only hope to solve homogeneous instances of adaptive self-organization.
In this section, we give a deterministic algorithm that solves exactly these instances (Section~\ref{subsec:algorithm}).
It stabilizes in optimal $\bigO{n}$ time when there is no signal (i.e., when $s = \bot$), stabilizes for any signal $s \neq \bot$ with $\bigO{n}$ disruptions, and uses $\bigO{\log n}$ bits of memory per node (Section~\ref{subsec:analysis}).

\subsection{Algorithm Description} \label{subsec:algorithm}

\newcommand{\signal}{\texttt{signal}}
\newcommand{\algcolor}{\texttt{color}}
\newcommand{\state}{\texttt{state}}
\newcommand{\source}{\texttt{source}}
\newcommand{\maxttl}{\texttt{max-ttl}}
\newcommand{\ttl}{\texttt{ttl}}
\newcommand{\timer}{\texttt{timer}}
\newcommand{\dist}{\texttt{dist}}
\newcommand{\reset}{\textsc{Reset}}
\newcommand{\set}{\textsc{Set}}
\newcommand{\locked}{\textsc{Locked}}
\newcommand{\maxrecv}{\ensuremath{m^*}}
\newcommand{\maxtimer}{\ensuremath{t^*}}
\newcommand{\recvttl}{\ensuremath{\ell^*}}
\newcommand{\recvsignal}{\ensuremath{s^*}}
\newcommand{\msg}[2]{\ensuremath{\texttt{#1}(#2)}}

\begin{table}[t]
    \centering
    \caption{Node variables and initialization in Algorithm~\ref{alg:deterministic-coloring}.}
    \label{tab:algvars}
    \resizebox{\textwidth}{!}{\begin{tabular}{llcl}
        \toprule
        \textbf{Variable} & \textbf{Domain} & \textbf{Init.} & \textbf{Description} \\
        \midrule
        \signal & $\{\bot, 1, \ldots, k\}$ & $\bot$ & The signal a node believes is currently present. \\
        \algcolor & $\{1, \ldots, \ell\}$ & $c_\bot$ & A node's current color (i.e., signal response). \\
        \state & $\{\reset, \set, \locked\}$ & \set & A node's current state. \\
        \source & $\{\true, \false\}$ & \true & \true\ iff a node is the source of a broadcast. \\
        \maxttl & $\{2, \ldots, 2n\}$ & $2$ & The maximum time-to-live for this node's messages. \\
        \ttl & $\{0, \ldots, 2n\}$ & $2$ & The current time-to-live for this node's messages. \\
        \timer & $\{0, \ldots, 2n - 1\}$ & $1$ & Counts up to \maxttl, then becomes \locked. \\
        \bottomrule
    \end{tabular}}
\end{table}

Our algorithm has several interlocking mechanisms carefully designed to address the difficulties of adversarial signal and edge dynamics.
We elucidate each of these in the exposition paragraphs below; a reader who is only interested in the final algorithm can skip directly to Algorithm~\ref{alg:deterministic-coloring}.
Throughout, we use the variables and notation summarized in Table~\ref{tab:algvars}.
A minimal running example can be found in Appendix~\ref{app:runningexample}.

\subparagraph{Na\"ive Flooding for a Single Persistent Signal.}

A na\"ive approach to homogeneous adaptive self-organization would have all witnesses of a signal $s \neq \bot$ continuously send messages indicating this fact, causing nodes that receive these messages to change their \algcolor\ to $c_s$ (i.e., the color with $r(s, c_s) = 1$) and continue to propagate these messages.
By 1-interval connectivity, eventually all nodes will receive this message and adopt $\algcolor = c_s$.
This works as long as there is exactly one persistent signal in the system (i.e., $\exists s \neq \bot$ such that $s_t = s$ for all times $t \geq 0$), even if the specific witnesses of signal $s$ change over time.

\subparagraph{Temporal Distance-based Signal Conflict Resolution.}

Of course, our problem allows the signal to change adversarially.
So suppose there is always some signal $s_t \neq \bot$ at each time $t \geq 0$, but that signal may change over time.
Using the same flooding solution as before, it is now possible for a node to receive two (or more) distinct colors $i \neq j$ in messages from its neighbors.
Which should it adopt and propagate?
Any deterministic conflict resolution scheme based only on the \algcolor\ values (e.g., adopting the lexicographically-first color) is obviously fallible (e.g., if $i < j$, then $i$ will dominate the system even if $c_s = j$).

Our solution is for each node to store a variable $\dist \in \mathbb{N}$ representing its temporal distance from the nearest witness.
All witnesses of the signal $s \neq \bot$ adopt color $c_s$ and distance $\dist = 0$, sending messages that contain this information.
Non-witnesses set $\dist = d^* + 1$---where $d^*$ is the minimum \dist\ value in its closed neighborhood---and adopt the lexicographically-first color from among those with $\dist = d^*$, effectively choosing the color of the ``temporally closest'' witness.
This ensures that messages about current signals---which have witnesses and thus have the smallest \dist\ values in the network---will always eventually overtake older ones.

\subparagraph{Time-to-Live for Resets in the Absence of a Signal.}

Now suppose the adversary may send no signal $s_t = \bot$, restoring its complete, problem-defined capabilities.
Our \dist-based solution may now use unbounded memory: without any witnesses in the system to inject $\dist = 0$ messages, older messages will continue to propagate and increment their \dist\ values ad infinitum.
We solve this by reversing the process: instead of incrementing and forwarding a neighborhood's smallest \dist\ value, nodes decrement and forward their neighborhood's largest time-to-live values, \ttl, that witnesses initialize to some \maxttl.
Any node with $\ttl = 0$ ``resets'', setting $\algcolor = c_\bot$.
Thus, if signals are absent from the system for sufficiently long, there will be no witnesses to send new messages with $\ttl = \maxttl$, so all nodes will eventually reach $\ttl = 0$ and $\algcolor = c_\bot$, as desired.

\subparagraph{Discovering a Sufficient Time-to-Live.}

Witnesses of a signal $s \neq \bot$ must initialize \ttl\ values to a sufficiently large \maxttl\ for their messages to reach all other nodes.
Otherwise, nodes may obtain $\ttl = 0$ and reset to $\algcolor = c_\bot$ even when a signal is present simply because its witnesses are temporally far away.
Setting \maxttl\ to the network's temporal diameter is necessary; using $n = |V|$ is sufficient.
But nodes have no a priori knowledge or estimate of $n$.
Thus, we initialize \maxttl\ to a small constant and devise a mechanism for locally detecting when \maxttl\ is too small.
If ever a node detects this event, it doubles its \maxttl\ value, eventually increasing all nodes' \maxttl\ values via flooding.
Doubling \maxttl\ (as opposed to, e.g., incrementing) yields a fast runtime and low memory footprint.

The main challenge in detecting an insufficient \maxttl\ is locally disambiguating whether a node reaches $\ttl = 0$ because the originating witness's \maxttl\ value was too small or because there is no signal.
Our solution uses three node \state\ values: (1) \reset\ nodes that either do not believe a signal is present in the system or are otherwise not participating in flooding, (2) \set\ nodes that are participating in flooding and have set their \algcolor\ accordingly, and (3) \locked\ nodes that believe their \algcolor\ correctly represents the current signal present in the system.
\set\ nodes use a \timer\ variable to count rounds modulo \maxttl, effectively waiting until the flooding they are participating in reaches its furthest extent, at which point they all become \locked.
If ever a non-\locked\ node receives a message from a \locked\ node, it knows \maxttl\ was not large enough; this recipient thus doubles its \maxttl\ value and initiates a new flooding, now with double the reach.
Otherwise, if no such recipient exists, all nodes must be \locked\ with the desired color.

\subparagraph{Final Details and the Complete Algorithm.}

\begin{algorithm}[tp] 
	\caption{Deterministic Algorithm for Homogeneous Instances}
	\label{alg:deterministic-coloring}
	\begin{algorithmic}[1]
		\Statex \textbf{Sending and Receiving Messages.}

		\If {$\state = \reset$}
            \State Send \msg{reset}{\maxttl}. \label{alg:msg-reset}
		\ElsIf {$\state = \set$}
            \State Send \msg{set}{\signal, \maxttl, \ttl, \timer}. \label{alg:msg-set}
		\ElsIf {$\state = \locked$}
            \State Send \msg{locked}{\signal, \maxttl, \ttl}. \label{alg:msg-locked}
		\EndIf

        \medskip

        \Statex \textbf{State Transitions.}

        \State Let \maxrecv\ be the largest \maxttl\ value in any received message of any type. \label{alg:max-recv}
        \If {$\maxrecv > \maxttl$} \label{alg:cond-max-recv}
            \State $\signal \gets \bot$, $\algcolor \gets c_\bot$, $\state \gets \reset$, $\source \gets \false$, $\maxttl \gets \maxrecv$, $\ttl \gets 0$, $\timer \gets 0$.\label{alg:increase-energy-recv}
		\EndIf
		
		\medskip

        \State Let $L$ be the received \msg{locked}{} messages with \maxttl\ values equal to this node's \maxttl. \label{alg:def-lock-msgs}
        \If {$L \neq \emptyset$ and $\state \neq \locked$} \label{alg:cond-lock-msgs}
            \State $\state \gets \set$, $\source \gets \true$, $\maxttl \gets 2 \cdot \maxttl$, $\ttl \gets \maxttl$, $\timer \gets 0$, $L \gets \emptyset$.\label{alg:increase-energy-locked}
		\EndIf
        
        \medskip
		
        \State Let $S$ be the received \msg{set}{} messages with \maxttl\ values equal to this node's \maxttl. \label{alg:def-set-msgs}
        \If {$S \neq \emptyset$ and $\state \neq \locked$} \label{alg:cond-set-msgs}
			\State $\state \gets \set$. \label{alg:join-set-nodes}
            \State $\timer \gets \max\{\timer, \maxtimer\}$, where \maxtimer\ is the largest \timer\ value in any message in $S$. \label{alg:timer-sync}
		\EndIf
		
        \medskip

        \If {$\state \neq \reset$} \label{alg:timer-increment-cond}
            \State $\timer \gets (\timer + 1) \bmod \maxttl$. \label{alg:timer-increment}
			\If {$\timer = 0$} \label{alg:timer-zero-cond}
                \State $\state \gets \locked$. \label{alg:become-locked}
				\If {witnessing a signal $s \neq \bot$} \label{alg:locked-check-signal}
                    \State $\signal \gets s$. \label{alg:locked-new-signal}
					\State $\algcolor \gets c_s$. \label{alg:locked-new-color}
					\State $\source \gets \true$. \label{alg:locked-signal-true}
                \Else {} $\source \gets \false$. \label{alg:locked-signal-false}
				\EndIf
			\EndIf

            \If {$\source = \false$} \label{alg:update-energy-cond} 
                \State Let \recvttl\ be the largest \ttl\ in any message of $L$ or $S$ or this node's \ttl. \label{alg:choose-max-ttl}
                \State Let \recvsignal\ be the largest \signal\ in any message of $L$ or $S$ or this node's $\signal$ with $\ttl = \recvttl$.
                \State $\ttl \gets \recvttl - 1$. \label{alg:update-energy}
                \If {$\ttl = 0$} \label{alg:no-energy}
                    \State $\signal \gets \bot$. \label{alg:no-energy-signal}
                    \State $\algcolor \gets c_\bot$. \label{alg:no-energy-color}
                    \State $\state \gets \reset$. \label{alg:no-energy-state}
                    \State $\timer \gets 0$. \label{alg:no-energy-timer}
                \Else {}
                    \State $\signal \gets \recvsignal$. \label{alg:update-realsignal}
                    \State $\algcolor \gets c_{s^*}$. \label{alg:update-color}
                \EndIf
            \Else {} $\ttl \gets \maxttl$. \label{alg:locked-bcast-energy}
            \EndIf
		\EndIf
		
        \medskip

		\If {$\state = \reset$ and witnessing a signal $s \neq \bot$} \label{alg:become-set-cond}
            \State $\signal \gets s$. \label{alg:become-set-realsignal}
            \State $\algcolor \gets c_s$. \label{alg:become-set-color}
			\State $\state \gets \set$. \label{alg:become-set-state}
			\State $\source \gets \true$. \label{alg:become-set-signal}
			\State $\ttl \gets \maxttl$. \label{alg:become-set-energy}
			\State $\timer \gets 0$. \label{alg:become-set-timer}
		\EndIf
	\end{algorithmic}
\end{algorithm}

Coordinating these mechanisms to achieve our complete solution (Algorithm~\ref{alg:deterministic-coloring}) requires a few more implementation details.
First, the \timer\ variables introduced to control the flooding phases need to be synchronized across nodes even as \maxttl\ is updated dynamically.
To achieve this, \reset\ nodes always maintain $\timer = 0$ and only start counting up after becoming \set.
Before incrementing and forwarding its \timer\ value (Line~\ref{alg:timer-increment}), a \set\ node will first synchronize its timer to the latest \timer\ value from among its own \timer\ and those it receives from its neighbors (Line~\ref{alg:timer-sync}).
If \maxttl\ is large enough, this will ensure that by the time these nodes become \locked, their timers are and remain synchronized.

A second detail addresses issues arising from the need for each flooding phase to run for its full extent in order to detect insufficient \maxttl, even though the adversary may change or remove the signal that started this flooding partway through.
The fix is straightforward: when a flooding phase starts (i.e., when $\timer = 0$, Line~\ref{alg:timer-zero-cond}) any witness of a signal $s \neq \bot$ sets a boolean variable $\source = \true$ (Line~\ref{alg:locked-signal-true}) while all other nodes set $\source = \false$ (Line~\ref{alg:locked-signal-false}).
Nodes with $\source = \true$ behave like witnesses throughout the entire flooding phase---originating messages with $\ttl = \maxttl$ and $\algcolor = c_s$---even if they stop witnessing the signal, only re-evaluating their $\source$ value at the start of the next phase.
Because adaptive self-organization only requires the network to stabilize on the correct color w.r.t.\ signals (or lack thereof) that are persistent (i.e., unchanging after a certain point in time), this phased behavior only delays correct stabilization by at most $\maxttl \leq 2n$ rounds.

Relatedly, any node that detects insufficient \maxttl\ and doubles it becomes \set\ and sets $\source = \true$ (Line~\ref{alg:increase-energy-locked}), immediately initiating a new flooding phase.
This ensures that if \maxttl\ is still insufficient after the doubling, this will be detected within \maxttl\ rounds.
If doubling nodes instead became \reset, there are certain adversarial dynamics that would cause a linear number of rounds to pass before another doubling event could occur.
This optimization helps achieve the algorithm's linear (worst-case optimal) time complexity.

Finally, each node not only maintains a \algcolor\ variable---its current response to whatever signal $s$ it believes to be in the system---but also a \signal\ variable representing $s$ itself.
For the homogeneous instances we consider in this section, \signal\ and \algcolor\ are redundant; nodes could flood either value and still adopt the corresponding goal color.
Choosing to include \signal\ instead of \algcolor\ in messages (Lines~\ref{alg:msg-set} and~\ref{alg:msg-locked}) allows us to later extend this algorithm to non-homogeneous instances that require nodes to stabilize in distributions over multiple colors, as we do in Section~\ref{sec:randomized-alg}.

\subsection{Algorithm Analysis} \label{subsec:analysis}

\newcommand{\varline}[4]{#1_{#2}^{(\ref{#3})}.#4}

Let $v.\texttt{var}$ be the variable \texttt{var} stored by a node $v \in V$.
For any $t \geq 0$, let $v_t.\texttt{var}$ be the value of $v.\texttt{var}$ at time $t$ (i.e., at the start of round $t$) and let $v_t^{(i)}.\texttt{var}$ be the value of $v.\texttt{var}$ after $v$ executes Line~$i$ of Algorithm~\ref{alg:deterministic-coloring} in round $t$.
With this notation defined, we begin our analysis with a series of invariants about the relationships between variables.

\begin{lemma} \label{lem:maxttl-nondecreasing}
    For all nodes $v \in V$, times $t \geq 0$, future times $t' > t$, and algorithm line numbers $i, j$ with $i < j$, $1 < v_t.\maxttl \leq v_t^{(i)}.\maxttl \leq v_t^{(j)}.\maxttl \leq v_{t'}.\maxttl$.
    Moreover, $v_t.\maxttl = 2^x$ for some $x \in \mathbb{Z}_+$.
\end{lemma}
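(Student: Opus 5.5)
We prove the statement by induction on time, together with a syntactic inspection of Algorithm~\ref{alg:deterministic-coloring}. The key observation is that \maxttl\ is written in exactly two places: Line~\ref{alg:increase-energy-recv}, which sets $\maxttl \gets \maxrecv$ only under the guard $\maxrecv > \maxttl$, and Line~\ref{alg:increase-energy-locked}, which sets $\maxttl \gets 2\cdot\maxttl$. Every other line leaves \maxttl\ untouched.

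The induction hypothesis is that at time $t$, every node $u \in V$ satisfies $u_t.\maxttl = 2^x$ for some integer $x \geq 1$; in particular $u_t.\maxttl > 1$. The base case $t = 0$ holds because every node is initialized with $\maxttl = 2 = 2^1$ (Table~\ref{tab:algvars}).

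For the inductive step, fix a round $t$ and a node $v$, and follow $v.\maxttl$ line by line through round $t$.
\begin{itemize}
    \item Messages are sent at the start of the round as a function of the sender's state at time $t$. Each message type carries the sender's \maxttl, so by the hypothesis every received \maxttl\ value is a power of two with exponent at least $1$. Hence \maxrecv\ (Line~\ref{alg:max-recv}) is such a power of two, whenever at least one message is received.
    \item If Line~\ref{alg:increase-energy-recv} executes, the guard guarantees the new value $\maxrecv$ is strictly larger than the old one, and it is a power of two.
    \item If Line~\ref{alg:increase-energy-locked} executes, the value doubles. Since it was a positive power of two, it stays one, and it strictly increases.
    \item All remaining lines do not modify \maxttl.
\end{itemize}
So within round $t$ the sequence $v_t^{(i)}.\maxttl$ is nondecreasing in $i$, and every value in it is a power of two with exponent at least $1$. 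In particular $1 < v_t.\maxttl \leq v_t^{(i)}.\maxttl \leq v_t^{(j)}.\maxttl$ for $i<j$.

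Since $v_{t+1}.\maxttl$ equals the value after the last line of round $t$, the hypothesis carries over to time $t+1$. Chaining the within-round monotonicity across consecutive rounds then gives $v_t^{(j)}.\maxttl \leq v_{t'}.\maxttl$ for every $t' > t$.

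There is no real obstacle here. The only subtlety is that the claim about \maxrecv\ depends on the hypothesis holding for \emph{all} nodes at time $t$, not just $v$. This is why the induction must be stated simultaneously over all nodes.

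A minor edge case is a node that receives no messages, for which \maxrecv\ is undefined. In that case the guard on Line~\ref{alg:cond-max-recv} is treated as false and \maxttl\ is unchanged. This cannot actually arise for $n \geq 2$ under 1-interval connectivity.

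Finally, the table lists the domain of \maxttl\ as $\{2,\ldots,2n\}$. We do not need this upper bound here, so we leave it to a later lemma.
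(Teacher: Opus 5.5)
Your proof is correct and follows the same route as the paper's: \maxttl\ is written only on Line~\ref{alg:increase-energy-recv}, which strictly raises it to a received value that is already a power of two, and on Line~\ref{alg:increase-energy-locked}, which doubles it. The only difference is that you make explicit the simultaneous induction over all nodes, which guarantees that received values are powers of two; the paper's one-paragraph argument leaves this implicit.
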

\begin{proof}
    At initialization, $v_0.\maxttl = 2 > 1$ for all nodes $v \in V$ (see Table~\ref{tab:algvars}).
    While executing Algorithm~\ref{alg:deterministic-coloring}, the only updates a node can make to \maxttl\ increase its value (Lines~\ref{alg:increase-energy-recv} and \ref{alg:increase-energy-locked}).
    Line~\ref{alg:increase-energy-recv} updates $\maxttl$ to a value $w_{t-1}.\maxttl$ for some node $w$, while Line~\ref{alg:increase-energy-locked} doubles $\maxttl$.
    In any case, $\maxttl$ will still be a power of 2.
\end{proof}

\begin{lemma} \label{lem:timer-lt-maxttl}
    For all nodes $v \in V$ and times $t \geq 0$, $0 \leq v_t.\timer < v_t.\maxttl$.
\end{lemma}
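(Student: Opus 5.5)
We argue by induction on $t$, tracking every assignment to \timer\ and \maxttl\ within a round, so that the invariant $0 \leq \timer < \maxttl$ holds after every line of Algorithm~\ref{alg:deterministic-coloring}, not only at round boundaries. The base case is immediate from Table~\ref{tab:algvars}: $v_0.\timer = 1 < 2 = v_0.\maxttl$. For the inductive step, assume $0 \leq w_t.\timer < w_t.\maxttl$ for every node $w$ at time $t$. Fix $v$ and walk through round $t$ line by line. Lemma~\ref{lem:maxttl-nondecreasing} guarantees \maxttl\ never decreases, so any line that changes only \maxttl\ preserves the upper bound. Nonnegativity is clear throughout, since every assignment is $0$, a maximum of nonnegative values, or a residue mod \maxttl.

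The individual assignments are handled as follows.
\begin{itemize}
\item Lines~\ref{alg:increase-energy-recv}, \ref{alg:increase-energy-locked}, \ref{alg:no-energy-timer} and~\ref{alg:become-set-timer} set $\timer = 0$, and $0 < \maxttl$ because $\maxttl > 1$ by Lemma~\ref{lem:maxttl-nondecreasing}.
\item Line~\ref{alg:timer-increment} reduces modulo \maxttl, so its result lies in $\{0,\dots,\maxttl-1\}$.
\item Line~\ref{alg:timer-sync} is the only delicate step and is treated next.
\end{itemize}

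At Line~\ref{alg:timer-sync}, $v$ takes $\max\{\timer,\maxtimer\}$, where \maxtimer\ is the largest \timer\ among messages in $S$. Every message in $S$ was sent by some $w$ with $w_t.\maxttl$ equal to $v$'s current \maxttl, that is, $v_t^{(\ref{alg:increase-energy-locked})}.\maxttl$; this equality is exactly the filter applied on Line~\ref{alg:def-set-msgs}. The sender's \timer\ in the message is $w_t.\timer$, and the inductive hypothesis gives $w_t.\timer < w_t.\maxttl = v.\maxttl$. So \maxtimer\ is below $v$'s current \maxttl.

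It remains to check that $v$'s own \timer\ is still below its current \maxttl\ at this point. This holds because Lines~\ref{alg:increase-energy-recv} and~\ref{alg:increase-energy-locked} only raise \maxttl\ or zero \timer. The maximum of the two values is therefore below \maxttl.

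The main obstacle is this message-filtering step. The argument relies on the \maxttl\ equality filter being applied against $v$'s value \emph{after} the possible updates on Lines~\ref{alg:increase-energy-recv}--\ref{alg:increase-energy-locked}. It also relies on the induction hypothesis holding simultaneously for all senders at time $t$, which is why the induction is over all nodes at once. Chaining the line-by-line invariant through the end of the round yields $0 \leq v_{t+1}.\timer < v_{t+1}.\maxttl$.
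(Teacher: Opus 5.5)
Your proof is correct and follows essentially the same route as the paper. Both argue by induction on $t$, dispose of the zeroing assignments and the modular increment on Line~\ref{alg:timer-increment} directly, and handle the only nontrivial step, Line~\ref{alg:timer-sync}, the same way: every message in $S$ comes from a sender whose \maxttl\ equals $v$'s current (post-Line~\ref{alg:increase-energy-locked}) \maxttl, so the inductive hypothesis bounds \maxtimer\ below that value.
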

\begin{proof}
    Argue by induction on time $t \geq 0$.
    At initialization, $v_0.\timer = 1 < 2 = v_0.\maxttl$ for all nodes $v \in V$, as desired (see Table~\ref{tab:algvars}).
    Now suppose the lemma holds for any $t \geq 0$ and consider the execution of Algorithm~\ref{alg:deterministic-coloring} by any node $v$ in round $t$.
    We can safely ignore situations in which $v.\timer$ is unchanged or reset to zero, as the induction hypothesis and the fact that $v.\maxttl$ is nondecreasing (Lemma~\ref{lem:maxttl-nondecreasing}) would imply $0 \leq v_{t+1}.\timer \leq v_t.\timer < v_t.\maxttl \leq v_{t+1}.\maxttl$. 

    The first place $v.\timer$ may increase is Line~\ref{alg:timer-sync}, if a neighbor $u$ with $u_t.\maxttl = \varline{v}{t}{alg:timer-sync}{\maxttl}$ sent $v$ a larger \timer\ value.
    W.l.o.g., suppose $u$ sends the largest \timer\ value among such neighbors. 
    Then $\varline{v}{t}{alg:timer-sync}{\timer} = u_t.\timer$, which by the induction hypothesis is strictly less than $u_t.\maxttl = \varline{v}{t}{alg:timer-sync}{\maxttl}$.
    The only other place $v.\timer$ may increase is the increment in Line~\ref{alg:timer-increment}, but then
    \begin{equation}
        0 \leq \varline{v}{t}{alg:timer-increment}{\timer} = \left(\varline{v}{t}{alg:timer-sync}{\timer} + 1\right) \bmod \varline{v}{t}{alg:timer-increment}{\maxttl} < \varline{v}{t}{alg:timer-increment}{\maxttl}.
    \end{equation}
    Thus, in any case we have $0 \leq v_{t+1}.\timer < v_{t+1}.\maxttl$, completing the induction.
\end{proof}

The next five invariants (Lemmas~\ref{lem:zero-energy}--\ref{lem:ttl-updating}) build on each other to tightly characterize all nodes' \ttl\ values.
In particular, we are concerned with nodes that have the largest \maxttl\ value in the network.
For any time $t \geq 0$, let $m^*(t) = \max_{v \in V} v_t.\maxttl$ be this maximum \maxttl\ value and $e^*(t) = \max_{v \in V}\{v_t.\ttl \mid v_t.\maxttl = m^*(t)\}$ be the maximum \ttl\ value among those nodes with maximum \maxttl.
These invariants' proofs are somewhat tedious---carefully stepping through all possible executions of Algorithm~\ref{alg:deterministic-coloring} in each round depending on different node properties---so we include the complete proof of Lemma~\ref{lem:zero-energy} as a representative example and relegate the rest to Appendix~\ref{app:proofs}.

\begin{lemma} \label{lem:zero-energy}
    For all nodes $v \in V$ and times $t \geq 0$, $v_t.\state = \reset$ if and only if $v_t.\ttl = 0$.
\end{lemma}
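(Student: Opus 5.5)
The plan is to argue by induction on $t \geq 0$, following a single node $v$ through the lines of Algorithm~\ref{alg:deterministic-coloring} in round $t$. The point is that the invariant may fail briefly in the middle of a round (a \reset\ node can become \set\ at Line~\ref{alg:join-set-nodes} while still holding $\ttl = 0$), but it is always restored by the end of the round.

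The base case is immediate from Table~\ref{tab:algvars}: $v_0.\state = \set$ and $v_0.\ttl = 2 \neq 0$. For the inductive step, I assume that every node $u$ satisfies $u_t.\state = \reset \iff u_t.\ttl = 0$. In particular, every \msg{set}{} or \msg{locked}{} message sent in round $t$ carries $\ttl \geq 1$.

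I then walk through the transition blocks in order.
\begin{itemize}
\item \emph{Line~\ref{alg:increase-energy-recv}.} If it fires, it sets $\state = \reset$ and $\ttl = 0$, so the invariant holds afterwards.
\item \emph{Line~\ref{alg:increase-energy-locked}.} If it fires, it sets $\state = \set$ and $\ttl = \maxttl \geq 2$, using Lemma~\ref{lem:maxttl-nondecreasing}, so the invariant again holds.
\item \emph{Lines~\ref{alg:cond-set-msgs}--\ref{alg:timer-sync}.} These change the state but not $\ttl$. Hence the invariant can break only when $v$ was \reset\ with $\ttl = 0$ and $S \neq \emptyset$.
\end{itemize}
So I will carry the following weaker property into Line~\ref{alg:timer-increment-cond}: either ($\state = \reset$ and $\ttl = 0$), or ($\state \neq \reset$ and $\ttl \geq 1$), or ($\state = \set$, $\ttl = 0$, and $S \neq \emptyset$).

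Next comes the block guarded by Line~\ref{alg:timer-increment-cond}. If $\state = \reset$ there, the block is skipped and $\ttl = 0$ is retained. Otherwise, becoming \locked\ at Line~\ref{alg:become-locked} does not touch $\ttl$, so everything rests on the \source\ branch.
\begin{itemize}
\item If $\source = \true$, Line~\ref{alg:locked-bcast-energy} sets $\ttl = \maxttl \geq 2$ while the state is non-\reset.
\item If $\source = \false$, then $\ttl \gets \recvttl - 1$. The key claim is $\recvttl \geq 1$. In the second case of the weaker property, $v$'s own $\ttl$ is at least $1$. In the third case, $S \neq \emptyset$ and every message in $S$ has $\ttl \geq 1$ by the induction hypothesis. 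Hence $\ttl \geq 0$ after the update. If $\ttl = 0$, Lines~\ref{alg:no-energy}--\ref{alg:no-energy-timer} set $\state = \reset$; otherwise the state remains \set\ or \locked\ with $\ttl \geq 1$.
\end{itemize}
Either way, the equivalence holds exactly when this block ends.

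Finally, Lines~\ref{alg:become-set-cond}--\ref{alg:become-set-timer} move a \reset\ node to \set\ with $\ttl = \maxttl \geq 2$, which preserves the equivalence. This completes the induction.

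I expect the main obstacle to be the $\source = \false$ branch. There one must rule out $\recvttl = 0$ for a node that just switched from \reset\ to \set\ (which would otherwise give $\ttl = -1$ while the node is \set). This requires the induction hypothesis applied to neighbors' messages, and careful attention to whether $v$ passed through Line~\ref{alg:increase-energy-recv} or Line~\ref{alg:increase-energy-locked} earlier in the same round.
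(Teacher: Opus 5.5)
Your induction is correct in substance and traces a round essentially as the paper does. The paper splits on whether $v$ is \reset\ after Line~\ref{alg:timer-sync}, and then on the value of \source\ entering Line~\ref{alg:update-energy-cond}; this is your three-case intermediate property followed by your \source\ branch.

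The one real difference is your ``main obstacle,'' which the paper never faces. For the biconditional alone you do not need $\recvttl \geq 1$: a \set\ node ending the round with $\ttl = -1$ would not violate ``$\state = \reset$ iff $\ttl = 0$.'' The paper just notes that after Line~\ref{alg:update-energy}, either $\ttl = 0$ and $v$ becomes \reset, or $\ttl \neq 0$ and $v$ stays non-\reset\ with no further \ttl\ update. It proves nonnegativity afterwards in Lemma~\ref{lem:ttl-bound}, whose proof contains precisely your argument via a \set\ neighbor contributing to $S$.

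Conversely, your version silently relies on nonnegativity, which is not in your stated induction hypothesis. From ``$\state \neq \reset \Rightarrow \ttl \neq 0$'' alone you cannot conclude that \msg{set}{} and \msg{locked}{} messages carry $\ttl \geq 1$, or that $v$'s own \ttl\ is at least $1$ in your second case. You also need $\ttl \geq 0$. There are two easy fixes:
\begin{itemize}
\item Add $u_t.\ttl \geq 0$ for all $u$ to the induction hypothesis. Your case analysis already shows every assignment to \ttl\ preserves it, so you obtain the lower half of Lemma~\ref{lem:ttl-bound} in the same induction.
\item Replace ``$\geq 1$'' by ``$\neq 0$'' throughout and drop the $\recvttl \geq 1$ step altogether.
\end{itemize}
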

\begin{proof}
    Argue by induction on time $t \geq 0$.
    At initialization, $v_0.\state = \set \neq \reset$ and $v_0.\ttl = 2 \neq 0$ for all nodes $v \in V$, as desired (see Table~\ref{tab:algvars}).
    Now suppose the lemma holds for any $t \geq 0$ and consider the execution of Algorithm~\ref{alg:deterministic-coloring} by any node $v$ in round $t$.

    First suppose $\varline{v}{t}{alg:timer-sync}{\state} = \reset$.
    Tracing the rest of Algorithm~\ref{alg:deterministic-coloring} from this value, only Lines~\ref{alg:become-set-state} and~\ref{alg:become-set-energy} may further update $v.\state$ and $v.\ttl$, respectively.
    These updates either occur together or not at all.
    If they both occur, $v_{t+1}.\state = \set$ and $v_{t+1}.\ttl = v_{t+1}.\maxttl > 0$ (by Lemma~\ref{lem:maxttl-nondecreasing}).
    Otherwise, it suffices to show that $\varline{v}{t}{alg:timer-sync}{\ttl} = 0$, yielding $v_{t+1}.\state = \varline{v}{t}{alg:timer-sync}{\state} = \reset$ and $v_{t+1}.\ttl = \varline{v}{t}{alg:timer-sync}{\ttl} = 0$.
    Since $\varline{v}{t}{alg:timer-sync}{\state} = \reset$, $v$ did not execute Lines~\ref{alg:increase-energy-locked} or~\ref{alg:join-set-nodes}.
    It may have executed Line~\ref{alg:increase-energy-recv}, yielding $\varline{v}{t}{alg:increase-energy-recv}{\ttl} = \varline{v}{t}{alg:timer-sync}{\ttl} = 0$, as desired.
    The only remaining possibility is that $v$ has not updated any of its variables in round $t$ by Line~\ref{alg:timer-sync}.
    So $v_t.\state = \varline{v}{t}{alg:timer-sync}{\state} = \reset$ which, by the induction hypothesis, implies $v_t.\ttl = \varline{v}{t}{alg:timer-sync}{\ttl} = 0$.

    Next, suppose $\varline{v}{t}{alg:timer-sync}{\state} \neq \reset$ and $\varline{v}{t}{alg:locked-signal-false}{\source} = \true$, satisfying the condition on Line~\ref{alg:timer-increment-cond} but not the one on Line~\ref{alg:update-energy-cond}, respectively.
    Then $\varline{v}{t}{alg:locked-bcast-energy}{\state} = \varline{v}{t}{alg:become-locked}{\state} \neq \reset$ and $\varline{v}{t}{alg:locked-bcast-energy}{\ttl} = \varline{v}{t}{alg:locked-bcast-energy}{\maxttl} > 0$ (by Lemma~\ref{lem:maxttl-nondecreasing}).
    These values do not satisfy the condition on Line~\ref{alg:become-set-cond}, so we must have $v_{t+1}.\state \neq \reset$ and $v_{t+1}.\ttl \neq 0$.

    Finally, suppose $\varline{v}{t}{alg:timer-sync}{\state} \neq \reset$ and $\varline{v}{t}{alg:locked-signal-false}{\source} = \false$, satisfying the conditions on Lines~\ref{alg:timer-increment-cond} and~\ref{alg:update-energy-cond}, respectively.
    If $\varline{v}{t}{alg:update-energy}{\ttl} = 0$, then $\varline{v}{t}{alg:no-energy-state}{\state} = \reset$.
    Tracing the rest of Algorithm~\ref{alg:deterministic-coloring} from these values, only Lines~\ref{alg:become-set-state} and~\ref{alg:become-set-energy} may further update $v.\state$ and $v.\ttl$; as before, these either occur together or not at all and in either case yield $v_{t+1}.\state = \reset$ if and only if $v_{t+1}.\ttl = 0$.
    Otherwise, if $\varline{v}{t}{alg:update-energy}{\ttl} \neq 0$, then $\varline{v}{t}{alg:no-energy-state}{\state} \neq \reset$.
    Tracing the rest of Algorithm~\ref{alg:deterministic-coloring} from these values, $v.\state$ can never become \reset\ and $v.\ttl$ will never decrease, so we must have $v_{t+1}.\state \neq \reset$ and $v_{t+1}.\ttl \neq 0$.
\end{proof}

\begin{restatable}{lemma}{lemttlbound} \label{lem:ttl-bound}
    For all nodes $v \in V$ and times $t \geq 0$, $0 \leq v_t.\ttl \leq v_t.\maxttl$.
\end{restatable}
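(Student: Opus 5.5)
We argue by induction on $t \geq 0$, in the same style as the proof of Lemma~\ref{lem:zero-energy}, tracking every line of Algorithm~\ref{alg:deterministic-coloring} that can modify $v.\ttl$ or $v.\maxttl$. For the base case, Table~\ref{tab:algvars} gives $v_0.\ttl = 2 = v_0.\maxttl$. For the inductive step we assume $0 \leq u_t.\ttl \leq u_t.\maxttl$ for \emph{every} node $u$, not only $v$, because $v$ may copy \ttl\ values out of its neighbors' messages. Lemma~\ref{lem:maxttl-nondecreasing} then lets us treat any step that leaves \ttl\ unchanged or decreases it (while keeping it nonnegative) as harmless, since \maxttl\ never decreases.

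The lines that can assign $v.\ttl$ are Line~\ref{alg:increase-energy-recv} (sets $\ttl = 0$), Line~\ref{alg:increase-energy-locked} (sets $\ttl$ to the doubled \maxttl), Line~\ref{alg:update-energy}, Line~\ref{alg:locked-bcast-energy}, and Line~\ref{alg:become-set-energy}. The second, fourth, and fifth set $\ttl$ equal to the current \maxttl, so the bound holds with equality. Because \maxttl\ can only grow afterward, the bound persists until $v_{t+1}$. The first case is trivial.

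The main step is Line~\ref{alg:update-energy}, which sets $\ttl \gets \recvttl - 1$, where \recvttl\ is the maximum of $v$'s own \ttl\ and the \ttl\ values in messages of $L$ and $S$.

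\textbf{Upper bound.} Messages in $L$ and $S$ come from neighbors $u$ with $u_t.\maxttl$ equal to $v$'s \maxttl\ at Lines~\ref{alg:def-lock-msgs} and~\ref{alg:def-set-msgs}. By the hypothesis, each such message carries $\ttl \leq u_t.\maxttl \leq v.\maxttl$. For $v$'s own \ttl, the hypothesis together with the monotonicity of \maxttl\ gives the same inequality. There is one subtlety: if Line~\ref{alg:increase-energy-locked} doubled \maxttl\ after $L$ was computed, the line also sets $L \gets \emptyset$. Any $S$ messages would then have the old \maxttl\ value, which is still at most the current one. So in every case $\recvttl \leq v.\maxttl$, and hence $\recvttl - 1 < v.\maxttl$.

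\textbf{Lower bound.} This is the step I expect to be the main obstacle: showing $\recvttl \geq 1$, so that the new \ttl\ is nonnegative. A node reaching Line~\ref{alg:update-energy} satisfies $\state \neq \reset$ at Line~\ref{alg:timer-increment-cond}, so I would try to show that its own current \ttl\ is positive there. By Lemma~\ref{lem:zero-energy}, if $v_t.\state \neq \reset$ then $v_t.\ttl > 0$. Otherwise $v$ left \reset\ within this round via Line~\ref{alg:increase-energy-locked} (which sets $\ttl = \maxttl > 0$) or Line~\ref{alg:join-set-nodes}. The latter requires a case check. A node joining via $S$ while still holding $\ttl = 0$ (for example, right after Line~\ref{alg:increase-energy-recv}) relies on the received \set\ messages. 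Each sender $u$ is not \reset, so Lemma~\ref{lem:zero-energy} gives $u_t.\ttl \geq 1$. Since $S \neq \emptyset$ in this case, $\recvttl \geq 1$. In all cases $\recvttl \geq 1$, so $\ttl \geq 0$ after Line~\ref{alg:update-energy}.

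No later line decreases \ttl\ below zero, so $0 \leq v_{t+1}.\ttl \leq v_{t+1}.\maxttl$, which completes the induction.
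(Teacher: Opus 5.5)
Your proof is correct and follows essentially the same route as the paper: induction over all nodes, per-line tracking of the \ttl\ assignments, an upper bound on \recvttl\ via the senders' \maxttl\ values, and the lower bound $\recvttl \geq 1$. That lower bound comes from observing that a node whose \ttl\ is $0$ going into Line~\ref{alg:choose-max-ttl} must have re-entered \set\ via Line~\ref{alg:join-set-nodes}, so it received a \set\ message whose \ttl\ is positive by Lemma~\ref{lem:zero-energy} and the induction hypothesis. One small slip: $S$ is defined after Line~\ref{alg:increase-energy-locked}, so after a doubling it is filtered by the new \maxttl\ value (and is in fact empty), not the old one; this does not affect your upper bound.
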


\begin{restatable}{lemma}{lemsourcettl} \label{lem:source-ttl}
    For all nodes $v \in V$ and times $t \geq 0$, $v_t.\source = \true$ if and only if $v_t.\ttl = v_t.\maxttl$.
\end{restatable}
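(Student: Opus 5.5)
The plan is to argue by induction on time $t \geq 0$, in the style of the proof of Lemma~\ref{lem:zero-energy}. The idea is to show that the biconditional ``$v.\source = \true$ iff $v.\ttl = v.\maxttl$'' holds after each block of Algorithm~\ref{alg:deterministic-coloring} that touches any of these three variables. The base case is immediate from Table~\ref{tab:algvars}: $v_0.\source = \true$ and $v_0.\ttl = 2 = v_0.\maxttl$. For the inductive step, fix a node $v$ and round $t$, and assume the invariant at time $t$. The blocks that can change $\source$, $\ttl$, or $\maxttl$ are, in order, Line~\ref{alg:increase-energy-recv}, Line~\ref{alg:increase-energy-locked}, the block starting at Line~\ref{alg:timer-increment-cond}, and the block starting at Line~\ref{alg:become-set-cond}. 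Lines~\ref{alg:join-set-nodes}--\ref{alg:timer-sync} touch none of them.

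\textbf{The first three blocks.}
\begin{itemize}
\item If Line~\ref{alg:increase-energy-recv} executes, it sets $\source = \false$, $\ttl = 0$, and $\maxttl = \maxrecv > 1$ by Lemma~\ref{lem:maxttl-nondecreasing}. Hence $\ttl \neq \maxttl$ and the invariant holds. If it does not execute, the invariant carries over from time $t$.
\item If Line~\ref{alg:increase-energy-locked} executes, it doubles $\maxttl$ and then assigns $\ttl \gets \maxttl$ (the new value) with $\source = \true$, so the invariant holds. It also sets $L \gets \emptyset$, which matters below.
\end{itemize}
So the invariant holds upon reaching Line~\ref{alg:timer-increment-cond}.

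\textbf{The block at Line~\ref{alg:timer-increment-cond}.} If $\state = \reset$ there, the block is skipped and the invariant is preserved. Otherwise Lines~\ref{alg:locked-signal-true}/\ref{alg:locked-signal-false} may reset $\source$ without changing $\ttl$, which temporarily breaks the invariant. It is restored at Lines~\ref{alg:update-energy-cond}--\ref{alg:locked-bcast-energy}, split on the value of $\source$:
\begin{itemize}
\item If $\source = \true$, Line~\ref{alg:locked-bcast-energy} sets $\ttl \gets \maxttl$.
\item If $\source = \false$, Line~\ref{alg:update-energy} sets $\ttl \gets \recvttl - 1$. It then suffices to show $\recvttl \leq \maxttl$, which gives $\ttl < \maxttl$.
\end{itemize}

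\textbf{Main obstacle: bounding $\recvttl$.} This is the one step that is not pure bookkeeping. The value $\recvttl$ is a maximum over $v$'s own $\ttl$ and the $\ttl$ values in messages of $L$ and $S$.
\begin{itemize}
\item Messages in $S$ have $\maxttl$ equal to $v$'s current $\maxttl$, by the definition on Line~\ref{alg:def-set-msgs}. So by Lemma~\ref{lem:ttl-bound} applied to the senders at time $t$, their $\ttl$ is at most $v$'s $\maxttl$.
\item Messages in $L$ match $v$'s $\maxttl$ as it stood at Line~\ref{alg:def-lock-msgs}. The only later change before Line~\ref{alg:choose-max-ttl} is the doubling on Line~\ref{alg:increase-energy-locked}, which empties $L$. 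So the same bound applies to $L$.
\item For $v$'s own $\ttl$ at this point, $\ttl \leq \maxttl$ holds by Lemma~\ref{lem:ttl-bound} at time $t$. Lines~\ref{alg:increase-energy-recv} and~\ref{alg:increase-energy-locked} preserve this bound: one sets $\ttl$ to $0$, the other sets it to $\maxttl$, and neither decreases $\maxttl$.
\end{itemize}
The subsequent branch at Line~\ref{alg:no-energy} changes neither $\ttl$ nor $\source$.

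\textbf{The final block.} Lines~\ref{alg:become-set-realsignal}--\ref{alg:become-set-timer} either do nothing to these variables or set $\source = \true$ together with $\ttl = \maxttl$. Since $\maxttl$ is not modified afterwards in round $t$, the invariant holds at time $t+1$.
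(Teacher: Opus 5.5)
Your proof is correct and follows essentially the same route as the paper: induction on $t$ with a line-by-line trace of the round, where the only nontrivial step is that Line~\ref{alg:update-energy} yields $\ttl = \recvttl - 1 < \maxttl$. The paper traces the round backwards and defers that bound to the argument in Lemma~\ref{lem:ttl-bound}, whereas you trace forwards and prove $\recvttl \leq \maxttl$ explicitly by applying Lemma~\ref{lem:ttl-bound} to the senders (including the observation that the doubling on Line~\ref{alg:increase-energy-locked} empties $L$), which is the same reasoning.
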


\begin{restatable}{lemma}{lemttlvalue} \label{lem:ttl-value}
    Let $M_t[v] = \{u \in N_t(v) \cup \{v\} : u_t.\maxttl = v_{t+1}.\maxttl\}$ be the closed neighborhood of a node $v \in V$ at time $t \geq 0$, restricted to nodes whose \maxttl\ values at time $t$ are the same as $v$'s at time $t+1$.
    Then for all nodes $v \in V$ and times $t \geq 0$,
    \begin{equation} \label{eq:ttl-value}
        v_t.\ttl = \begin{cases}
            2 & \text{if $t = 0$}; \\
            v_t.\maxttl & \text{if $v_t.\source = \true$}; \\
            \max\left\{0, \underset{u \in M_{t-1}[v]}{\max} u_{t-1}.\ttl - 1\right\} & \text{otherwise}.
        \end{cases}
    \end{equation}
\end{restatable}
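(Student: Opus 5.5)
The case $t = 0$ is immediate from initialization (Table~\ref{tab:algvars}). The case $v_t.\source = \true$ is exactly Lemma~\ref{lem:source-ttl}. So the real work is the third case. Fix $t \geq 1$ with $v_t.\source = \false$ and trace $v$'s execution of Algorithm~\ref{alg:deterministic-coloring} in round $t-1$. The messages $v$ receives in that round carry the values $u_{t-1}.\ttl$ and $u_{t-1}.\maxttl$ of its neighbors $u \in N_{t-1}(v)$.

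First I would rule out the branches that end with $\source = \true$, since these are incompatible with $v_t.\source = \false$:
\begin{itemize}
\item Line~\ref{alg:increase-energy-locked} sets $\source \gets \true$, and only the phase restart on Lines~\ref{alg:timer-zero-cond}--\ref{alg:locked-signal-false} could later make it \false. But then $\ttl$ is still recomputed on Line~\ref{alg:update-energy} from the messages, so that path folds into case (b) below.
\item Lines~\ref{alg:locked-bcast-energy} and~\ref{alg:become-set-cond}--\ref{alg:become-set-timer} leave $\source = \true$, so they are impossible.
\end{itemize}
In particular, $v.\maxttl$ is not changed after Line~\ref{alg:def-set-msgs} in any surviving branch. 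Hence the \maxttl\ against which $L$ and $S$ are filtered equals $v_t.\maxttl$. It follows that $L \cup S$ is exactly the set of messages from the non-\reset\ nodes of $M_{t-1}[v] \setminus \{v\}$. I would then split on $\varline{v}{t-1}{alg:timer-sync}{\state}$.

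\textbf{Case (a): $\varline{v}{t-1}{alg:timer-sync}{\state} = \reset$.} Then $v_t.\ttl = 0$, because neither Line~\ref{alg:update-energy} nor Line~\ref{alg:become-set-cond} can have fired. The latter would make $\source$ \true. I must show every $u \in M_{t-1}[v]$ has $u_{t-1}.\ttl \leq 1$, so that the formula also gives $0$.
\begin{itemize}
\item Take any neighbor $u \in M_{t-1}[v]$. If $u$ were \set\ or \locked, its message would land in $S$ or $L$. Then $v$ would have become \set\ on Line~\ref{alg:join-set-nodes} or~\ref{alg:increase-energy-locked}, a contradiction. So $u$ is \reset, and $u_{t-1}.\ttl = 0$ by Lemma~\ref{lem:zero-energy}.
\item For $v$ itself, suppose Line~\ref{alg:increase-energy-recv} fired. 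Then $v_{t-1}.\maxttl < v_t.\maxttl$, so $v \notin M_{t-1}[v]$. Otherwise $v_{t-1}.\state = \reset$, and again $v_{t-1}.\ttl = 0$ by Lemma~\ref{lem:zero-energy}.
\end{itemize}

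\textbf{Case (b): $\state \neq \reset$ at Line~\ref{alg:timer-sync} and $\source = \false$ at Line~\ref{alg:update-energy-cond}.} Here $v_t.\ttl = \recvttl - 1$, and the reset on Lines~\ref{alg:no-energy}--\ref{alg:no-energy-timer} does not alter this value. I need to show $\recvttl = \max_{u \in M_{t-1}[v]} u_{t-1}.\ttl$, treating the two kinds of contribution separately.
\begin{itemize}
\item \emph{Neighbors.} \reset\ neighbors in $M_{t-1}[v]$ are absent from $L \cup S$, but they have \ttl\ $0$ by Lemma~\ref{lem:zero-energy}. They cannot be the strict maximum unless everything is $0$, and that situation is absorbed by the outer $\max\{0,\cdot\}$.
\item \emph{$v$'s own \ttl.} If $v.\maxttl$ was unchanged in round $t-1$, then $v \in M_{t-1}[v]$ and its own \ttl\ enters \recvttl\ with the correct value $v_{t-1}.\ttl$. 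If Line~\ref{alg:increase-energy-recv} fired, $v$'s own \ttl\ was reset to $0$ and $v \notin M_{t-1}[v]$, so it again contributes harmlessly. The doubling on Line~\ref{alg:increase-energy-locked} was excluded above, or else leaves \maxttl\ consistent with the filtering.
\end{itemize}
Finally, Line~\ref{alg:become-set-cond} cannot fire afterwards, since that would set $\source = \true$.

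I expect the main obstacle to be this bookkeeping in case (b). The delicate points are the interaction between the \maxttl\ update on Line~\ref{alg:increase-energy-recv} and the filtering defining $L$ and $S$, and the fact that \reset\ nodes are silent about \ttl. The key observation is that Lemma~\ref{lem:zero-energy} forces these silent nodes to have $\ttl = 0$, which the $\max\{0,\cdot\}$ absorbs.
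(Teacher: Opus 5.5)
Your overall structure matches the paper's: you split on the state after Line~\ref{alg:timer-sync}, use Lemma~\ref{lem:zero-energy} on the neighbors in the \reset\ branch, and in the other branch rewrite $\recvttl$ as a maximum over $M_{t-1}[v]$. However, there is a genuine gap in how you handle the doubling on Line~\ref{alg:increase-energy-locked}. You say that path ``folds into case (b)'' and later that it ``leaves \maxttl\ consistent with the filtering''. In fact the identity would be \emph{false} on that path.

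Suppose $v$ doubled in round $t-1$ and $v.\source$ were then set to \false. Its own \ttl\ at Line~\ref{alg:choose-max-ttl} equals the doubled \maxttl, i.e.\ $v_t.\maxttl$. So $\recvttl \geq v_t.\maxttl$, and hence $v_t.\ttl = v_t.\maxttl - 1$. But $v \notin M_{t-1}[v]$, because its \maxttl\ changed. Every neighbor $u$ has $u_{t-1}.\maxttl \leq \varline{v}{t-1}{alg:increase-energy-recv}{\maxttl} = v_t.\maxttl/2$ by Line~\ref{alg:increase-energy-recv}. So $M_{t-1}[v] = \emptyset$ and the right-hand side is $0$. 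Your own-\ttl\ bookkeeping covers only two branches: ``\maxttl\ unchanged'' and ``Line~\ref{alg:increase-energy-recv} only''. The doubling branch must be shown unreachable, not absorbed.

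The missing argument concerns the timer. After Line~\ref{alg:increase-energy-locked}, $v$ has $\timer = 0$ and $\maxttl \geq 4$. Also $S = \emptyset$, since by the observation above no neighbor carries the doubled value. So Line~\ref{alg:timer-sync} leaves $\timer = 0$, and Line~\ref{alg:timer-increment} yields $\timer = (0+1) \bmod \maxttl = 1 \neq 0$. Hence Line~\ref{alg:timer-zero-cond} does not fire. Nothing later in the round resets \source\ to \false: Line~\ref{alg:locked-bcast-energy} runs, and Line~\ref{alg:become-set-cond} is inapplicable because $v$ is \set. So $v_t.\source = \true$, contradicting the case hypothesis. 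With this in place, the only \maxttl-changing branch that survives is Line~\ref{alg:increase-energy-recv}, and your case (b) goes through.

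A smaller imprecision: the ``everything is $0$'' situation is not absorbed by the outer $\max\{0,\cdot\}$. If $\recvttl = 0$ you would get $v_t.\ttl = -1$, not $0$. Instead invoke Lemma~\ref{lem:ttl-bound}, which gives $\recvttl = v_t.\ttl + 1 \geq 1$. Then zero-\ttl\ contributions never affect the maximum, whether from silent \reset\ neighbors or from $v$'s own \ttl\ after Line~\ref{alg:increase-energy-recv}.
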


\begin{restatable}{lemma}{lemttlupdating} \label{lem:ttl-updating}
    For all times $t \geq 0$, if $v_t.\maxttl = m^*(t) = m^*(t+1)$ and $v_t.\source = v_{t+1}.\source = \false$ for all nodes $v \in V$, then $e^*(t+1) = \max\{0, e^*(t) - 1\}$.
\end{restatable}

With our invariants in place, we now argue that Algorithm~\ref{alg:deterministic-coloring} stabilizes with the appropriate node \algcolor\ for a given (persistent) signal.
We first consider the absent signal $s = \bot$ and then turn to signals $s \neq \bot$, deferring the full proofs to Appendix~\ref{app:proofs}.
In this and the remaining results, we slightly abuse the notation $[a, b]$ to mean the set of integers $\{a, a + 1, \ldots, b\}$ and $[a, b)$ to mean $\{a, a + 1, \ldots, b - 1\}$.

\begin{restatable}{lemma}{lempersistnosignal} \label{lem:persist-no-signal}
    Suppose the signal $s_{t_0} = \bot$ is persistent (i.e., there is no signal for all times $t \geq t_0$) and let $m = m^*(t_0)$.
    If $m^*(t) = m$ for all $t \in [t_0, t_0 + 3m]$, then $v_{t'}.\algcolor = c_\bot$ for all nodes $v \in V$ and times $t' \geq t_0 + 3m$.
\end{restatable}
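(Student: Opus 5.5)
The plan is to show that, under the hypotheses of Lemma~\ref{lem:persist-no-signal}, every source disappears within $2m$ rounds of $t_0$, and that all \ttl\ values then drain to zero within another $m$ rounds. By Lemma~\ref{lem:zero-energy}, a node with $\ttl = 0$ is \reset, and the lines that set $\ttl \gets 0$ also set $\algcolor \gets c_\bot$. So it suffices to show that from time $t_0+3m$ on, every node has $\ttl = 0$ and nothing ever moves it out of \reset\ again.

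\textbf{Step 1: how sources arise.} After $t_0$ nobody witnesses a signal. Hence Lines~\ref{alg:locked-signal-true} and~\ref{alg:become-set-signal} never fire, and $\source$ can only become \true\ through a doubling on Line~\ref{alg:increase-energy-locked}. Since $m^*$ stays equal to $m$ on $[t_0, t_0+3m]$, no doubling produces a \maxttl\ above $m$ in that window. An existing source of \maxttl\ $x \le m$ keeps $\source=\true$ only until its \timer\ wraps to $0$. By Lemma~\ref{lem:timer-lt-maxttl}, this takes at most $x$ rounds, after which Line~\ref{alg:locked-signal-false} sets $\source=\false$.

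\textbf{Step 2: bound the doubling cascade.} A doubling at level $x$ requires a non-\locked\ node to receive a \msg{locked}{} message with \maxttl\ equal to $x$. Such messages come from nodes with positive \ttl, since by Lemma~\ref{lem:zero-energy} a \locked\ node is not \reset. By Lemmas~\ref{lem:source-ttl} and~\ref{lem:ttl-value}, once level-$x$ sources vanish, all \ttl\ values at that level decrease by one per round, in the manner of Lemma~\ref{lem:ttl-updating} applied within level $x$. So level $x$ can trigger doublings only during a window of length $O(x)$.

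Summing these windows over levels $2, 4, \ldots, m/2$ should show that no new source with \maxttl\ $m$ is created after roughly $t_0+m$. It should also show that all level-$m$ sources have expired by $t_0+2m$. This geometric-sum accounting is the step I expect to be the main obstacle. Nodes can jump to higher levels via Line~\ref{alg:increase-energy-recv} (setting $\ttl=0$), which complicates the per-level bookkeeping. I would first try an induction on $\log_2 x$ with the claim: ``after time $t_0 + 2x$ no node with $\maxttl \le x$ has positive \ttl\ or is a source.'' The hoped-for key fact is that a source created at level $2x$ arises only while level $x$ is still alive.

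\textbf{Step 3: drain and stay reset.} Once no source exists at any level from time $t_0+2m$ on, Lemma~\ref{lem:ttl-updating} gives $e^*(t+1)=\max\{0,e^*(t)-1\}$. Together with Lemma~\ref{lem:ttl-bound} ($e^* \le m$), this yields $e^* = 0$ by $t_0+3m$. Lower-level \ttl\ values are handled by the induction of Step~2, or directly by Lemma~\ref{lem:ttl-value}, since they likewise only decrease. Hence every node is \reset\ with $\algcolor=c_\bot$ at $t_0+3m$.

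For later times, a \reset\ node can leave \reset\ only through one of three routes:
\begin{enumerate}
\item witnessing a signal, which never happens after $t_0$;
\item receiving \msg{set}{} or \msg{locked}{} messages, but none exist since every node is \reset;
\item raising its \maxttl, but Line~\ref{alg:increase-energy-recv} keeps the node \reset\ with $\algcolor = c_\bot$.
\end{enumerate}
So the configuration is closed, and $\algcolor = c_\bot$ holds for all $t' \ge t_0+3m$.
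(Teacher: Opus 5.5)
Your outer timeline (sources gone by about $t_0+2m$, \ttl\ drained by $t_0+3m$, then closure of the all-\reset\ configuration) matches the paper's. The gap is Step~2, which carries the whole proof: you leave it as a hope, and the level-by-level route you sketch cannot deliver it.

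The windows at different levels stack rather than nest. A \locked\ node always has positive \ttl\ (Lemma~\ref{lem:zero-energy}). While some level-$x$ node is \locked, any neighbor that is not \locked\ executes Line~\ref{alg:increase-energy-locked} and becomes a fresh level-$2x$ source. This includes, e.g., a long-\reset\ level-$2$ node that Line~\ref{alg:increase-energy-recv} lifts to level $x$ in that same round. The new source stays a source for up to $2x-1$ rounds and then keeps positive \ttl\ for up to $2x-1$ more.

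So in a region the level-$m$ flood has not yet reached, level $2x$ can stay active about $2x$ rounds longer than level $x$. For example, a level-$4$ source born at $t_0+3$ leaves a \locked\ level-$4$ node with $\ttl=1$ at $t_0+8$. That node can spawn a level-$8$ source whose \ttl\ stays positive until $t_0+22>t_0+16$. Hence your inductive invariant ``after $t_0+2x$ no node with $\maxttl\le x$ has positive \ttl'' is false for intermediate levels, and summing such windows overshoots $t_0+3m$.

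Step~3 inherits the same hole. Lemma~\ref{lem:ttl-updating} requires \emph{every} node to have $\maxttl=m^*(t)$. Lower levels cannot be dealt with afterwards, because a surviving \locked\ level-$m/2$ node can still create a fresh level-$m$ source.

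The missing idea is to use the hypothesis $m^*\equiv m$ not merely to forbid levels above $m$, but to force all nodes onto level $m$ early, which kills every lower level at once.
\begin{enumerate}
\item Let $t_1\le t_0$ be the first time some node has $\maxttl=m$. By monotonicity, $m^*=m$ throughout $[t_1,t_0+3m]$.
\item That node is a \set\ source with $\timer=1$ at $t_1$: it was just created by Line~\ref{alg:increase-energy-locked}, or $m=2$ and $t_1=0$. Its \maxttl\ can never change, so it becomes \locked\ within $m-1$ rounds.
\item At that moment every node must be \locked\ with $\maxttl=m$. Otherwise connectivity gives an edge from a \locked\ level-$m$ node to a node that is not. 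The latter executes Line~\ref{alg:increase-energy-locked} and reaches $\maxttl=2m$, contradicting the hypothesis.
\item Thus all nodes have $\maxttl=m$ from $t_1+m-1\le t_0+m-1$ on. Afterwards no doubling is possible, so with no witnesses no source is ever created again.
\item Each existing source loses its status the next time its \timer\ wraps, i.e.\ by $t_0+2m-1$.
\item Now Lemma~\ref{lem:ttl-updating} genuinely applies. It drives $e^*$, which is now the maximum \ttl\ over all nodes, to $0$ by $t_0+3m$.
\end{enumerate}
With this in place, your Step~3 and closure argument go through as written, and no geometric-sum bookkeeping is needed.
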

\begin{proof}[Proof Idea]
    Let $t_1 = \min\{t \geq 0 : m^*(t) = m\}$ be the earliest time at which any node obtains $\maxttl = m$; note that this implies $t_1 \leq t_0$.
    We first extend our supposition that $m^*(t) = m$ for all $t \in [t_0, t_0 + 3m]$ to all $t \in [t_1, t_0 + 3m]$, i.e., at least one node has $\maxttl = m$ in this entire interval.
    We then argue that within $m$ rounds of time $t_1$, all nodes have ``caught up'' with $\maxttl = m$.
    Within $2m$ rounds of the signal disappearing from the system, there will not be any nodes originating new broadcasts.
    This causes all nodes' \ttl\ values to decrease by Lemma~\ref{lem:ttl-updating} such that within another $m$ rounds, all nodes will be in the \reset\ state, which we argue will always correspond to having $\algcolor = c_\bot$, as desired.
\end{proof}

\begin{restatable}{lemma}{lempersistsetcolor} \label{lem:persist-set-color}
    Suppose a signal $s_{t_0} \neq \bot$ is persistent (i.e., for all times $t \geq t_0$, some node witnesses $s_t = s_{t_0}$) and let $m = m^*(t_0)$.
    If $m^*(t) = m$ for all $t \geq t_0$, then for all nodes $v \in V$ and times $t \geq t_0 + 4m$, $v_t.\algcolor = c_s$.
\end{restatable}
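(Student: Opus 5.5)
The plan is to use the hypothesis that $m^*(t) = m$ for all $t \geq t_0$ to show that, after a short transient, the network runs in lock-step phases of exactly $m$ rounds. In every such phase the only sources are witnesses of $s$. The whole proof leans on one contradiction device: any event that would make some node execute Line~\ref{alg:increase-energy-locked} with $\maxttl = m$ would create $\maxttl = 2m > m^*(\cdot)$, which is impossible.

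\textbf{Step 1 (catch-up, rounds $[t_0, t_0+m]$).} As in the proof idea of Lemma~\ref{lem:persist-no-signal}, I would let $t_1 \leq t_0$ be the first time $m^*(t_1) = m$. Nodes holding $\maxttl = m$ never lose it (Lemma~\ref{lem:maxttl-nondecreasing}). By 1-interval connectivity and Line~\ref{alg:increase-energy-recv}, the set of such nodes grows by at least one per round, so all nodes have $\maxttl = m$ by time $t_0 + m$ (indeed by $t_1 + n - 1$, and $n \le m$ will follow from Step 2). From then on Line~\ref{alg:cond-max-recv} never fires. Moreover, Line~\ref{alg:cond-lock-msgs} can never fire either, because it would double some node's \maxttl\ to $2m$.

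\textbf{Step 2 (global synchrony).} Once all $\maxttl = m$, the failure of Line~\ref{alg:cond-lock-msgs} means that no non-\locked\ node ever has a \locked\ neighbor. Since every snapshot is connected, at every time either all nodes are \locked\ or none are. Nodes become \locked\ only when their timer wraps to $0$, and \reset\ nodes do not count. It follows that no node can be \reset\ at a moment when others lock. I would argue this forces all timers to be equal and all nodes to be non-\reset\ at every phase boundary, so phases of length $m$ start simultaneously everywhere. In particular, a full phase boundary occurs by time $t_0 + 2m$. The same argument rules out any node's \ttl\ reaching $0$ mid-phase: that node would become \reset\ and then be adjacent to \locked\ nodes at the next boundary. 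This also implies that a phase of length $m$ suffices to flood the network, and hence that $m \ge n-1$.

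\textbf{Step 3 (the good phase, $[t_0+2m, t_0+4m]$).} At each boundary after $t_0$, Lines~\ref{alg:locked-check-signal}--\ref{alg:locked-signal-false} make exactly the current witnesses of $s$ into sources with $\signal = s$. This set is nonempty because $s$ is persistent, and no doubling sources exist by Step 1. By Lemmas~\ref{lem:source-ttl} and~\ref{lem:ttl-value}, sources hold $\ttl = m$, and every non-source takes the maximum \ttl\ in its closed neighborhood minus one. The main obstacle is showing that stale information from earlier phases, whose signal may differ from $s$, cannot win the ``$\ttl = \recvttl$'' tie-break. My plan is a potential argument. Every stale \ttl\ value is at most $m-1$ at the phase start and strictly decreases each round, because it is never refreshed by a current-phase source. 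Any node whose maximum comes from a current-phase chain therefore carries signal $s$. After one full phase of $m$ rounds every stale value would have hit $0$, which Step 2 forbids, so by then every node's \ttl\ must trace back along a temporal path to a current source. I expect some care to be needed to rule out ties between stale and fresh values during the phase; I would handle this by showing that every value present at the end of the phase was created within that phase. Hence all nodes hold $\signal = s$ and $\algcolor = c_s$ (Line~\ref{alg:update-color}) by time $t_0 + 4m$.

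\textbf{Step 4 (persistence).} Every later phase again has only $s$-witnesses as sources. Every node already carries $s$, and all \ttl\ values, fresh or carried over, come with signal $s$. So by induction over phases no node's color ever changes from $c_s$ for $t \ge t_0 + 4m$.
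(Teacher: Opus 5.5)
Your overall architecture (all-or-nothing locking, synchronized timers, a level-by-level argument in the final phase) matches the paper's. However, Step~2 ends with ``a phase of length $m$ suffices to flood the network, and hence $m\ge n-1$,'' and this is a non sequitur. Step~1 relies on it ($n\le m$) to turn the catch-up bound $t_1+n-1$ into $t_0+m$. Flooding within $m$ rounds in a particular execution gives no lower bound on $m$; $n-1$ is only a worst-case \emph{upper} bound on flooding time. Here is a counterexample: take a static star whose center witnesses $s$ from $t_0=0$. From time $1$ on, every node is \locked, the center is a source with $\ttl=2$, and every leaf has $\ttl=1$. This pattern repeats with period $2$, so $m^*(t)=2$ for all $t$ while $n$ is arbitrary.

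Without $n\le m$, your growth-by-one argument only gives catch-up by $t_1+n-1$, which need not be at most $t_0+m$ (or even $t_0+4m$). Your all-or-nothing argument is formulated only after catch-up, so as written it cannot shorten catch-up either. The missing idea is the paper's, in two parts. First, prove the all-or-nothing property for $L_t=\{v : v_t.\state=\locked,\ v_t.\maxttl=m\}$ at \emph{every} $t\ge t_1$. This holds because a neighbor with smaller \maxttl\ adopts $m$ on Line~\ref{alg:increase-energy-recv}, becomes \reset, and then doubles on Line~\ref{alg:increase-energy-locked} in the same round. Second, observe that the first nodes holding $\maxttl=m$ obtained it by doubling, so they are \set\ sources with $\timer=1$ (unless $m=2$, in which case every node has it from initialization). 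Such a source has $\ttl=m$, cannot reset, and locks within $m-1$ rounds, at which moment every node lies in $L$. This gives catch-up by $t_1+m-1\le t_0+m-1$, with no relation between $m$ and $n$.

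Two further steps are asserted rather than proved. Your claim that no \ttl\ reaches $0$ mid-phase holds only for phases whose wrap occurred in a round $\ge t_0$. A phase begun earlier may have no sources (the signal may have been $\bot$ at that wrap), and then all nodes can reset simultaneously mid-phase without violating all-or-nothing. So the existence of a good boundary by $t_0+2m$ needs the paper's chain of claims:
\begin{enumerate}
\item a \set\ node with $\maxttl=m$ implies a \set\ source exists;
\item a \set\ source yields a \locked\ source within $m$ rounds;
\item if all nodes are \locked\ with no source, a source appears within $m$ rounds, either at a wrap after $t_0$ or because everyone resets and a witness executes Line~\ref{alg:become-set-cond};
\item once a \locked\ source exists at a time $\ge t_0$, one exists at every later time, whence $L_t=V$ for all $t\ge t_0+2m$.
\end{enumerate}
Finally, the ties you worry about in Step~3 cannot occur. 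Fix a time $b$ right after a wrap in a round $\ge t_0$, so that the sources at $b$ are current witnesses of $s$. Then $k$ rounds later, values descending from a current source are at least $m-k$, while stale values (descending from non-sources at time $b$) are at most $m-1-k$. This level separation is exactly the paper's induction that at time $t_0+3m+i$ every node with $\ttl\ge\max\{1,m-i\}$ has color $c_s$. By contrast, showing that every value at the end of the phase was created within it would not by itself control which \signal\ a node adopted on a tie.
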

\begin{proof}[Proof Idea]
    This proof is similar in approach to that of Lemma~\ref{lem:persist-no-signal}.
    We again let $t_1 = \min\{t \geq 0 : m^*(t) = m\}$ be the earliest time at which any node obtains $\maxttl = m$, noting that this implies $t_1 \leq t_0$.
    We then extend our supposition that $m^*(t) = m$ for all $t \geq t_0$ to all $t \geq t_1$, i.e., at least one node has $\maxttl = m$ in this entire interval.
    This allows us to prove two pivotal helper claims: (1) either all nodes have $\state = \locked$ and $\maxttl = m$, or none do, and (2) all nodes with $\state = \locked$ and $\maxttl = m$ have the same \timer\ values.
    We use these facts and our earlier lemmas to build up a long series of claims identifying key points of progress for Algorithm~\ref{alg:deterministic-coloring}.
    A non-exhaustive list---but one that gives the structure of the argument---is as follows: (1) because there is a persistent signal $s \neq \bot$ in the system, there will be some node with $\maxttl = m$ and $\source = \true$ that can start a broadcast with sufficient reach to inform all other nodes of the signal's presence; (2) this causes all nodes to eventually reach $\maxttl = m$; (3) this in turn implies that all nodes will become \locked; (4) this will enable nodes to converge on the correct \algcolor; and (5) all of this occurs within $4m$ rounds of the signal's persistent appearance.
    The actual details of this argument are quite complex, owing to the fact that the network may have been in an apparently arbitrary state before the signal became persistent.
    Thus, there is a flavor of self-stabilization in these arguments to show that the network progressively recovers from the past and converges to a configuration reflecting the true persistent signal.
\end{proof}

Before we can use the previous two lemmas to prove our main result, we must show that nodes' \maxttl\ values cannot increase forever, i.e., that eventually $m^*(t)$ reaches and remains at some maximum value.
The full proof is in Appendix~\ref{app:proofs}.

\begin{restatable}{lemma}{lemmaxttlupperbound} \label{lem:maxttl-upper-bound}
    For all times $t \geq 0$, $m^*(t) \leq 2n$.
\end{restatable}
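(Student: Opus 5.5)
Since every \maxttl\ value is a power of two (Lemma~\ref{lem:maxttl-nondecreasing}) and new maximum values arise only through the doubling on Line~\ref{alg:increase-energy-locked}, I will argue by contradiction. Suppose some time has $m^*(t) > 2n$. Let $t$ be the first time at which $m^*(t) = 2m > 2n$, so $m \ge n+1$ (indeed $m > n$) and $m^*(t-1) = m$. Line~\ref{alg:increase-energy-recv} only copies existing values. So some node $v$ must have executed Line~\ref{alg:increase-energy-locked} in round $t-1$ with $\maxttl = m$. That means $v$ was not \locked\ at Line~\ref{alg:cond-lock-msgs}, yet it received a \msg{locked}{} message from a neighbor $u$ with $u_{t-1}.\maxttl = m$. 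The goal is to show that whenever a node with $\maxttl = m \ge n$ is \locked, every node already has $\maxttl = m$ and is \locked. Then $v$ could not have been non-\locked, which is the contradiction.

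To establish this, I will trace the flooding phase backwards. Take $t_1$ to be the earliest time at which some node has $\maxttl = m$; before time $t$, $m$ is the maximum value in the network. First, \maxttl\ $= m$ spreads by Line~\ref{alg:increase-energy-recv} over 1-interval connected snapshots. Each round strictly enlarges the set of nodes carrying $m$ until it is all of $V$, so every node has $\maxttl = m$ by time $t_1 + n - 1$. Second, I will show that a node with $\maxttl = m$ can only become \locked\ once its timer wraps to $0$ at Line~\ref{alg:become-locked}. By Lemma~\ref{lem:timer-lt-maxttl} and the timer-sync rule (Line~\ref{alg:timer-sync}), the largest timer among \set\ nodes with $\maxttl = m$ increases by exactly one per round starting from $0$ at $t_1$. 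Hence the first locking happens no earlier than $t_1 + m - 1 \ge t_1 + n - 1$, and by then all nodes already hold $\maxttl = m$.

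Third, I need to show that all nodes with $\maxttl = m$ are \set\ (not \reset) and share the maximal timer value by that moment, so that they all lock simultaneously. Two facts give this. Lemmas~\ref{lem:ttl-value} and~\ref{lem:source-ttl} show that the \ttl\ originating at the doubling source (which has $\ttl = m$) reaches every node along a temporal path of length less than $n \le m$ with positive \ttl. By Lemma~\ref{lem:zero-energy}, such nodes are not \reset. The \set\ messages also carry the maximal timer through the connected snapshots, so within $n-1$ rounds every node's timer equals the maximum.

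The main obstacle is this synchronization argument. A node might reset (\ttl\ reaching $0$) or rejoin from \reset\ with $\timer = 0$, and so lag behind. I would handle this by induction on rounds since $t_1$. Let $k$ be the number of rounds elapsed. The invariant is: the set of nodes that have $\maxttl = m$, are non-\reset, and hold the maximal timer contains every node within temporal distance $k$ of the source, and each such node has $\ttl \ge m - k$. When $k$ reaches $n-1 < m$, this set is all of $V$, so at the locking moment every node locks together. Consequently no non-\locked\ node with $\maxttl = m$ can ever hear a \locked\ message carrying $m$, so no doubling beyond $m$ occurs, giving $m^*(t) \le 2n$.
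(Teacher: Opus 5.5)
Your reduction is the same as the paper's. You take the first time $t$ with $m^*(t)=2m>2n$, locate the node $v$ that doubles in round $t-1$ because of a \textsc{Locked} neighbor with $\texttt{max-ttl}=m$, and aim to show that any \textsc{Locked} node with $\texttt{max-ttl}=m$ forces all nodes to be \textsc{Locked} with $\texttt{max-ttl}=m$. Your treatment of the \emph{first} flooding phase after $t_1$ is also essentially right, with one off-by-one: the first holder of $m$ has $\texttt{timer}=1$ at $t_1$, so the first lock happens exactly at $t_1+m-1$. In that phase the doubling node stays a source with $\texttt{ttl}=m$ until its timer wraps, the synchronized \textsc{Set} group spreads along temporal paths with no member resetting, and since $m\ge n+1$ all nodes lock together.

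The gap is the concluding ``consequently.'' Your induction is indexed by rounds since $t_1$, and its bound $\texttt{ttl}\ge m-k$ is used up at the first lock, where it only gives $\texttt{ttl}\ge 1$. Yet the doubling in round $t-1$ may happen arbitrarily long after $t_1$. After the first lock the network keeps cycling:
\begin{itemize}
\item at each timer wrap the \texttt{source} is re-chosen among the current witnesses, possibly a different node or none;
\item a sourceless locked phase drains all \texttt{ttl} values to $0$;
\item new \textsc{Set} phases are later started from \textsc{Reset} nodes by witnesses with $\texttt{timer}=0$, possibly at different times.
\end{itemize}
Any of these could a priori create a round where some node has just reset while a neighbor is still \textsc{Locked} with $\texttt{max-ttl}=m$. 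That is precisely the trigger for doubling.

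To rule this out you need an invariant valid at every time in $[t_1,t-1]$, covering three situations:
\begin{itemize}
\item In a locked phase where some source exists, every node keeps $\texttt{ttl}\ge m-n+1\ge 2$, even though the source may change at each wrap. Your bound, tied to the single doubling source, does not give this.
\item In a sourceless locked phase, \texttt{ttl} values are squeezed between matching upper and lower bounds, so all nodes reach $0$ in the same round: the one in which the common timer advances to $m-1$.
\item In every later \textsc{Set} phase, the group of \textsc{Set} nodes holding the maximal timer always contains a source, loses no member to a reset, and gains at least one node per round. It therefore covers $V$ by the time its timer wraps.
\end{itemize}

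The paper gets exactly this by showing that at every time one of two conditions holds: either all nodes are \textsc{Locked} with synchronized timers, or no node with $\texttt{max-ttl}=m$ is \textsc{Locked}. These are maintained through counting bounds:
\begin{itemize}
\item $|\{w : w_t.\texttt{ttl}\le i\}|\le\max(0,i-m+n)$, plus $\texttt{ttl}\le m-\texttt{timer}-1$ in the sourceless case;
\item for the leading \textsc{Set} group $S_t$, the analogous bound with $|S_t|$ in place of $n$, together with the growth condition $|S_t|\ge\min(n,s^*(t))$, where $s^*(t)$ is that group's timer.
\end{itemize}
Without some such cross-phase invariant, your argument shows only that the first lock after $t_1$ is simultaneous, which does not imply $m^*(t)\le 2n$.
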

\begin{proof}[Proof Idea]
    We argue by contradiction, observing that if there is a time $t \geq 0$ with $m^*(t) > 2n$, then this occurred because some node $v \in V$ has $\varline{v}{t-1}{alg:increase-energy-recv}{\maxttl} = m^*(t) / 2$ and then executes Line~\ref{alg:increase-energy-locked} in round $t - 1$.
    The conditions on Line~\ref{alg:cond-lock-msgs} show that this can only occur if $v$ receives a message in round $t-1$ from a neighbor $u \in N_{t-1}(v)$ with $u_{t-1}.\state = \locked$ and $u_{t-1}.\maxttl = m^*(t) / 2$ and either $v_{t-1}.\state = \reset$ or $v_{t-1}.\maxttl < m^*(t) / 2$.
    To contradict this, we first identify the earliest time $t' \leq t$ at which any node obtains $\maxttl = m^*(t) / 2$.
    We then construct a complex induction arguing that at every time $t'' \geq t$, one of two sets of conditions must hold.
    Taken together, these conditions imply that $v_{t-1}.\state = \locked$ and $v_{t-1}.\maxttl = m^*(t) / 2$, so $v$ cannot have executed Line~\ref{alg:increase-energy-locked}.
\end{proof}


With Lemmas~\ref{lem:persist-no-signal}--\ref{lem:maxttl-upper-bound} in hand, we are ready to prove our main result.

\begin{theorem} \label{thm:deterministic}
    Algorithm~\ref{alg:deterministic-coloring} solves any homogeneous instance of adaptive self-organization without nodes having knowledge of $n = |V|$.
    It stabilizes for $s = \bot$ in at most $12n$ rounds, stabilizes for $s \neq \bot$ with at most $16n$ disruptions, and uses $\bigO{\log n}$ bits of memory per node.
\end{theorem}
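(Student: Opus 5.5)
The proof will combine Lemmas~\ref{lem:persist-no-signal}, \ref{lem:persist-set-color}, and~\ref{lem:maxttl-upper-bound} through a potential argument on $m^*(t)$. By Lemma~\ref{lem:maxttl-nondecreasing}, every node's \maxttl\ is nondecreasing, so $m^*(t)$ is nondecreasing. It is always a power of two, starts at $m^*(0)=2$, and is bounded by $2n$ (Lemma~\ref{lem:maxttl-upper-bound}). Hence $m^*$ takes at most $\lceil\log_2 n\rceil + 1$ distinct values $2 = m_1 < m_2 < \cdots < m_J \le 2n$, with $m_{j+1} = 2m_j$ whenever a value is not skipped. Along the way, the memory bound is immediate: Table~\ref{tab:algvars} together with Lemmas~\ref{lem:timer-lt-maxttl}, \ref{lem:ttl-bound}, and~\ref{lem:maxttl-upper-bound} shows that \maxttl, \ttl, and \timer\ each take values in $[0,2n]$. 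That is $\bigO{\log n}$ bits, while \signal, \algcolor, \state, and \source\ take $\bigO{1}$ bits since $k$ and $\ell$ are constants.

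\emph{Case $s=\bot$.} Suppose $\bot$ is persistent from $t_0$. Partition $[t_0,\infty)$ into maximal intervals on which $m^*$ is constant. If $m^*$ stays at $m = m^*(t_0)$ for $3m$ rounds, Lemma~\ref{lem:persist-no-signal} gives color $c_\bot$ forever after. Colors then never change, so the execution stabilizes. Otherwise $m^*$ jumps to at least $2m$ within $3m$ rounds; restarting the argument at that time (the signal is still persistently $\bot$) we charge at most $3m'$ rounds to each value $m'$ that $m^*$ visits. Summing over the geometric sequence of visited values, all at most $2n$, the total is at most $3(2n + n + n/2 + \cdots) \le 12n$. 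This yields stabilization within $12n$ rounds. The argument must also note that $m^*$ cannot increase after stabilization; otherwise Line~\ref{alg:increase-energy-recv} would reset colors. Since colors are already $c_\bot$, that reset is harmless anyway.

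\emph{Case $s\ne\bot$.} Here time cannot be bounded, because $m^*$ may stay constant arbitrarily long before jumping. This is why the statement counts disruptions. Fix the persistent start $t_0$, and let $\tau_1 < \tau_2 < \cdots$ be the times after $t_0$ at which $m^*$ increases. Within each interval on which $m^*$ equals $m$, I would apply the proof of Lemma~\ref{lem:persist-set-color}, restarted at the interval's beginning, to argue that after $4m$ rounds all nodes have color $c_s$ for as long as $m^*$ remains $m$. Thus the only rounds in that interval that can be unstable are the first $4m$. The lemma is stated under the hypothesis that $m^*(t)=m$ for all $t\ge t_0$. I expect the main obstacle to be checking that its argument is really local in time: its conclusion at a time $t$ should depend only on $m^*$ being constant on $[t_0,t]$, since the algorithm is causal and the adversary's later behavior cannot affect earlier states. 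With that, the statement extends to finite intervals. Summing $4m$ over the distinct values of $m^*$, which are at most $2n$ and at least doubling, gives at most $4(2n+n+\cdots) \le 16n$ disruptions. Once $m^*$ reaches its final value (guaranteed by boundedness), Lemma~\ref{lem:persist-set-color} applies directly and colors remain $c_s$ forever, so the execution stabilizes.

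Finally, both cases stabilize with every node holding the goal color $c_s$. For a homogeneous instance this makes $d_{TV}(C_t/n, r(s)) = 0$, satisfying Eq.~\ref{eq:approximate} for any $\varepsilon(n)$. No step uses knowledge of $n$, so the algorithm solves the instance as claimed.
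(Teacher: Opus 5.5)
\textbf{Overall.} Your proof follows the paper's essentially step for step. It charges $3m$ (resp.\ $4m$) rounds to each power-of-two value of $m^*$ using Lemmas~\ref{lem:persist-no-signal}, \ref{lem:persist-set-color}, and~\ref{lem:maxttl-upper-bound}, and the memory count is the same. The $s=\bot$ case is fine as written, because Lemma~\ref{lem:persist-no-signal} only assumes constancy of $m^*$ on a finite window and already concludes $c_\bot$ forever.

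\textbf{The gap.} The step that fails is the localization you single out for $s\neq\bot$. The paper does this step silently when it asserts stability with color $c_s$ on all of $[t_i+4m^*(t_i),t_{i+1})$, so it has the same flaw. The conclusion of Lemma~\ref{lem:persist-set-color} at time $t$ does \emph{not} depend only on $m^*$ being constant on $[t_0,t]$:
\begin{itemize}
\item Its last step uses $L_t=V$.
\item That comes from Claim~\ref{claim:universal-lock} at time $t$.
\item Claim~\ref{claim:universal-lock} gets its contradiction from $m^*(t+1)=m$.
\end{itemize}
Causality does not repair this. The configuration at time $t$ can already force $m^*(t+1)>m$, so no continuation of the adversary meets the lemma's hypothesis.

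Figure~\ref{fig:runningexample} shows this directly. At $t=7$ the two far nodes are \textsc{Reset} with color $c_\bot$ while $m^*(7)=2$, and 1-interval connectivity forces the doubling in round $7$. Suppose the adversary keeps the star for many more rounds before switching to the path. Then that \textsc{Reset} time falls inside $[t_0+4m,t_1)$, where $t_1$ is the doubling time. That is exactly where you (and the paper) claim all colors are $c_s$.

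\textbf{What the argument actually gives.} Inspecting the lemma's proof, the correct localized statement is: if $m^*$ equals $m$ on $[t_i,t_{i+1}-1]$, then all colors are $c_s$ on $[t_i+4m,t_{i+1}-2]$. So each non-final interval can contain one extra unstable round, namely $t_{i+1}-1$.

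\textbf{Effect on the bound.} This is at most one extra disruption per doubling, i.e.\ at most $r\le\log_2 n$ in total. Correctness and the linear bound are unaffected. However, your geometric sum, like the paper's, then gives $\sum_i 4m^*(t_i)+r\le 16n-8+\log_2 n$ rather than $16n$. To keep the exact constant $16n$ you need a sharper per-interval count of unstable rounds; otherwise state the bound as $16n+\bigO{\log n}$.
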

\begin{proof}
    We first show that Algorithm~\ref{alg:deterministic-coloring} stabilizes to the correct color in linear time for $s = \bot$, i.e., when there is no signal in the network.
    Suppose that $s = \bot$ is persistent from time $t_0 \geq 0$.
    By Lemma~\ref{lem:persist-no-signal}, either the algorithm stabilizes with all nodes having $\algcolor = c_\bot$ within $3m^*(t_0)$ rounds or there is a time $t_1 \in [t_0+1, t_0 + 3m^*(t_0)]$ such that $m^*(t_1) > m^*(t_0)$ by Lemma~\ref{lem:maxttl-nondecreasing}.
    Since $t_1 \geq t_0$, we can apply these lemmas again using $t_1$, and so on to obtain a sequence $t_0 < t_1 < \cdots$ where $t_{i+1} \leq t_i + 3m^*(t_i)$ by Lemma~\ref{lem:persist-no-signal} and $m^*(t_{i+1}) > m^*(t_i)$ by Lemma~\ref{lem:maxttl-nondecreasing}.
    Also by Lemma~\ref{lem:maxttl-nondecreasing}, there exist exponents $x_i \in \mathbb{Z}_+$ such that $m^*(t_i) = 2^{x_i}$ for all $i \geq 0$.
    But Lemma~\ref{lem:maxttl-upper-bound} shows this sequence is finite; i.e., it has a final term $t_r$ with $m^*(t_r) = 2^{x_r} \leq 2n$.
    Thus, the algorithm stabilizes for $s = \bot$ by time
    \begin{equation}
        t_r + 3m^*(t_r) \leq t_0 + \sum_{i=0}^r 3m^*(t_i)
        = t_0 + 3\sum_{i=0}^r 2^{x_i}
        \leq t_0 + 3 \cdot 2^{x_r + 1}
        \leq t_0 + 12n.
    \end{equation}

    Next, we show that Algorithm~\ref{alg:deterministic-coloring} stabilizes with at most a linear number of disruptions for any signal $s \neq \bot$.
    Suppose that the signal $s \neq \bot$ is persistent from time $t_0 \geq 0$.
    Let $t_0 < t_1 < \cdots$ be the sequence of times such that, for all $i \geq 0$, $m^*(t)$ is constant for all $t \in [t_i, t_{i+1})$ and $m^*(t_{i+1}) > m^*(t_i)$; again, the increase follows from Lemma~\ref{lem:maxttl-nondecreasing}.
    Just as in the $s = \bot$ case, Lemma~\ref{lem:maxttl-nondecreasing} gives exponents $x_i \in \mathbb{Z}_+$ such that $m^*(t_i) = 2^{x_i}$, and Lemma~\ref{lem:maxttl-upper-bound} implies a final term $t_r$ in this sequence with $m^*(t_r) = 2^{x_r} \leq 2n$.
    By Lemma~\ref{lem:persist-set-color}, the dynamic network is stable with all nodes having $\algcolor = c_s$ in rounds $t \in [t_i + 4m^*(t_i), t_{i+1})$ for all $i \in [0, r)$ and times $t \geq t_r + 4m^*(t_r)$.
    So there are at most $4m^*(t_i)$ rounds for each $i \in [0, r]$ in which the network is not stable.
    We conclude that the algorithm stabilizes for $s \neq \bot$ with at most $T(n)$ disruptions, where
    \begin{equation}
        T(n) \leq \sum_{i = 0}^r 4m^*(t_i)
        = 4\sum_{i=0}^r 2^{x_i}
        \leq 4 \cdot 2^{x_r + 1}
        \leq 16n.
    \end{equation}

    Finally, we will show that Algorithm~\ref{alg:deterministic-coloring} has $\bigO{\log n}$ space complexity.
    The algorithm maintains seven variables, listed in Table~\ref{tab:algvars}.
    Recall that we assume the number of signals $k$ and colors $\ell$ are fixed constants independent of $n$; thus, $\signal$ and $\algcolor$ use $\bigO{1}$ memory.
    It is easy to see that $\state$ and $\source$ also require only $\bigO{1}$ memory.
    By Lemma~\ref{lem:maxttl-upper-bound}, $\maxttl \leq 2n$ so $\maxttl$ requires no more than $\lceil \log_2{2n} \rceil$ bits to store.
    Finally, by Lemmas~\ref{lem:timer-lt-maxttl} and~\ref{lem:ttl-bound}, $\timer$ and $\ttl$ are both at most $\maxttl$, respectively, also requiring at most $\lceil \log_2{2n} \rceil$ bits.
    Altogether, the algorithm requires $\bigO{\log n}$ bits of memory.
\end{proof}

Theorem~\ref{thm:deterministic} reveals the difference of stabilizing with and without disruptions in Algorithm~\ref{alg:deterministic-coloring}'s response to present vs.\ absent signals, respectively.
Note that \maxttl\ values act as nodes' estimates of the eccentricity of the witnesses, i.e., the number of rounds required for any witness to broadcast a perceived signal to the rest of the network.
Disruptions for present signals $s \neq \bot$ arise from the fact that the dynamics adversary can keep eccentricities---and, by extension, \maxttl\ values---small for arbitrarily long periods of time.
If the adversary suddenly increases witness eccentricities, some node that is ``far'' from a witness may now have its \ttl\ expire, causing it to incorrectly reset to color $c_\perp$ despite a signal being present in the network.
This isn't a correctness issue---eventually, this node will double its \maxttl\ value until the network once again stabilizes at the correct color---but it creates intermediate periods where stabilization is disrupted, requiring restabilization.
If nodes had knowledge of $n = |V|$, this situation could be avoided entirely by initializing $\maxttl = n$ (or the first power of 2 at least $n$), which always upper bounds eccentricity.
By the argument of Lemma~\ref{lem:maxttl-upper-bound}, such a \maxttl\ value will never increase and thus $m^*(t) = n$ for all times $t \geq 0$, allowing a single application of Lemma~\ref{lem:persist-set-color} to imply stabilization in linear time for signals $s \neq \bot$ in $4n$ rounds.

\begin{corollary} \label{cor:knowledgen}
    If nodes have knowledge of $n = |V|$, there is a deterministic algorithm that solves any homogeneous instance of adaptive self-organization using $\bigO{\log n}$ memory and stabilizes in $\bigO{n}$ rounds for all signals $s \in \{\bot, 1, \ldots, k\}$.
\end{corollary}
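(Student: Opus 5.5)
We modify Algorithm~\ref{alg:deterministic-coloring} in only one place: its initialization. Every node starts with $\maxttl = m_0$, the least power of $2$ with $m_0 \geq n$, and correspondingly with $\ttl = m_0$. The initial \timer\ stays at $1 < m_0$ when $m_0 \geq 2$; tiny $n$ is a constant-size case we handle separately. Everything else is unchanged. The plan has three steps: (i) re-establish the invariants of Lemmas~\ref{lem:maxttl-nondecreasing}--\ref{lem:ttl-updating} for the new initialization; (ii) show that $m^*(t) = m_0$ for all $t \geq 0$, i.e., no doubling (Line~\ref{alg:increase-energy-locked}) ever occurs; and (iii) apply Lemmas~\ref{lem:persist-no-signal} and~\ref{lem:persist-set-color} once each, with $m = m_0 < 2n$.

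For step (i), the invariant proofs are inductions whose base cases only check that the initial values are mutually consistent. Those checks are that $\state = \set$ and $\ttl \neq 0$, that $0 \leq \ttl \leq \maxttl$, that $\source = \true$ iff $\ttl = \maxttl$, and that $\timer < \maxttl$. All of these still hold, and the inductive steps never use the specific value $2$ except through the $t=0$ case of \eqtext~\ref{eq:ttl-value}, which becomes $m_0$. Lemma~\ref{lem:maxttl-nondecreasing} also still holds, since $m_0$ is a power of $2$.

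Step (ii) is the main obstacle. I would rerun the argument of Lemma~\ref{lem:maxttl-upper-bound}, which in the original shows that a doubling from $m^*/2$ to $m^*$ cannot happen once $m^*/2$ is at least the relevant broadcast reach. Suppose for contradiction that some node first doubles at round $t-1$ from $\maxttl = m_0$. Then it received a \msg{locked}{} message with $\maxttl = m_0$ while it was itself non-\locked\ or had a smaller \maxttl. Since all nodes start with $\maxttl = m_0$ and none can increase before this first doubling, the latter option is impossible. The induction inside the proof of Lemma~\ref{lem:maxttl-upper-bound} then applies with $m_0 \geq n$ in place of the original threshold. Its key point is that within a phase of length $m_0 \geq n$, the \timer\ synchronization (Line~\ref{alg:timer-sync}) and the \set-message flooding reach every node, by 1-interval connectivity, before any node reaches $\timer = 0$. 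Hence all nodes become \locked\ simultaneously, which contradicts the existence of a non-\locked\ recipient. The delicate part is the very first phase: all nodes start \set\ with synchronized timers ($\timer=1$), so they all lock together at time $m_0 - 1$. This supplies the base case for that induction. I expect to verify that the original proof uses only ``$\maxttl \geq n$'' and never ``$\maxttl$ was reached by doubling''.

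For step (iii), once $m^*(t) = m_0$ for all $t$ is established, both lemmas apply directly. Take $s = \bot$ persistent from $t_0$. The hypothesis of Lemma~\ref{lem:persist-no-signal} holds on $[t_0, t_0 + 3m_0]$, so all colors equal $c_\bot$ from $t_0 + 3m_0 < t_0 + 6n$ on. Now take $s \neq \bot$ persistent. Lemma~\ref{lem:persist-set-color} gives color $c_s$ everywhere from $t_0 + 4m_0 < t_0 + 8n$ on. In both cases colors are constant after that time, which is stabilization in $\bigO{n}$ rounds with exact, hence approximating, distributions. For memory, \maxttl, \ttl\ and \timer\ are bounded by $m_0 < 2n$ (Lemmas~\ref{lem:timer-lt-maxttl} and~\ref{lem:ttl-bound}), and the remaining variables are constant-size, giving $\bigO{\log n}$ bits.
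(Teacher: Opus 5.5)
Your proposal is essentially the paper's own argument: initialize \maxttl\ to a power of two between $n$ and $2n$, rerun the argument of Lemma~\ref{lem:maxttl-upper-bound} to show it never doubles (so $m^*(t)$ is constant), and then apply Lemmas~\ref{lem:persist-no-signal} and~\ref{lem:persist-set-color} once each. One correction: that lemma's induction uses $m > n$, not $m \geq n$ (e.g., $m \geq n+1$ is what gives $|V_t^{\leq 1}| = 0$ in its Case~1), so either take $m_0$ to be the least power of two strictly greater than $n$ (still at most $2n$, so your $\bigO{n}$ bounds survive), or handle $m_0 = n$ separately by noting that at most one node can then have $\ttl \leq 1$, and by connectivity it has a neighbor with larger \ttl, so no node's \ttl\ reaches $0$ prematurely.
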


\section{A Randomized Algorithm for Arbitrary Instances}
\label{sec:randomized-alg}

\begin{algorithm}[t] 
	\caption{Randomized Algorithm for Arbitrary Instances}
	\label{alg:randomized}
	\begin{algorithmic}[1]
        \State Run Algorithm~\ref{alg:deterministic-coloring}, replacing:
        \Statex \quad ``reset'' color updates on Lines~\ref{alg:increase-energy-recv} and \ref{alg:no-energy-color} with $\algcolor \gets \;\sim r(\bot)$.
        \Comment{$\sim$ is sampling.}
        \Statex \quad message-based color updates on Line~\ref{alg:update-color} with $\algcolor \gets \;\sim r(\signal)$.
        \Statex \quad witness color updates w.r.t.\ a signal $s \neq \bot$ on Lines~\ref{alg:locked-new-color} and \ref{alg:become-set-color} with $\algcolor \gets \;\sim r(\signal)$.
	\end{algorithmic}
\end{algorithm}

In Sections~\ref{sec:hardness} and~\ref{sec:deterministic-algorithm}, we proved deterministic algorithms can only solve homogeneous instances of adaptive self-organization and then gave a deterministic algorithm for exactly these instances.
In this section, we give a randomized algorithm that solves any instance of adaptive self-organization w.h.p., homogeneous or otherwise.
This randomized algorithm (Algorithm~\ref{alg:randomized}) is a simple extension of our deterministic one (Algorithm~\ref{alg:deterministic-coloring}): whenever a node updates its \signal, it now samples a new \algcolor\ from the goal distribution $r(\signal)$.
Note that \algcolor\ does not affect the other variables internal to a node, nor is it sent in any messages.
As a result, all other node variables update in this randomized version of the algorithm exactly as they did in the deterministic algorithm.
In particular, once the adversarial signal is stable, all nodes will eventually converge to the correct \signal\ value, then sample their \algcolor\ accordingly.	
That \algcolor\ does not affect the operation of the algorithm also implies that nodes independently sample \algcolor\ values, even though the adversary can adapt to the random choices.
This sampling strategy yields the following lemma.

\begin{lemma} \label{lem:approximation}
    If an execution of Algorithm~\ref{alg:randomized} stabilizes w.r.t.\ a persistent signal $s$, it does so in a configuration approximating $r(s)$ w.h.p.
\end{lemma}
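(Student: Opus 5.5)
The plan is to show that, in the stable configuration, every node's color is an independent sample from $r(s)$. A Chernoff/Hoeffding bound plus a union bound over the $\ell$ colors then gives total variation distance $o(1)$ w.h.p.

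\textbf{Step 1: at stabilization every node has $\signal = s$.}
I first pin down each node's \signal\ in the stabilized configuration.
\begin{itemize}
\item The non-\algcolor\ variables evolve exactly as in Algorithm~\ref{alg:deterministic-coloring}.
\item So the arguments of Lemmas~\ref{lem:persist-no-signal} and~\ref{lem:persist-set-color}, together with Lemma~\ref{lem:maxttl-upper-bound}, apply. They give that, after the final \maxttl\ increase and the ensuing $\bigO{m}$ rounds, every node has $\signal = s$ (and $\signal = \bot$ when $s=\bot$).
\item Indeed, in those lemmas $\algcolor = c_s$ is derived from the node's \signal\ being $s$ in the homogeneous setting.
\end{itemize}

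\textbf{Step 2: each final color is a fresh sample.}
Consider each node $v$'s last color update before stabilization. In Algorithm~\ref{alg:randomized}, every line that writes \algcolor\ samples from $r(\signal)$ using the node's \signal\ value at that moment. I would check that at this last update, \signal\ already equals $s$. The only subtle case is Line~\ref{alg:increase-energy-recv}, where the sample comes from $r(\bot)$ while \signal\ is set to $\bot$; the sample and \signal\ are still consistent there.

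This is where the sampling rule causes trouble. Lines~\ref{alg:update-color} and~\ref{alg:locked-new-color} resample \emph{every} round a non-source node executes them, even when \signal\ is unchanged. So a "stabilized" execution in the sense of colors never changing is really a statement about the random draws. The cleanest route is to sidestep this and work directly with the hypothesis that the execution stabilizes: let $t_1$ be the stabilization time, and argue that each node's color at $t_1$ was produced by a sampling event with distribution $r(\signal)$, where $\signal = s$ persists.

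\textbf{Step 3: independence despite an adaptive adversary.}
I need the final colors to be mutually independent and independent of the adversary's choices, even though the adversary sees all states. I would use a principle of deferred decisions. Because \algcolor\ never influences messages or any other variable, the entire trajectory of non-color variables is a deterministic function of the adversary's choices. The adversary could see colors, but colors cannot affect anything it needs to force behavior on.

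So I couple the execution with one where each node, at each sampling event, draws from a pre-generated independent sequence. Conditioned on the (color-independent) schedule of sampling events, each node's last draw is a fresh $r(s)$ sample, independent across nodes. The adversary's adaptivity could bias \emph{which} round is the last sampling event. I would argue that Step 1's determination of the final \signal\ means that every sample after $\signal = s$ is fixed has distribution $r(s)$, regardless of which one is last. I expect this step to be the main obstacle, since the adversary can choose the topology based on colors and thereby choose when sampling stops.

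\textbf{Step 4: concentration.}
Given independence, $C(i)$ is a sum of $n$ independent Bernoulli$(r(s,i))$ variables. Hoeffding gives
\[
\Pr{|C(i)/n - r(s,i)| \ge \sqrt{c\log n/n}} \le 2n^{-2c}.
\]
A union bound over the constant $\ell$ colors then yields $d_{TV} < \varepsilon(n) = \ell\sqrt{c\log n / n} \to 0$ with probability $1 - 1/\mathrm{poly}(n)$.
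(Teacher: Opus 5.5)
Your Step~4 is, in substance, the paper's entire proof: Hoeffding for each color, then a union bound over the $\ell$ colors. The paper takes $\delta = 2/(\ell n^{1/4})$, giving $\varepsilon(n) = n^{-1/4}$ with failure probability $2\ell\exp(-8\sqrt{n}/\ell^2)$; your $\delta = \sqrt{c\log n/n}$ works equally well. The paper simply asserts the premise that the stable colors are independent draws from $r(s)$, justifying it only by the remark that \algcolor\ is never read by the algorithm. So the paper gives you nothing to rely on for Step~3, and Step~3, which you rightly flag as the crux, has a genuine gap as you argue it.

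The faulty inference is ``every sample taken after \signal\ $= s$ is fixed has distribution $r(s)$, regardless of which one is last.'' Each individual draw is fresh. But the adversary sees all colors before fixing the next round's topology and witnesses, and those choices decide whether and when $v$ executes a sampling line again. So the index of $v$'s last draw is a stopping time adapted to the colors, and a stopped draw need not have law $r(s)$. A stopping rule of the form ``force $v$ to resample until it shows color $i$'' makes the last draw equal $i$ with probability near $1$. Nor need the final draws of distinct nodes be independent. Your coupling with pre-generated sequences does not fix this, because the \emph{position} at which $v$'s sequence stops is still chosen adaptively. That \algcolor\ is write-only shows only that the non-color trajectory is a function of the adversary's choices, not that those choices ignore colors.

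The ``sidestep'' in Step~2 has the same defect: conditioning on ``the execution stabilizes at $t_1$'' is conditioning on the draws themselves. In fact, as Algorithm~\ref{alg:randomized} is written, locked nodes keep sampling forever: non-sources at Line~\ref{alg:update-color} every round, witnesses at Line~\ref{alg:locked-new-color} at every timer wrap. So for $s \neq \bot$ and a non-point-mass $r(s)$, the colors stabilize with probability $0$, and the conditional statement carries no information.

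To close the gap you need one of two things.
\begin{itemize}
\item An algorithm-specific argument that once a node makes its final draw, nothing the adversary does afterward can make it sample again, so the adversary never gets to see a draw and reject it.
\item A change that removes selection altogether: each node draws a single uniform $U_v$ at initialization and always outputs the color whose $r(\signal)$-CDF interval contains $U_v$. Then the stable colors are a fixed function of $(U_v)_v$ and $s$. Your Step~4 applies for each fixed $s$, and a union bound over the $k+1$ signals absorbs the adversary's adaptive choice of $s$.
\end{itemize}
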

\begin{proof}
    W.l.o.g., let $c(v)$ be the color of node $v$ after stabilization and $C(i)$ be the number of nodes that stabilize with color $i$.
    Define indicator variables $X_v^i$ for nodes $v \in V$ and colors $i \in \{1, \ldots, \ell\}$ that are equal to $1$ if $c(v) = i$ and $0$ otherwise.
    Since nodes sample their colors independently from the goal distribution $r(s)$, it is easy to see that $\E{X_v^i} = \Pr{X_v^i = 1} = r(s, i)$.
    Thus, linearity of expectation implies that the expected frequency of any color $i$ is
    \begin{equation} \label{eq:linexpect}
        \E{\frac{C(i)}{n}} = \E{\frac{1}{n}\sum_{v \in V}X_v^i} = \frac{1}{n}\sum_{v \in V}\E{X_v^i} = r(s, i).
    \end{equation}
    By Hoeffding's inequality~\cite{Hoeffding1963-probabilityinequalities}, the probability that the frequency of color $i$ deviates from its expected value (\eqtext~\ref{eq:linexpect}) by at least some error term $\delta > 0$ is
    \begin{equation} \label{eq:hoeffding}
        \Pr{\left|\frac{C(i)}{n} - r(s, i)\right| \geq \delta} \leq 2\exp(-2n\delta^2).
    \end{equation}
    By the union bound, the probability that every color's frequency has small deviation from $r(s, i)$ is
    \begin{equation} \label{eq:union}
        \Pr{\forall i \in \{1, \ldots, \ell\} : \left|\frac{C(i)}{n} - r(s, i)\right| < \delta} \geq 1 - \sum_{i=1}^\ell \Pr{\left|\frac{C(i)}{n} - r(s, i)\right| \geq \delta}.
    \end{equation}
    If all $\ell$ colors' frequencies deviate from their respective expected values by less than $\delta$, then the total variation distance between the frequency distribution $C$ and the goal distribution $r(s)$ is less than $\ell\delta/2$.
    Combining this with the bounds from \eqstext~\ref{eq:hoeffding} and~\ref{eq:union} yields:
    \begin{align} \label{eq:dtv}
        \Pr{d_{TV}(C, r(s)) < \frac{\ell\delta}{2}} &= \Pr{\frac{1}{2}\sum_{i=1}^\ell \left|\frac{C(i)}{n} - r(s, i)\right| < \frac{\ell\delta}{2}} \\
        &\geq \Pr{\forall i \in \{1, \ldots, \ell\} : \left|\frac{C(i)}{n} - r(s, i)\right| < \delta} \\
        &\geq 1 - 2\ell\exp(-2n\delta^2).
    \end{align}
    Setting $\delta = 2/(\ell n^{1/4})$ yields an error term $\varepsilon(n) = \ell\delta/2 = n^{-1/4}$ which satisfies the definition of approximation given in Section~\ref{subsec:problem} and upper bounds the failure probability by $2\ell\exp(-8\sqrt{n}/\ell^2)$ which achieves the w.h.p.\ result.
\end{proof}

Theorem~\ref{thm:deterministic} and Lemma~\ref{lem:approximation} together imply the following result.

\begin{corollary} \label{cor:randomized}
    Algorithm~\ref{alg:randomized} solves any instance of adaptive self-organization, w.h.p.
    It stabilizes for $s = \bot$ in at most $12n$ rounds, stabilizes for signals $s \neq \bot$ with at most $16n$ disruptions, and uses $\bigO{\log n}$ bits of memory per node.
\end{corollary}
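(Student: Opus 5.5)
The plan is to reduce Corollary~\ref{cor:randomized} to Theorem~\ref{thm:deterministic} and Lemma~\ref{lem:approximation} through a coupling. First we check that Algorithm~\ref{alg:randomized} differs from Algorithm~\ref{alg:deterministic-coloring} only in the assignments to \algcolor. In Algorithm~\ref{alg:deterministic-coloring}, no line reads \algcolor: it appears in no message and in no branch condition, and it is never used to update another variable. So for any fixed adversary strategy, and any fixed random seeds, the trajectories of $(\signal, \state, \source, \maxttl, \ttl, \timer)$ coincide with those of some execution of the deterministic algorithm. That execution is driven by the same topology and witness sequence, which is adaptively chosen but is a legal adversary for the deterministic algorithm. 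Every lemma of Section~\ref{subsec:analysis} concerns only these non-color variables, plus \algcolor\ via \signal. Hence Lemmas~\ref{lem:maxttl-nondecreasing}--\ref{lem:maxttl-upper-bound} transfer verbatim. In particular, $\maxttl \le 2n$, and the statements about \algcolor\ become statements about \signal: the proofs of Lemmas~\ref{lem:persist-no-signal} and~\ref{lem:persist-set-color} actually show that all nodes hold $\signal = \bot$ (resp.\ $\signal = s$) from the stated times onward.

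Next, we address time and disruptions. A node's color in Algorithm~\ref{alg:randomized} is resampled only on lines where, in the deterministic run, \signal\ is (re)assigned. We need to observe that once all nodes hold the same \signal\ value permanently and take no further path through those lines, no resampling occurs and colors freeze. The subtle point is that Line~\ref{alg:update-color} executes every round for non-source nodes with positive \ttl, even when \signal\ is unchanged. So I would interpret (or minimally adjust) the replacement to resample only when \signal\ actually changes, or equivalently when the deterministic color would change. Under that reading, a round in which the randomized network is not stable is a round in which some deterministic \signal\ changes. The counting of Theorem~\ref{thm:deterministic} then applies unchanged. For $s=\bot$, stabilization happens by $t_0 + 12n$ via the geometric sum over doublings of $m^*$. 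For $s \ne \bot$, there are at most $4m^*(t_i)$ unstable rounds per phase, summing to $16n$. The memory bound adds only $\bigO{1}$ bits for the color and the constant-size description of $r$.

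Finally, correctness w.h.p. follows from Lemma~\ref{lem:approximation}. Since the execution stabilizes, each node's final color is its last sample from $r(s)$. Lemma~\ref{lem:approximation} then gives $d_{TV}(C/n, r(s)) < n^{-1/4}$ with failure probability $2\ell\exp(-8\sqrt n/\ell^2)$, which is at most $1/\mathrm{poly}(n)$.

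I expect the main obstacle to be justifying independence of the final samples against an adaptive adversary. The adversary sees colors and could in principle time signal changes so that it forces resampling of nodes it dislikes. The fix I would pursue is as follows. Colors influence nothing except the adversary's choices, and after the signal becomes persistent the adversary can no longer change \signal\ values beyond what the deterministic analysis permits. Moreover, each node's last sample is drawn fresh at a time determined by the non-color history. Conditioned on that history, the final colors are therefore independent draws from $r(s)$. Formally, we would apply a deferred-decision argument in which the final sample of each node is revealed only after stabilization, and then invoke Hoeffding as in Lemma~\ref{lem:approximation}.
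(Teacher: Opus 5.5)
Your reduction is the one the paper uses. The paper states the corollary as an immediate consequence of Theorem~\ref{thm:deterministic} and Lemma~\ref{lem:approximation}, resting on the fact that \algcolor\ never feeds back into other variables. Your remark about Line~\ref{alg:update-color} is correct and goes beyond the paper. Read literally, Algorithm~\ref{alg:randomized} resamples every non-source node with positive \ttl\ in every round, and every witness at every \timer\ wrap, so for $s \neq \bot$ with non-degenerate $r(s)$ it never stabilizes. Resampling only when \signal\ changes is the reading the paper's prose suggests, and under it your transfer of the $12n$ and $16n$ bounds goes through. One omission: \algcolor\ must then be initialized by a sample from $r(\bot)$, since with $\bot$ persistent from time $0$ no node ever changes its \signal.

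The genuine gap is the independence step, which is not merely unjustified but false in this model.
\begin{itemize}
\item \textbf{Why deferred decisions fail.} Whether a sample becomes a node's last one is decided by later adversary choices, and the adversary sees each color one round after it is drawn. So the final sample is picked by a stopping rule adapted to the sample values. Conditioning on the non-color history, which is itself a function of the colors, does not leave the final colors i.i.d.\ $r(s)$. The disruptions the deterministic analysis allows are exactly where targeted re-rolls can happen.
\item \textbf{Creating two competing sources.} Take $s$ persistent and all nodes \locked\ with $\maxttl = m$. Stretch the network into a path so that two far nodes $x, y$ have their \ttl\ expire in the same non-wrap round, with only $x$ witnessing. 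Line~\ref{alg:become-set-cond} makes $x$ a \set\ source with $\signal = s$, while $y$ stays \reset\ with $\signal = \bot$. Next round, attach both to a \locked\ node so that both double at Line~\ref{alg:increase-energy-locked}, which leaves \signal\ untouched. This gives two \set\ sources with $\maxttl = 2m$ and equal \timer\ values, carrying $s$ and $\bot$.
\item \textbf{Re-rolling until favorable.} Until they lock, use a double star on $x, y$. A leaf's \recvttl\ is its hub's $\ttl = 2m$, so it copies its hub's \signal. The adversary keeps a leaf on $x$ once its fresh color is the favored one, and otherwise bounces it through $y$. At the lock it attaches everyone to $x$, keeps $x$ as the only witness, and uses a star around $x$ forever, so no \signal\ changes afterwards.
\item \textbf{Effect.} Each leaf gets roughly $m$ tries. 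For $r(s) = (1/2, 1/2)$, about a $1 - 2^{-m}$ fraction of the nodes stabilizes on the favored color w.h.p., so $d_{TV}$ stays bounded away from $0$.
\end{itemize}
Lemma~\ref{lem:approximation} simply asserts the independence you are trying to prove, so the paper has the same hole.

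A repair that does work: let each node draw once, at initialization, independent colors $c_v(\sigma) \sim r(\sigma)$ for every $\sigma \in \{\bot, 1, \ldots, k\}$ ($\bigO{1}$ bits), and always set $\algcolor = c_v(\signal)$. The final color is then $c_v(s)$ whatever the adversary does. A union bound over the $k+1$ signals plus Hoeffding gives the approximation w.h.p. Colors can change only when \signal\ does, so your coupling transfers the time and disruption bounds unchanged, and initialization is handled automatically.
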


\section{Conclusion} \label{sec:conclude}

We introduced the adaptive self-organization problem for dynamic networks in which nodes must coordinate to self-organize their colors (i.e., states or behaviors) in response to time-varying, locally-perceived environmental signals.
Both the signals and the network topology are allowed to change adversarially.
In this setting, we proved that no deterministic algorithm can solve non-homogeneous instances of this problem, i.e., those in which multiple node colors must persist in the system.
We then designed and analyzed a deterministic algorithm for homogeneous instances that stabilizes in worst-case optimal $\bigO{n}$ time and uses $\bigO{\log n}$ memory per node.
We concluded with two extensions of this algorithm: first, if nodes know $n$, they can achieve stronger stabilization properties; second, if randomization is allowed, then arbitrary instances of adaptive self-organization are solvable, w.h.p., using the same technique and in the same time and space complexity.

This work fully characterizes the computability and optimal time complexity of adaptive self-organization in anonymous, synchronous, 1-interval connected dynamic networks.
It remains open whether the problem is possible, deterministically or otherwise, with sublogarithmic memory per node.
As discussed in Section~\ref{subsec:relwork}, a prior randomized algorithm effectively solved a restricted (one-signal, two-color) version of this problem under a sequential, uniform-at-random scheduler and constrained signal and edge dynamics with \textit{constant memory per node}, but at the cost of a $\bigO{n^6\log n}$ expected runtime~\cite{Oh2025-adaptivecollective}.
Whether these results can be translated to the general problem under standard dynamics, or more broadly what the optimal time--space tradeoffs are for this problem, remain open.

\bibliographystyle{abbrvurl}
\bibliography{ref}

\appendix

\section{An Alternative Proof of Non-Homogeneous Hardness} \label{app:hardness}

In Section~\ref{sec:hardness}, we proved that deterministic algorithms can only hope to solve homogeneous instances of adaptive self-organization.
Our proof considered a pathological case: a static, complete graph where every node is a witness, and thus all nodes always share a common color.
Here, we reprove the theorem using a slightly more interesting case where exactly one node witnesses the signal per round.

\thmhardness*
\begin{proof}[Proof (Alternative)]
    Consider any non-homogeneous instance $r$ of adaptive self-organization; w.l.o.g., let signal $s \in \{\bot, 1, \ldots, k\}$ and color $c \in \{1, \ldots, \ell\}$ satisfy $0 < r(s, c) < 1$. 
    Suppose to the contrary that there exists a deterministic algorithm $\alg$ that solves this instance, and consider the execution of $\alg$ on a dynamic network of $n$ nodes in which $G_t = K_n$, $s_t = s$, and $W_t = \{w\}$ for all rounds $t \geq 0$.
    In words, the network has a static, complete topology; signal $s$ is persistent from the very first round; and $s$ is witnessed only by some fixed node $w$.

    Recall from Section~\ref{subsec:model} that nodes change states based on their current state, the (initially empty) set of messages received in the previous round, and the presence and type of signal they witness (if any).
    Since $\alg$ is deterministic and $G_0$ is a complete graph, symmetry demands that in round $t = 0$, all nodes send each other identical messages.
    Each node then uses the $n - 1$ identical messages it receives and its initial state to perform a state transition.
    The witness $w$ may transition to some state $\alpha_0$ while all other nodes must transition to the same state $\beta_0$ (where possibly $\alpha_0 = \beta_0$).
    In the next round, again since $\alg$ is deterministic and $G_1$ is a complete graph, the non-witness nodes must all receive the same message or no message from $w$ (depending on state $\alpha_0$) and either no messages or $n - 2$ identical messages from the other non-witnesses (depending on $\beta_0$). 
    In any case, all non-witnesses receive the same set of messages, so in the next round they will transition to the same state $\beta_1$ while the witness transitions to a state $\alpha_1$.
    Since the network topology, signal, and witness set never change, this pattern of the witness $w$ being in some state $\alpha_t$ while all other nodes share a state $\beta_t$ must continue for all rounds $t \geq 0$.
    Furthermore, since color is a component of state, there can be at most two colors present in the dynamic network in any round $t$: one color for the witness, and one color shared by all non-witnesses.

    Since $\alg$ solves this instance and signal $s$ is persistent from round $0$, there must be a round $t^*$ at which this execution of $\alg$ stabilizes, i.e., $c_t(v) = c_{t-1}(v)$ for all times $t > t^*$ and nodes $v \in V$. 
    Let $c_w$ be the stable color of witness $w$ and $c_v$ be the stable color shared by all non-witnesses $v \neq w$.
    Then the stable count of nodes with color $c$ is
    \begin{equation}
        C(c) = \left\{\begin{array}{cl}
            0 & \text{if $c \not\in \{c_w, c_v\}$}; \\
            1 & \text{if $c = c_w \neq c_v$}; \\
            n - 1 & \text{if $c = c_v \neq c_w$}; \\
            n & \text{if $c = c_w = c_v$}.
        \end{array}\right.
    \end{equation}
    In any case, the limit of the total variation distance between the stable color distribution and the goal distribution for large $n$ is always lower bounded a constant:
    \begin{align}
        \lim_{n \to \infty} d_{TV}(C, r(s)) &= \lim_{n \to \infty} \frac{1}{2}\sum_{i=1}^\ell \left|\frac{C(i)}{n} - r(s, i)\right| \\
        &\geq \lim_{n \to \infty} \frac{1}{2} \left|\frac{C(c)}{n} - r(s, c)\right| \\
        &\geq \frac{1}{2}\min\{r(s, c), 1 - r(s, c)\}.
    \end{align}
    So there cannot exist an error term $\varepsilon(n)$ such that $d_{TV}(C, r(s)) < \varepsilon(n)$ and $\lim_{n \to \infty} \varepsilon(n) = 0$, contradicting the supposition that $\alg$ solves this instance by stabilizing in a configuration approximating $r(s)$.
    Since the choice of $r$ was arbitrary, all non-homogeneous instances must be impossible for deterministic algorithms.
\end{proof}

\section{Running Example} \label{app:runningexample}

\begin{figure}[tp]
    \centering
    \begin{subfigure}{0.24\textwidth}
        \centering
        \begin{tikzpicture}
            \tikzset{node/.style={circle,draw,node font=\footnotesize, inner sep=0pt, minimum size=0.6cm}}
            \tikzset{src/.style={node, minimum size=0.5cm}}
            \tikzset{label/.style={node distance=0.06cm, node font=\footnotesize, inner sep=0pt}}
            \tikzset{signal/.style={-Latex,thick,decorate,decoration={snake,post length=0.2cm}}};

            \node[node] (1) at (0, 0) {$S$}; \node[src] at (0, 0) {};
            \node[node] (2) at (1.5, 1.3) {$S$}; \node[src] at (1.5, 1.3) {};
            \node[node] (3) at (1.5, 0) {$S$}; \node[src] at (1.5, 0) {};
            \node[node] (4) at (1.5, -1.3) {$S$}; \node[src] at (1.5, -1.3) {};

            \foreach \x in {1, 2, 3, 4}
                \node[label, above=of \x] {$2/2/1$};

            \draw (1) to (2);
            \draw (1) to (3);
            \draw (1) to (4);

            \draw[signal,red!70!black] (-1.1, 0) to (1);
        \end{tikzpicture}
        \caption{\centering $t = 0$, initialization}
    \end{subfigure}
    \hfill
    \begin{subfigure}{0.24\textwidth}
        \centering
        \begin{tikzpicture}
            \tikzset{node/.style={circle,draw,node font=\footnotesize, inner sep=0pt, minimum size=0.6cm}}
            \tikzset{src/.style={node, minimum size=0.5cm}}
            \tikzset{label/.style={node distance=0.06cm, node font=\footnotesize, inner sep=0pt}}
            \tikzset{signal/.style={-Latex,thick,red!70!black,decorate,decoration={snake,post length=0.2cm}}};

            \node[node,fill=red!20] (1) at (0, 0) {$L$}; \node[src] at (0, 0) {};
            \node[node] (2) at (1.5, 1.3) {$L$};
            \node[node] (3) at (1.5, 0) {$L$};
            \node[node] (4) at (1.5, -1.3) {$L$};

            \node[label, above=of 1] {$2/2/0$};
            \foreach \x in {2, 3, 4}
                \node[label, above=of \x] {$2/1/0$};

            \draw (1) to (2);
            \draw (1) to (3);
            \draw (1) to (4);

            \draw[signal,blue!70!black] (-1.1, 0) to (1);
        \end{tikzpicture}
        \caption{\centering $t = 1$}
    \end{subfigure}
    \hfill
    \begin{subfigure}{0.24\textwidth}
        \centering
        \begin{tikzpicture}
            \tikzset{node/.style={circle,draw,node font=\footnotesize, inner sep=0pt, minimum size=0.6cm}}
            \tikzset{src/.style={node, minimum size=0.5cm}}
            \tikzset{label/.style={node distance=0.06cm, node font=\footnotesize, inner sep=0pt}}
            \tikzset{signal/.style={-Latex,thick,red!70!black,decorate,decoration={snake,post length=0.2cm}}};

            \node[node,fill=red!20] (1) at (0, 0) {$L$}; \node[src] at (0, 0) {};
            \node[node,fill=red!20] (2) at (1.5, 1.3) {$L$};
            \node[node,fill=red!20] (3) at (1.5, 0) {$L$};
            \node[node,fill=red!20] (4) at (1.5, -1.3) {$L$};

            \node[label, above=of 1] {$2/2/1$};
            \foreach \x in {2, 3, 4}
                \node[label, above=of \x] {$2/1/1$};

            \draw (1) to (2);
            \draw (1) to (3);
            \draw (1) to (4);

            \draw[signal,blue!70!black] (-1.1, 0) to (1);
        \end{tikzpicture}
        \caption{\centering $t = 2$}
    \end{subfigure}
    \hfill
    \begin{subfigure}{0.24\textwidth}
        \centering
        \begin{tikzpicture}
            \tikzset{node/.style={circle,draw,node font=\footnotesize, inner sep=0pt, minimum size=0.6cm}}
            \tikzset{src/.style={node, minimum size=0.5cm}}
            \tikzset{label/.style={node distance=0.06cm, node font=\footnotesize, inner sep=0pt}}
            \tikzset{signal/.style={-Latex,thick,red!70!black,decorate,decoration={snake,post length=0.2cm}}};

            \node[node,fill=blue!20] (1) at (0, 0) {$L$}; \node[src] at (0, 0) {};
            \node[node,fill=red!20] (2) at (1.5, 1.3) {$L$};
            \node[node,fill=red!20] (3) at (1.5, 0) {$L$};
            \node[node,fill=red!20] (4) at (1.5, -1.3) {$L$};

            \node[label, above=of 1] {$2/2/0$};
            \foreach \x in {2, 3, 4}
                \node[label, above=of \x] {$2/1/0$};

            \draw (1) to (2);
            \draw (1) to (3);
            \draw (1) to (4);

            \draw[signal,blue!70!black] (-1.1, 0) to (1);
        \end{tikzpicture}
        \caption{\centering $t = 3$}
    \end{subfigure} \\ \bigskip
    \begin{subfigure}{0.24\textwidth}
        \centering
        \begin{tikzpicture}
            \tikzset{node/.style={circle,draw,node font=\footnotesize, inner sep=0pt, minimum size=0.6cm}}
            \tikzset{src/.style={node, minimum size=0.5cm}}
            \tikzset{label/.style={node distance=0.06cm, node font=\footnotesize, inner sep=0pt}}
            \tikzset{signal/.style={-Latex,thick,red!70!black,decorate,decoration={snake,post length=0.2cm}}};

            \node[node,fill=blue!20] (1) at (0, 0) {$L$}; \node[src] at (0, 0) {};
            \node[node,fill=blue!20] (2) at (1.5, 1.3) {$L$};
            \node[node,fill=blue!20] (3) at (1.5, 0) {$L$};
            \node[node,fill=blue!20] (4) at (1.5, -1.3) {$L$};

            \node[label, above=of 1] {$2/2/1$};
            \foreach \x in {2, 3, 4}
                \node[label, above=of \x] {$2/1/1$};

            \draw (1) to (2);
            \draw (1) to (3);
            \draw (1) to (4);

            \draw[signal,blue!70!black] (-1.1, 0) to (1);
        \end{tikzpicture}
        \caption{\centering $t = 4$}
    \end{subfigure}
    \hfill
    \begin{subfigure}{0.24\textwidth}
        \centering
        \begin{tikzpicture}
            \tikzset{node/.style={circle,draw,node font=\footnotesize, inner sep=0pt, minimum size=0.6cm}}
            \tikzset{src/.style={node, minimum size=0.5cm}}
            \tikzset{label/.style={node distance=0.06cm, node font=\footnotesize, inner sep=0pt}}
            \tikzset{signal/.style={-Latex,thick,red!70!black,decorate,decoration={snake,post length=0.2cm}}};

            \node[node,fill=blue!20] (1) at (0, 0) {$L$}; \node[src] at (0, 0) {};
            \node[node,fill=blue!20] (2) at (1.5, 1.3) {$L$};
            \node[node,fill=blue!20] (3) at (1.5, 0) {$L$};
            \node[node,fill=blue!20] (4) at (1.5, -1.3) {$L$};

            \node[label, above=of 1] {$2/2/0$};
            \foreach \x in {2, 3, 4}
                \node[label, above=of \x] {$2/1/0$};

            \draw (1) to (2);
            \draw (1) to (3);
            \draw (1) to (4);

            \draw[signal,blue!70!black] (-1.1, 0) to (1);
        \end{tikzpicture}
        \caption{\centering $t = 5$}
    \end{subfigure}
    \hfill
    \begin{subfigure}{0.24\textwidth}
        \centering
        \begin{tikzpicture}
            \tikzset{node/.style={circle,draw,node font=\footnotesize, inner sep=0pt, minimum size=0.6cm}}
            \tikzset{src/.style={node, minimum size=0.5cm}}
            \tikzset{label/.style={node distance=0.06cm, node font=\footnotesize, inner sep=0pt}}
            \tikzset{signal/.style={-Latex,thick,red!70!black,decorate,decoration={snake,post length=0.2cm}}};

            \node[node,fill=blue!20] (1) at (0, 0) {$L$}; \node[src] at (0, 0) {};
            \node[node,fill=blue!20] (2) at (1.5, -0.867) {$L$};
            \node[node,fill=blue!20] (3) at (0, -1.73) {$L$};
            \node[node,fill=blue!20] (4) at (1.5, -2.6) {$L$};

            \node[label, above=of 1] {$2/2/1$};
            \foreach \x in {2, 3, 4}
                \node[label, above=of \x] {$2/1/1$};

            \draw (1) to (2) to (3) to (4);

            \draw[signal,blue!70!black] (-1.1, 0) to (1);
        \end{tikzpicture}
        \caption{\centering $t = 6$}
    \end{subfigure}
    \hfill
    \begin{subfigure}{0.24\textwidth}
        \centering
        \begin{tikzpicture}
            \tikzset{node/.style={circle,draw,node font=\footnotesize, inner sep=0pt, minimum size=0.6cm}}
            \tikzset{src/.style={node, minimum size=0.5cm}}
            \tikzset{label/.style={node distance=0.06cm, node font=\footnotesize, inner sep=0pt}}
            \tikzset{signal/.style={-Latex,thick,red!70!black,decorate,decoration={snake,post length=0.2cm}}};

            \node[node,fill=blue!20] (1) at (0, 0) {$L$}; \node[src] at (0, 0) {};
            \node[node,fill=blue!20] (2) at (1.5, -0.867) {$L$};
            \node[node] (3) at (0, -1.73) {$R$};
            \node[node] (4) at (1.5, -2.6) {$R$};

            \node[label, above=of 1] {$2/2/0$};
            \node[label, above=of 2] {$2/1/0$};
            \foreach \x in {3, 4}
                \node[label, above=of \x] {$2/0/0$};

            \draw (1) to (2) to (3) to (4);

            \draw[signal,blue!70!black] (-1.1, 0) to (1);
        \end{tikzpicture}
        \caption{\centering $t = 7$}
    \end{subfigure} \\ \bigskip
    \begin{subfigure}{0.24\textwidth}
        \centering
        \begin{tikzpicture}
            \tikzset{node/.style={circle,draw,node font=\footnotesize, inner sep=0pt, minimum size=0.6cm}}
            \tikzset{src/.style={node, minimum size=0.5cm}}
            \tikzset{label/.style={node distance=0.06cm, node font=\footnotesize, inner sep=0pt}}
            \tikzset{signal/.style={-Latex,thick,red!70!black,decorate,decoration={snake,post length=0.2cm}}};

            \node[node,fill=blue!20] (1) at (0, 0) {$L$}; \node[src] at (0, 0) {};
            \node[node,fill=blue!20] (2) at (1.5, -0.867) {$L$};
            \node[node] (3) at (0, -1.73) {$S$}; \node[src] at (0, -1.73) {};
            \node[node] (4) at (1.5, -2.6) {$R$};

            \node[label, above=of 1] {$2/2/1$};
            \node[label, above=of 2] {$2/1/1$};
            \node[label, above=of 3] {$4/4/1$};
            \node[label, above=of 4] {$2/0/0$};

            \draw (1) to (2) to (3) to (4);

            \draw[signal,blue!70!black] (-1.1, 0) to (1);
        \end{tikzpicture}
        \caption{\centering $t = 8$}
    \end{subfigure}
    \hfill
    \begin{subfigure}{0.24\textwidth}
        \centering
        \begin{tikzpicture}
            \tikzset{node/.style={circle,draw,node font=\footnotesize, inner sep=0pt, minimum size=0.6cm}}
            \tikzset{src/.style={node, minimum size=0.5cm}}
            \tikzset{label/.style={node distance=0.06cm, node font=\footnotesize, inner sep=0pt}}
            \tikzset{signal/.style={-Latex,thick,red!70!black,decorate,decoration={snake,post length=0.2cm}}};

            \node[node] (1) at (0, 0) {$S$};
            \node[node] (2) at (1.5, -0.867) {$S$};
            \node[node] (3) at (0, -1.73) {$S$}; \node[src] at (0, -1.73) {};
            \node[node] (4) at (1.5, -2.6) {$S$};

            \node[label, above=of 1] {$4/2/3$};
            \node[label, above=of 2] {$4/3/3$};
            \node[label, above=of 3] {$4/4/3$};
            \node[label, above=of 4] {$4/3/3$};

            \draw (1) to (2) to (3) to (4);

            \draw[signal,blue!70!black] (-1.1, 0) to (1);
        \end{tikzpicture}
        \caption{\centering $t = 10$}
    \end{subfigure}
    \hfill
    \begin{subfigure}{0.24\textwidth}
        \centering
        \begin{tikzpicture}
            \tikzset{node/.style={circle,draw,node font=\footnotesize, inner sep=0pt, minimum size=0.6cm}}
            \tikzset{src/.style={node, minimum size=0.5cm}}
            \tikzset{label/.style={node distance=0.06cm, node font=\footnotesize, inner sep=0pt}}
            \tikzset{signal/.style={-Latex,thick,red!70!black,decorate,decoration={snake,post length=0.2cm}}};

            \node[node,fill=blue!20] (1) at (0, 0) {$L$}; \node[src] at (0, 0) {}; 
            \node[node] (2) at (1.5, -0.867) {$L$};
            \node[node] (3) at (0, -1.73) {$L$};
            \node[node] (4) at (1.5, -2.6) {$L$};

            \node[label, above=of 1] {$4/4/0$};
            \node[label, above=of 2] {$4/3/0$};
            \node[label, above=of 3] {$4/3/0$};
            \node[label, above=of 4] {$4/3/0$};

            \draw (1) to (2) to (3) to (4);

            \draw[signal,blue!70!black] (-1.1, 0) to (1);
        \end{tikzpicture}
        \caption{\centering $t = 11$}
    \end{subfigure}
    \hfill
    \begin{subfigure}{0.24\textwidth}
        \centering
        \begin{tikzpicture}
            \tikzset{node/.style={circle,draw,node font=\footnotesize, inner sep=0pt, minimum size=0.6cm}}
            \tikzset{src/.style={node, minimum size=0.5cm}}
            \tikzset{label/.style={node distance=0.06cm, node font=\footnotesize, inner sep=0pt}}
            \tikzset{signal/.style={-Latex,thick,red!70!black,decorate,decoration={snake,post length=0.2cm}}};

            \node[node,fill=blue!20] (1) at (0, 0) {$L$}; \node[src] at (0, 0) {}; 
            \node[node,fill=blue!20] (2) at (1.5, -0.867) {$L$};
            \node[node,fill=blue!20] (3) at (0, -1.73) {$L$};
            \node[node,fill=blue!20] (4) at (1.5, -2.6) {$L$};

            \node[label, above=of 1] {$4/4/3$};
            \node[label, above=of 2] {$4/3/3$};
            \node[label, above=of 3] {$4/2/3$};
            \node[label, above=of 4] {$4/1/3$};

            \draw (1) to (2) to (3) to (4);
        \end{tikzpicture}
        \caption{\centering $t = 14$}
    \end{subfigure} \\ \bigskip
    \begin{subfigure}{0.32\textwidth}
        \centering
        \begin{tikzpicture}
            \tikzset{node/.style={circle,draw,node font=\footnotesize, inner sep=0pt, minimum size=0.6cm}}
            \tikzset{src/.style={node, minimum size=0.5cm}}
            \tikzset{label/.style={node distance=0.06cm, node font=\footnotesize, inner sep=0pt}}
            \tikzset{signal/.style={-Latex,thick,red!70!black,decorate,decoration={snake,post length=0.2cm}}};

            \node[node,fill=blue!20] (1) at (0, 0) {$L$};
            \node[node,fill=blue!20] (2) at (1.5, -0.867) {$L$};
            \node[node,fill=blue!20] (3) at (0, -1.73) {$L$};
            \node[node,fill=blue!20] (4) at (1.5, -2.6) {$L$};

            \node[label, above=of 1] {$4/3/0$};
            \node[label, above=of 2] {$4/3/0$};
            \node[label, above=of 3] {$4/2/0$};
            \node[label, above=of 4] {$4/1/0$};

            \draw (1) to (2) to (3) to (4);
        \end{tikzpicture}
        \caption{\centering $t = 15$}
    \end{subfigure}
    \hfill
    \begin{subfigure}{0.32\textwidth}
        \centering
        \begin{tikzpicture}
            \tikzset{node/.style={circle,draw,node font=\footnotesize, inner sep=0pt, minimum size=0.6cm}}
            \tikzset{src/.style={node, minimum size=0.5cm}}
            \tikzset{label/.style={node distance=0.06cm, node font=\footnotesize, inner sep=0pt}}
            \tikzset{signal/.style={-Latex,thick,red!70!black,decorate,decoration={snake,post length=0.2cm}}};

            \node[node,fill=blue!20] (1) at (0, 0) {$L$};
            \node[node,fill=blue!20] (2) at (1.5, -0.867) {$L$};
            \node[node,fill=blue!20] (3) at (0, -1.73) {$L$};
            \node[node,fill=blue!20] (4) at (1.5, -2.6) {$L$};

            \foreach \x in {1, 2, 3, 4}
                \node[label, above=of \x] {$4/1/2$};

            \draw (1) to (2) to (3) to (4);
        \end{tikzpicture}
        \caption{\centering $t = 17$}
    \end{subfigure}
    \hfill
    \begin{subfigure}{0.32\textwidth}
        \centering
        \begin{tikzpicture}
            \tikzset{node/.style={circle,draw,node font=\footnotesize, inner sep=0pt, minimum size=0.6cm}}
            \tikzset{src/.style={node, minimum size=0.5cm}}
            \tikzset{label/.style={node distance=0.06cm, node font=\footnotesize, inner sep=0pt}}
            \tikzset{signal/.style={-Latex,thick,red!70!black,decorate,decoration={snake,post length=0.2cm}}};

            \node[node] (1) at (0, 0) {$R$};
            \node[node] (2) at (1.5, -0.867) {$R$};
            \node[node] (3) at (0, -1.73) {$R$};
            \node[node] (4) at (1.5, -2.6) {$R$};

            \foreach \x in {1, 2, 3, 4}
                \node[label, above=of \x] {$4/0/0$};

            \draw (1) to (2) to (3) to (4);
        \end{tikzpicture}
        \caption{\centering $t = 18$, stable if $s = \bot$ persists}
    \end{subfigure}
    \caption{Running example of Algorithm~\ref{alg:deterministic-coloring} as narrated in Appendix~\ref{app:runningexample}.
    For simplicity, this homogeneous instance of adaptive self-organization has signals $\{\bot, \text{\color{red!70!black} red}, \text{\color{blue!70!black} blue}\}$ mapping to node colors {\tikz \node[circle,draw,inner sep=0pt, minimum size=0.2cm] {};}, {\tikz \node[circle,draw,inner sep=0pt, minimum size=0.2cm,fill=red!20] {};}, and {\tikz \node[circle,draw,inner sep=0pt, minimum size=0.2cm,fill=blue!20] {};}, respectively.
    Node states are shown as in-circle labels: $R$ for \reset, $S$ for \set, and $L$ for \locked.
    Nodes with double circles have $\source = \true$.
    Text annotations near nodes represent their $\maxttl / \ttl / \timer$ values.}
    \label{fig:runningexample}
\end{figure}

\figtext~\ref{fig:runningexample} shows a running example of our deterministic algorithm for homogeneous instances (Algorithm~\ref{alg:deterministic-coloring}).
Here, we consider a two-signal, two-color homogeneous instance.
At initialization ($t = 0$), nodes take on their initial values from Table~\ref{tab:algvars}.
When the nodes' \timer\ values reach $0$ in round $0$, all nodes become \locked.
The left node additionally witnesses the red signal and thus adopts the red color and becomes a \source\ for a red broadcast ($t = 1$). 
The signal adversary immediately replaces the red signal with a blue one, but nodes do not re-check what signals they witness until their \timer\ values reach $0$ again (in round $2$).
In the meantime, the red color is broadcast through the network.
Once the blue signal is witnessed and its broadcast is started ($t = 3$), the blue color propagates through the network ($t = 5$).
Without further topological changes, the dynamic network would remain stable.

However, at $t = 6$, the adversary rearranges the network as a path.
This increases the eccentricity of the witness, resulting in round $6$ when the last two nodes of the path do not receive any blue messages, reach $\timer = 0$, and \reset\ ($t = 7$).
This is an example of a ``disruption'' discussed after Theorem~\ref{thm:deterministic}; the blue signal's presence has not changed, but with increased witness eccentricity, nodes may temporarily diverge from the correct color.
In the next round, one of these \reset\ nodes receives a message from a \locked\ node, detecting that an ongoing broadcast attempt did not reach far enough.
It responds by doubling its \maxttl\ value and initiating a new broadcast (rounds $7$--$9$), during which all other nodes learn of the increased \maxttl\ and synchronize their \timer\ values.

In round 10, the nodes' \timer\ values again reach $0$, allowing the first node to once again witness the blue signal and initiate a new corresponding broadcast, which reaches all other nodes by $t = 14$.
Now that $\maxttl \geq n$, even further topological changes cannot disrupt this stable blue configuration as long as the blue signal persists.
However, at $t = 14$, the signal adversary removes the signal.
Without a source to inject new $\ttl = \maxttl$ messages, nodes' \ttl\ values decrement until reaching $0$, causing all nodes to \reset\ ($t = 18$).

\section{Omitted Proofs} \label{app:proofs}

In this appendix, we present complete proofs that were omitted from the main text for clarity or due to space constraints.

\lemttlbound*
\begin{proof}
    Argue by induction on time $t \geq 0$.
    At initialization, $v_0.\ttl = v_0.\maxttl = 2$ for all nodes $v \in V$, as desired (see Table~\ref{tab:algvars}).
    Now suppose the lemma holds for any $t \geq 0$ and consider the execution of Algorithm~\ref{alg:deterministic-coloring} by any node $v$ in round $t$.

    By the induction hypothesis, node $v$ starts round $t$ with values $0 \leq v_t.\ttl \leq v_t.\maxttl$.
    Node $v$ may first update these values in Line~\ref{alg:increase-energy-recv} to obtain $\varline{v}{t}{alg:increase-energy-recv}{\ttl} = 0$, which is strictly less than $\varline{v}{t}{alg:increase-energy-recv}{\maxttl} = \max_{u \in N_t(v) \cup \{v\}}\{u_t.\maxttl\}$ by Lemma~\ref{lem:maxttl-nondecreasing}.
    If $v$ executes Line~\ref{alg:increase-energy-locked}, then $\varline{v}{t}{alg:increase-energy-locked}{\ttl} = \varline{v}{t}{alg:increase-energy-locked}{\maxttl} = 2 \cdot \max_{u \in N_t(v) \cup \{v\}}\{u_t.\maxttl\} > 0$, again by Lemma~\ref{lem:maxttl-nondecreasing}.
    So in any case, the lemma holds through Line~\ref{alg:increase-energy-locked}.

    Next, $v$ may set $v.\ttl \gets \recvttl - 1$ in Line~\ref{alg:update-energy}, where $\recvttl$ is either $u_t.\ttl$ for some $u \in N_t(v)$ or $\varline{v}{t}{alg:increase-energy-locked}{\ttl}$.
    If the former, $\varline{v}{t}{alg:update-energy}{\ttl} = u_t.\ttl - 1 < u_t.\maxttl$ by the induction hypothesis, and $u_t.\maxttl \leq \varline{v}{t}{alg:update-energy}{\maxttl}$ by Lines~\ref{alg:max-recv}--\ref{alg:increase-energy-recv}.
    If the latter, $\varline{v}{t}{alg:update-energy}{\ttl} = \varline{v}{t}{alg:increase-energy-locked}{\ttl} - 1 < \varline{v}{t}{alg:increase-energy-locked}{\maxttl} = \varline{v}{t}{alg:update-energy}{\maxttl}$ by the analysis of the previous paragraph.
    In either case, $\varline{v}{t}{alg:update-energy}{\ttl} < \varline{v}{t}{alg:update-energy}{\maxttl}$.

    To claim the lemma holds through Line~\ref{alg:update-energy}, it remains to show that $\varline{v}{t}{alg:update-energy}{\ttl} \geq 0$.
    Suppose to the contrary that $\varline{v}{t}{alg:update-energy}{\ttl} = \recvttl - 1 < 0$.
    By definition, $\recvttl \geq \varline{v}{t}{alg:choose-max-ttl}{\ttl} = \varline{v}{t}{alg:increase-energy-locked}{\ttl}$.
    So we must have $\varline{v}{t}{alg:increase-energy-locked}{\ttl} < 1$, and since the lemma holds through Line~\ref{alg:increase-energy-locked}, it must actually be $\varline{v}{t}{alg:increase-energy-locked}{\ttl} = 0$.
    This, in turn, is only possible if either $v$ executed Line~\ref{alg:increase-energy-recv} but not Line~\ref{alg:increase-energy-locked} or $v$ did not update any variables in round $t$ by Line~\ref{alg:increase-energy-locked} and $v_t.\ttl < 1$, which the induction hypothesis simplifies to $v_t.\ttl = 0$.
    Both cases imply $\varline{v}{t}{alg:increase-energy-locked}{\state} = \reset$ (the latter uses Lemma~\ref{lem:zero-energy}).
    But we must have $\varline{v}{t}{alg:timer-sync}{\state} \neq \reset$ in order for $v$ to execute Line~\ref{alg:update-energy}, so $v.\state$ must become \set\ in Line~\ref{alg:join-set-nodes}.
    This can only happen if the condition in Line~\ref{alg:cond-set-msgs} is satisfied, implying $v$ received a message from some $u \in N_t(v)$ with $u_t.\state = \set$ and $u_t.\maxttl = \varline{v}{t}{alg:increase-energy-locked}{\maxttl}$.
    These properties make $u_t.\ttl$ a candidate for $\recvttl$; i.e., $\recvttl \geq u_t.\ttl$.
    The induction hypothesis implies $u_t.\ttl \geq 0$, and Lemma~\ref{lem:zero-energy} implies that $u_t.\ttl \neq 0$ since $u_t.\state \neq \reset$.
    Combining these facts yields $\recvttl \geq u_t.\ttl > 0$, contradicting the assumption that $\recvttl - 1 < 0$.
    Thus, the lemma holds through Line~\ref{alg:update-energy}.

    Finally, $v$ may set $v.\ttl \gets v.\maxttl$ in Lines~\ref{alg:locked-bcast-energy} or~\ref{alg:become-set-energy}.
    These updates clearly maintain the lemma, so we conclude that $0 \leq v_{t+1}.\ttl \leq v_{t+1}.\maxttl$, completing the induction.
\end{proof}

\lemsourcettl*
\begin{proof}
    At initialization, $v_0.\source = \true$ and $v_0.\maxttl = v_0.\ttl = 2$ for all nodes $v \in V$, so the lemma holds (see Table~\ref{tab:algvars}).
    Now suppose the lemma holds for any $t \geq 0$ and consider the execution of Algorithm~\ref{alg:deterministic-coloring} by any node $v$ in round $t$.

    Working backwards, node $v$ executes Line~\ref{alg:become-set-signal} if and only if it executes Line~\ref{alg:become-set-energy}.
    So either $v_{t+1}.\source = \varline{v}{t}{alg:become-set-signal}{\source} = \true$ and $v_{t+1}.\ttl = \varline{v}{t}{alg:become-set-energy}{\ttl} = \varline{v}{t}{alg:become-set-energy}{\maxttl} = v_{t+1}.\maxttl$, completing the induction, or $v_{t+1}.* = \varline{v}{t}{alg:locked-bcast-energy}{*}$.
    Note that $\varline{v}{t}{alg:locked-bcast-energy}{\maxttl} = \varline{v}{t}{alg:increase-energy-locked}{\maxttl}$ regardless of how Lines~\ref{alg:timer-increment-cond}--\ref{alg:locked-bcast-energy} are executed, so we need only focus on updates to $v.\source$ and $v.\ttl$ in these lines.
    If these lines are executed (i.e., if $\varline{v}{t}{alg:timer-sync}{\state} \neq \reset$), then either $\varline{v}{t}{alg:locked-bcast-energy}{\source} = \varline{v}{t}{alg:locked-signal-false}{\source} = \true$ and Line~\ref{alg:locked-bcast-energy} is executed, yielding $\varline{v}{t}{alg:locked-bcast-energy}{\ttl} = \varline{v}{t}{alg:locked-bcast-energy}{\maxttl}$, or $\varline{v}{t}{alg:locked-bcast-energy}{\source} = \varline{v}{t}{alg:locked-signal-false}{\source} = \false$ and Line~\ref{alg:update-energy} is executed, yielding $\varline{v}{t}{alg:locked-bcast-energy}{\ttl} = \varline{v}{t}{alg:update-energy}{\ttl} = \recvttl - 1 < \varline{v}{t}{alg:locked-bcast-energy}{\maxttl}$ (by the same argument as in the proof of Lemma~\ref{lem:ttl-bound}).
    Either way, $\varline{v}{t}{alg:locked-bcast-energy}{\source} = \true$ if and only if $\varline{v}{t}{alg:locked-bcast-energy}{\ttl} = \varline{v}{t}{alg:locked-bcast-energy}{\maxttl}$, completing the induction.

    So backtrack further, supposing Lines~\ref{alg:timer-increment-cond}--\ref{alg:locked-bcast-energy} are not executed (i.e., $\varline{v}{t}{alg:timer-sync}{\state} = \reset$).
    Then $\varline{v}{t}{alg:locked-bcast-energy}{*} = \varline{v}{t}{alg:timer-sync}{*}$.
    In fact, the three variables of interest to this lemma are not modified in Lines~\ref{alg:def-set-msgs}--\ref{alg:timer-sync}, so it suffices to consider their values after Line~\ref{alg:increase-energy-locked}.
    If Line~\ref{alg:increase-energy-locked} is executed, then $\varline{v}{t}{alg:increase-energy-locked}{\source} = \true$ and $\varline{v}{t}{alg:increase-energy-locked}{\ttl} = \varline{v}{t}{alg:increase-energy-locked}{\maxttl}$, completing the induction.
    Otherwise, if Line~\ref{alg:increase-energy-locked} is not executed but Line~\ref{alg:increase-energy-recv} is, then $\varline{v}{t}{alg:increase-energy-locked}{\source} = \false$ and $\varline{v}{t}{alg:increase-energy-locked}{\ttl} = 0 < \varline{v}{t}{alg:increase-energy-locked}{\maxttl}$ by Lemma~\ref{lem:maxttl-nondecreasing}, also completing the induction.
    The only remaining possibility is that $v$ does not update any of $v.\source$, $v.\maxttl$, and $v.\ttl$ in round $t$, but then $v_{t+1}.\source = v_t.\source$, $v_{t+1}.\maxttl = v_t.\maxttl$, and $v_{t+1}.\ttl = v_t.\ttl$, so the induction hypothesis completes the induction.
\end{proof}

\lemttlvalue*
\begin{proof}
    If $t = 0$, then $v_0.\ttl = 2$ for all nodes $v \in V$, as desired (see Table~\ref{tab:algvars}).
    If $v_t.\source = \true$, then $v_t.\ttl = v_t.\maxttl$ by Lemma~\ref{lem:source-ttl}, again as desired.

    So consider any node $v$ and time $t > 0$ such that $v_t.\source = \false$; we will show $v_t.\ttl = \max\{0, \max_{u \in M_{t-1}[v]} u_{t-1}.\ttl - 1\}$.
    Consider the execution of Algorithm~\ref{alg:deterministic-coloring} by node $v$ in round $t - 1$.
    We may ignore the $v.\ttl$ updates in Lines~\ref{alg:locked-bcast-energy} and~\ref{alg:become-set-energy}, as their execution implies $v_t.\source = \true$.

    First suppose $\varline{v}{t-1}{alg:timer-sync}{\state} = \reset$.
    Tracing the rest of Algorithm~\ref{alg:deterministic-coloring} from this value (and recalling we may ignore Line~\ref{alg:become-set-energy}) yields $v_t.\state = \reset$, implying $v_t.\ttl = 0$ by Lemma~\ref{lem:zero-energy}.
    Also, $\varline{v}{t-1}{alg:timer-sync}{\state} = \reset$ implies $v_{t-1}.\state = \reset$ or $v$ executes Line~\ref{alg:increase-energy-recv} in round $t-1$, yielding $v_{t-1}.\maxttl < v_t.\maxttl$ by Lemma~\ref{lem:maxttl-nondecreasing} and thus $v \not\in M_{t-1}[v]$.
    Now consider any neighbor $u \in M_{t-1}[v] \setminus \{v\}$.
    By Lines~\ref{alg:max-recv}--\ref{alg:increase-energy-recv}, we must have $u_{t-1}.\maxttl = \varline{v}{t-1}{alg:increase-energy-recv}{\maxttl}$.
    Then $u_{t-1}.\state \neq \locked$, as this would otherwise satisfy the condition on Line~\ref{alg:cond-lock-msgs} and cause $v$ to become \set, contradicting $\varline{v}{t-1}{alg:timer-sync}{\state} = \reset$.
    An analogous argument for Line~\ref{alg:cond-set-msgs} shows $u_{t-1}.\state  \neq \set$.
    Thus, $u_{t-1}.\state = \reset$ for all $u \in M_{t-1}[v]$.
    By Lemma~\ref{lem:zero-energy}, this implies $u_{t-1}.\ttl = 0$ for all $u \in M_{t-1}[v]$, and thus $\max\{0, \max_{u \in M_{t-1}[v]} u_{t-1}.\ttl - 1\} = 0 = v_t.\ttl$.

    Now suppose that $\varline{v}{t-1}{alg:timer-sync}{\state} \neq \reset$.
    Since $v_t.\source = \false$, $v$ must have $v_t.\ttl = \varline{v}{t-1}{alg:update-energy}{\ttl} = \recvttl - 1$, where $\recvttl$ (defined on Line~\ref{alg:choose-max-ttl}) can be written as
    \begin{equation} \label{eq:?}
        \recvttl = \max\left(\{u_{t-1}.\ttl : u \in M_{t-1}[v] \setminus \{v\}\} \cup \{\varline{v}{t-1}{alg:update-energy-cond}{\ttl}\}\right).
    \end{equation}
    This follows from two facts.
    First, although including neighbors $u \in M_{t-1}[v]$ with $u_{t-1}.\state = \reset$ in this definition diverges from Line~\ref{alg:choose-max-ttl} which only considers \locked\ and \set\ neighbors, it does not change the value of $\recvttl$ since any such neighbors have $u_{t-1}.\ttl = 0$ by Lemma~\ref{lem:zero-energy} but $v_t.\ttl = \recvttl - 1$ implies $\recvttl \geq 1$ by Lemma~\ref{lem:ttl-bound}.
    Second, on Line~\ref{alg:choose-max-ttl}, all messages in $L$ (defined on Line~\ref{alg:def-lock-msgs}) and $S$ (defined on Line~\ref{alg:def-set-msgs}) originate from neighbors $u \in N_{t-1}(v)$ with $u_{t-1}.\maxttl = v_t.\maxttl$ that are \locked\ or \set, respectively.
    It is easy to see this is true for $S$, which is defined after any updates to $v.\maxttl$ in round $t-1$.
    It is analogously easy to see for $L$ if $v$ does not execute Line~\ref{alg:increase-energy-locked}.
    But if Line~\ref{alg:increase-energy-locked} is run, $L$ is reset to $\emptyset$, reflecting the fact that no neighbors $u \in N_{t-1}(v)$ have $u_{t-1}.\maxttl = v_t.\maxttl = 2 \cdot \varline{v}{t-1}{alg:increase-energy-recv}{\maxttl}$.
    If such a neighbor existed, then by Line~\ref{alg:increase-energy-recv}, $\varline{v}{t-1}{alg:increase-energy-recv}{\maxttl} \geq u_{t-1}.\maxttl = 2 \cdot \varline{v}{t-1}{alg:increase-energy-recv}{\maxttl}$, contradicting Lemma~\ref{lem:maxttl-nondecreasing} with $\varline{v}{t-1}{alg:increase-energy-recv}{\maxttl} < 0$.
    
    In any case, the above representation of $\recvttl$ faithfully captures its definition in Line~\ref{alg:choose-max-ttl}.
    It remains to show that $v_{t-1}.\ttl = \varline{v}{t-1}{alg:update-energy-cond}{\ttl}$ if and only if $v \in M_{t-1}[v]$, which will further collapse this representation to the form given in the lemma.
    If $v \in M_{t-1}[v]$ (i.e., $v_{t-1}.\maxttl = v_t.\maxttl$), then $v$ must not execute Lines~\ref{alg:increase-energy-recv} or~\ref{alg:increase-energy-locked} in round $t-1$ and thus $v_{t-1}.\ttl = \varline{v}{t-1}{alg:update-energy-cond}{\ttl}$.
    Otherwise, suppose $v \not\in M_{t-1}[v]$, implying $v$ executes one or both of Lines~\ref{alg:increase-energy-recv} and~\ref{alg:increase-energy-locked} in round $t-1$; we will show that in fact, $v$ only executes Line~\ref{alg:increase-energy-recv}.
    Suppose to the contrary $v$ executes Line~\ref{alg:increase-energy-locked} in round $t-1$, yielding $\varline{v}{t-1}{alg:increase-energy-locked}{\source} = \true$, $\varline{v}{t-1}{alg:increase-energy-locked}{\maxttl} \geq 4$ (by Lemma~\ref{lem:maxttl-nondecreasing}), and $\varline{v}{t-1}{alg:increase-energy-locked}{\timer} = 0$.
    In the previous paragraph, we proved that this execution implies there are no neighbors $u \in N_{t-1}(v)$ with $u_{t-1}.\maxttl = v_t.\maxttl = \varline{v}{t-1}{alg:increase-energy-locked}{\maxttl}$; thus, $S = \emptyset$ and $\varline{v}{t-1}{alg:timer-sync}{\timer} = \varline{v}{t-1}{alg:increase-energy-locked}{\timer} = 0$.
    But then $\varline{v}{t-1}{alg:timer-increment}{\timer} = \big(\varline{v}{t-1}{alg:timer-sync}{\timer} + 1\big) \bmod \varline{v}{t-1}{alg:increase-energy-locked}{\maxttl} = 1$, implying that $v.\source$ will not be updated again, yielding $v_t.\source = \varline{v}{t-1}{alg:increase-energy-locked}{\source} = \true$, a contradiction.
    So $v$ does not execute Line~\ref{alg:increase-energy-locked} in round $t-1$ and thus must execute Line~\ref{alg:increase-energy-recv}, yielding $\varline{v}{t-1}{alg:update-energy-cond}{\ttl} = \varline{v}{t-1}{alg:increase-energy-recv}{\ttl} = 0$.
    As we observed before, $\recvttl \geq 1$ and thus including $\varline{v}{t-1}{alg:update-energy-cond}{\ttl} = 0$ does not change $\recvttl$.
    Bringing this all together with $\recvttl \geq 1$, we conclude
    \begin{equation}
        v_t.\ttl = \varline{v}{t-1}{alg:update-energy-cond}{\ttl} = \recvttl - 1 = \underset{u \in M_{t-1}[v]}{\max} u_{t-1}.\ttl - 1 = \max\left\{0, \underset{u \in M_{t-1}[v]}{\max} u_{t-1}.\ttl - 1\right\}.
    \end{equation}

    Therefore, in all cases, the lemma holds.
\end{proof}

\lemttlupdating*
\begin{proof}
    Consider the execution of Algorithm~\ref{alg:deterministic-coloring} by any node $v \in V$ in any round $t \geq 0$ satisfying the conditions of the lemma.
    By supposition, $v$ does not update $v.\maxttl$, so it must not execute Lines~\ref{alg:increase-energy-recv} or~\ref{alg:increase-energy-locked}.
    Similarly, $v$ must not update $v.\ttl$ at Lines~\ref{alg:locked-bcast-energy} or~\ref{alg:become-set-energy}, as this implies $v_{t+1}.\source = \true$, a contradiction.
    So $v$ may only modify $v.\ttl$ at Line~\ref{alg:update-energy}.

    If $v$ does not execute Line~\ref{alg:update-energy}, then $v$ does not update $v.\ttl$ in round $t$ and thus $v_{t+1}.\ttl = v_t.\ttl$.
    This occurs only if $\varline{v}{t}{alg:timer-sync}{\state} = \reset$; it cannot be because $\varline{v}{t}{alg:timer-sync}{\state} \neq \reset$ but $\varline{v}{t}{alg:locked-signal-false}{\source} = \true$, as this would again contradict $v_{t+1}.\source = \false$.
    Since Line~\ref{alg:increase-energy-recv} is not run, $v_t.\state = \varline{v}{t}{alg:timer-sync}{\state} = \reset$.
    By Lemma~\ref{lem:zero-energy}, we have $v_t.\ttl = v_{t+1}.\ttl = 0$.
    Otherwise, $v$ executes Line~\ref{alg:update-energy}.
    This yields $v_{t+1}.\ttl = \varline{v}{t}{alg:update-energy}{\ttl} = \recvttl - 1 \leq e^*(t) - 1$, where the final inequality follows from the definition of $\recvttl$ (Line~\ref{alg:choose-max-ttl}).
    Combining these cases yields $e^*(t+1) \leq \max\{0, e^*(t) - 1\}$.

    To obtain a matching lower bound, first suppose $e^*(t) > 0$ and consider a node $v \in V$ with $v_t.\ttl = e^*(t)$; such a node must exist by definition of $e^*(t)$.
    We have already shown that $v_{t+1}.\ttl \leq e^*(t+1) \leq \max\{0, e^*(t) - 1\} < e^*(t) = v_t.\ttl$, so $v$ must execute Line~\ref{alg:update-energy} in round $t$.
    When it does so, combining the definitions of $e^*(t)$ and $\recvttl$ with the fact that $v_t.\ttl = \varline{v}{t}{alg:locked-signal-false}{\ttl} = e^*(t)$ is a candidate for $\recvttl$ implies $e^*(t) \geq \recvttl \geq v_t.\ttl = e^*(t)$, yielding $v_{t+1}.\ttl = \varline{v}{t}{alg:update-energy}{\ttl} = \recvttl - 1 = e^*(t) - 1$.
    Thus, if $e^*(t) > 0$, we have $e^*(t+1) \geq e^*(t) - 1 = \max\{0, e^*(t) - 1\}$.

    Finally, suppose $e^*(t) = 0$.
    Then by our upper bound, $e^*(t+1) \leq \max\{0, e^*(t) - 1\} = 0$.
    But Lemma~\ref{lem:ttl-bound} implies $e^*(t+1) \geq 0$, so $e^*(t+1) = 0 = \max\{0, e^*(t) - 1\}$.
\end{proof}

\lempersistnosignal*
\begin{proof}
    Let $t_1 = \min\{t \geq 0 : m^*(t) = m\}$ be the earliest time at which any node obtains $\maxttl = m$; note that this implies $t_1 \leq t_0$.
    We prove this lemma via a series of claims.

    \begin{claim} \label{claim:no-signal-fixed-m} 
        For all times $t \in [t_1, t_0 + 3m]$, $m^*(t) = m$.
    \end{claim}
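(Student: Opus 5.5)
The claim is that $m^*(t) = m$ for every $t \in [t_1, t_0 + 3m]$, where $t_1$ is the first time any node holds $\maxttl = m$. I would split the interval into $[t_0, t_0 + 3m]$ and $[t_1, t_0]$. On $[t_0, t_0+3m]$ the equality is exactly the hypothesis of Lemma~\ref{lem:persist-no-signal}, so nothing needs to be shown there. The only work is on $[t_1, t_0]$, and that case is empty unless $t_1 < t_0$.

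For $t \in [t_1, t_0]$ the argument is a sandwich using monotonicity of $m^*$. Lemma~\ref{lem:maxttl-nondecreasing} makes each individual $v_t.\maxttl$ nondecreasing in $t$, so the pointwise maximum $m^*(t) = \max_{v} v_t.\maxttl$ is nondecreasing as well.
\begin{itemize}
    \item \emph{Lower bound.} $t \geq t_1$ gives $m^*(t) \geq m^*(t_1) = m$.
    \item \emph{Upper bound.} $t \leq t_0$ gives $m^*(t) \leq m^*(t_0) = m$.
\end{itemize}
Hence $m^*(t) = m$ on the whole interval.

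Before the sandwich, one should confirm that $t_1$ is well defined and satisfies $t_1 \leq t_0$. The set $\{t \geq 0 : m^*(t) = m\}$ contains $t_0$, since $m = m^*(t_0)$ by definition. It is therefore nonempty, so its minimum $t_1$ exists and is at most $t_0$.

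There is no real obstacle. The one point to check is that monotonicity across rounds holds for the maximum $m^*$ and not only for each node individually. This follows directly from the last inequality of Lemma~\ref{lem:maxttl-nondecreasing}, $v_t.\maxttl \leq v_{t'}.\maxttl$ for $t' > t$, applied to a node attaining the maximum at the earlier time.
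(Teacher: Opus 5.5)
Your proof is correct and is essentially the paper's argument: sandwich $m^*(t)$ between $m^*(t_1) = m$ and an upper endpoint, using the monotonicity from Lemma~\ref{lem:maxttl-nondecreasing}. The paper takes $m^*(t_0 + 3m) = m$ as the upper endpoint for the whole interval at once, so your split into $[t_1, t_0]$ and $[t_0, t_0 + 3m]$ is unnecessary but harmless.
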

    \begin{claimproof}
        Consider any time $t \in [t_1, t_0 + 3m]$.
        By the definition of $t_1$, the supposition that $m^*(t_0 + 3m) = m$, and the fact that \maxttl\ is nondecreasing by Lemma~\ref{lem:maxttl-nondecreasing}, we must have $m = m^*(t_1) \leq m^*(t) \leq m^*(t_0 + 3m) = m$.
    \end{claimproof}

    \begin{claim} \label{claim:no-signal-maxttl}
        For all nodes $v \in V$ and times $t \in [t_1 + m - 1, t_0 + 3m]$, $v_t.\maxttl = m$.
    \end{claim}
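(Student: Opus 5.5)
We argue by a flooding argument on the maximum \maxttl\ value, using that no value larger than $m$ exists anywhere in $[t_1, t_0+3m]$ (Claim~\ref{claim:no-signal-fixed-m}).

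\textbf{Monotone spread.} Define $I_t = \{v \in V : v_t.\maxttl = m\}$ for $t \in [t_1, t_0+3m]$. By the definition of $t_1$, $I_{t_1} \neq \emptyset$. By Lemma~\ref{lem:maxttl-nondecreasing}, \maxttl\ never decreases, and by Claim~\ref{claim:no-signal-fixed-m} no node ever exceeds $m$ in this interval. Together these give $I_t \subseteq I_{t+1}$.

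\textbf{Growth each round.} Whenever $I_t \neq V$, 1-interval connectivity gives an edge of $G_t$ from some $u \in I_t$ to some $v \notin I_t$. In round $t$, $v$ receives $u$'s message, which carries $u_t.\maxttl = m$ in every message type (Lines~\ref{alg:msg-reset}--\ref{alg:msg-locked}). So $\maxrecv = m > v_t.\maxttl$ on Line~\ref{alg:max-recv}, and $v$ executes Line~\ref{alg:increase-energy-recv}, setting $v.\maxttl \gets m$.

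It remains to check that $v$ does not then double on Line~\ref{alg:increase-energy-locked}. Doubling would give $v_{t+1}.\maxttl = 2m > m$, contradicting Claim~\ref{claim:no-signal-fixed-m}, provided $t+1 \le t_0+3m$. This is the one delicate point. We handle it by restricting attention to rounds $t < t_0+3m$; since the claim only concerns times up to $t_0+3m$, this costs nothing. Hence $v_{t+1}.\maxttl = m$ and $|I_{t+1}| \ge |I_t| + 1$.

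\textbf{Counting rounds.} Since $|I_{t_1}| \ge 1$, after at most $n-1$ rounds we have $I_t = V$, that is, by time $t_1 + n - 1$. We need $n - 1 \le m - 1$, i.e., $m \ge n$, to reach the stated bound $t_1 + m - 1$, and this is the expected main obstacle. If $m < n$, the plain counting argument does not give $t_1+m-1$. In that case I would instead exploit the \ttl/\locked\ mechanism: a node lacking $m$ that stays away from the $I_t$ frontier long enough causes a \locked--non-\locked\ adjacency at \maxttl\ $m$. That adjacency triggers Line~\ref{alg:increase-energy-locked} and forces $m^*$ above $m$, contradicting Claim~\ref{claim:no-signal-fixed-m}.

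Concretely, the contradiction step would go as follows. Nodes with $\maxttl = m$ become \locked\ once their timer wraps, within $m$ rounds by Lemma~\ref{lem:timer-lt-maxttl}. From then on, the boundary between $I_t$ and $V\setminus I_t$ stays non-empty, so some \locked\ $m$-node would have to meet a non-$m$ node. That node adopts $m$ via Line~\ref{alg:increase-energy-recv} and then sees $L\neq\emptyset$ while not \locked, which triggers doubling. I would first check whether this case analysis actually bounds the spreading time by $m-1$. Failing that, I would accept the bound $\min(n,m)-1$ and verify that it suffices for the downstream claims.

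Finally, for any $t \in [t_1+m-1, t_0+3m]$, every node has $\maxttl = m$ by monotonicity of $I_t$ and the cap from Claim~\ref{claim:no-signal-fixed-m}.
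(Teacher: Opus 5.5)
Your flooding step is fine, but it only shows that every node holds $\maxttl = m$ by time $t_1 + n - 1$. Nothing forces $m \ge n$: the lemma's hypothesis is consistent with $m \ll n$ (for instance, with no signal ever, $m^*$ stays $2$ for every $n$). So your main route does not prove the claim. When $m < n$ the claim holds only because of the hypothesis that $m^*$ stays $m$ through $t_0+3m$. If $m$ had not reached everyone in time, a doubling would have occurred. You gesture at this in the fallback but do not carry it out. Weakening the bound is also not an option. Claim~\ref{claim:no-source} needs $v_{t_0+m-1}.\maxttl = m$, and Claim~\ref{claim:all-reset} applies Lemma~\ref{lem:ttl-updating} on $[t_0+2m-1, t_0+3m-1]$. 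Neither follows from a $t_1+n-1$ bound once $n$ is much larger than $m$. (Also, $\min(n,m)-1$ would be a \emph{stronger} bound than the claim, not a weaker one.)

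The fallback also has concrete errors.

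\textbf{Which node locks.} Not every node holding $m$ locks within $m$ rounds: a node that adopts $m$ via Line~\ref{alg:increase-energy-recv} is \reset\ with its \timer\ frozen at $0$. You must anchor on a node $u$ with $u_{t_1}.\maxttl = m$. If $m = 2$, then $t_1 = 0$ and everyone starts with $m$, so the claim is trivial. Otherwise no node held $m$ earlier, so $u$ got it by doubling on Line~\ref{alg:increase-energy-locked} in round $t_1-1$. Hence $u_{t_1}$ is \set\ with $\source = \true$ and $\timer = 1$ (Claim~\ref{claim:first-max-ttl}). From there:
\begin{itemize}
\item $u$ cannot double again (Claim~\ref{claim:no-signal-fixed-m});
\item its \ttl\ is $m$ (Lemma~\ref{lem:source-ttl}), so it never falls to \reset, even in the round it locks;
\item its \timer\ strictly increases until it wraps.
\end{itemize}
So $u$ is \locked\ with $\maxttl = m$ at some $t' \le t_1 + m - 1$. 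The starting value $1$ is what removes the off-by-one in your ``within $m$ rounds.''

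\textbf{Which frontier to use.} The frontier of $I_t$ is the wrong one: a \locked\ $m$-node need not touch any node outside $I_t$. Look instead at the set $L_t$ of \locked\ nodes holding $m$. Any edge from $L_t$ to a node outside it makes that neighbor execute Line~\ref{alg:increase-energy-locked} and reach $2m$ at time $t+1$. This includes a \set\ or \reset\ neighbor that already holds $m$. That contradicts Claim~\ref{claim:no-signal-fixed-m} whenever $t+1 \le t_0+3m$. Since $t'+1 \le t_0+m$, you get $L_{t'} = V$. So every node already holds $m$ at $t'$, with no spreading-time bound needed, and monotonicity plus the cap finish the claim.

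This is exactly the paper's proof. It reuses Claim~\ref{claim:all-maxttl}, which is built from Claims~\ref{claim:first-max-ttl}, \ref{claim:set-to-locked}, and~\ref{claim:universal-lock}, and only needs $m^*$ fixed on the window $[t_1, t_1+m]$.
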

    \begin{claimproof}
        This is the same argument as in Claim~\ref{claim:all-maxttl}; see that claim for a full proof.
    \end{claimproof}

    \begin{claim} \label{claim:no-source}
        For all nodes $v \in V$ and times $t \in [t_0 + 2m - 1, t_0 + 3m]$, $v_t.\source = \false$.
    \end{claim}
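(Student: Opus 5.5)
The plan is to enumerate every line of Algorithm~\ref{alg:deterministic-coloring} that can set $\source \gets \true$ and show that none of them fires after a suitable time. We then show that any node still carrying $\source = \true$ must reach $\timer = 0$ within $m$ rounds. At that point it re-evaluates its \source\ value and, since nothing is witnessed, sets it to \false.

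Only three lines assign $\source \gets \true$: Line~\ref{alg:increase-energy-locked} (doubling), Line~\ref{alg:locked-signal-true}, and Line~\ref{alg:become-set-signal}. The last two require $v$ to witness a signal $s \neq \bot$ in that round. Because $s_{t_0} = \bot$ is persistent, they never execute in any round $t \geq t_0$. Line~\ref{alg:increase-energy-locked} sets $\maxttl$ to $2\cdot\maxttl$. By Claim~\ref{claim:no-signal-maxttl}, every node has $\maxttl = m$ at all times in $[t_1+m-1, t_0+3m]$. So a doubling in any round $t$ with $t, t+1$ in that interval would give some node $\maxttl = 2m$ at time $t+1$, contradicting Claim~\ref{claim:no-signal-fixed-m}. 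Let $\tau = \max\{t_0, t_1 + m - 1\}$; since $t_1 \leq t_0$, we have $\tau \leq t_0 + m - 1$. We conclude that no node switches \source\ from \false\ to \true\ in any round in $[\tau, t_0+3m)$.

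Next, fix a node $v$ with $v_t.\source = \true$ for some $t \in [\tau, t_0+3m)$.
\begin{itemize}
\item By Lemma~\ref{lem:source-ttl}, $v_t.\ttl = m > 0$, so $v_t.\state \neq \reset$ by Lemma~\ref{lem:zero-energy}.
\item Line~\ref{alg:increase-energy-recv} does not fire, since $\maxttl$ is constant in this window. So $v$ stays non-\reset\ through Line~\ref{alg:timer-sync} and executes the increment on Line~\ref{alg:timer-increment}.
\item If the timer becomes $0$, Line~\ref{alg:locked-signal-false} sets $\source \gets \false$, since $v$ witnesses nothing.
\item Otherwise \source\ stays \true\ and Line~\ref{alg:locked-bcast-energy} keeps $\ttl = m$, so the same reasoning applies in the next round.
\end{itemize}
Thus while $v.\source = \true$, each round the timer either wraps to $0$ or strictly increases: Line~\ref{alg:timer-sync} only takes a maximum with values below $m$ (Lemma~\ref{lem:timer-lt-maxttl}), and the increment then adds $1$. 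Lines~\ref{alg:increase-energy-locked} and~\ref{alg:increase-energy-recv}, the only other lines that change the timer here, are excluded. Since $\timer < m$, the timer reaches $0$ within at most $m$ consecutive rounds starting from $\tau$, i.e., during one of the rounds $\tau, \ldots, \tau + m - 1$. Consequently $v_{t'}.\source = \false$ for some $t' \leq \tau + m \leq t_0 + 2m - 1$.

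Combining this with the first step, \source\ stays \false\ for all times up to $t_0 + 3m$, which proves the claim.

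The main obstacle is the bookkeeping at the interval endpoints. The "no doubling" argument needs both $t$ and $t+1$ to lie in the window of Claim~\ref{claim:no-signal-maxttl}, which covers all rounds up to $t_0+3m-1$ and so suffices for times up to $t_0+3m$. We must also handle the monotonic-progress argument for the timer carefully, because synchronization can cause jumps. The key point is that a jump never moves the timer backward without passing through $0$, and passing through $0$ is exactly the event that clears \source.
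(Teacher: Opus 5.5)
Your proposal is correct and follows essentially the paper's argument. You rule out every line that can set $\source \gets \true$: Lines~\ref{alg:locked-signal-true} and~\ref{alg:become-set-signal} need a witness, and the doubling on Line~\ref{alg:increase-energy-locked} would contradict $m^*(t+1)=m$. You then use the strictly increasing \timer\ of a source node to force a wrap to $0$ within $m$ rounds, at which point Line~\ref{alg:locked-signal-false} clears \source. The differences are cosmetic: you anchor at $\tau=\max\{t_0,t_1+m-1\}\le t_0+m-1$ instead of $t_0+m-1$, and you prove the ``never switches back to \true'' fact globally up front rather than per node afterward.
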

    \begin{claimproof}
        Consider any node $v \in V$.
        We will first show that there is a time $t' \in [t_0 + m - 1, t_0 + 2m - 1]$ at which $v_{t'}.\source = \false$.
        This is trivially satisfied if $v_{t_0 + m - 1}.\source = \false$, so suppose $v_{t_0 + m - 1}.\source = \true$ and consider the execution of Algorithm~\ref{alg:deterministic-coloring} by node $v$ in round $t_0 + m - 1$.
        By Claim~\ref{claim:no-signal-maxttl}, $v_{t_0 + m - 1}.\maxttl = m$.
        This gives two insights: first, $v$ must not execute Lines~\ref{alg:increase-energy-recv} or~\ref{alg:increase-energy-locked} since this would contradict Claim~\ref{claim:no-signal-fixed-m}; second, we must have $v_{t_0 + m - 1}.\ttl = m > 0$ by Lemmas~\ref{lem:maxttl-nondecreasing} and~\ref{lem:source-ttl}, which further implies $v_{t_0 + m - 1}.\state \neq \reset$ by Lemma~\ref{lem:zero-energy}.
        So $v$ must increment $v.\timer$ in Line~\ref{alg:timer-increment}, yielding either $\varline{v}{t_0 + m - 1}{alg:timer-increment}{\timer} = \varline{v}{t_0 + m - 1}{alg:timer-sync}{\timer} + 1 > v_{t_0 + m - 1}.\timer$ or $\varline{v}{t_0 + m - 1}{alg:timer-increment}{\timer} = 0$.
        In the former case, tracing Algorithm~\ref{alg:deterministic-coloring} shows $v$ will make no other updates to any of its variables in this round, yielding $v_{t_0 + m}.\source = \true$, $v_{t_0 + m}.\maxttl = m$, and $v_{t_0 + m}.\timer > v_{t_0 + m - 1}.\timer$.
        But then this argument can be applied to $v$ again to further increase $v.\timer$ in the next round and likewise in any consecutive future round reached from this former case.
        By Lemma~\ref{lem:timer-lt-maxttl}, $v.\timer$ cannot exceed $m - 1$; thus, this argument can only be repeatedly applied at most $m - 1$ times before reaching a round $t' - 1 \leq t_0 + m - 1 + (m - 1) = t_0 + 2m - 2$ in which $\varline{v}{t'-1}{alg:timer-increment}{\timer} = 0$.
        This satisfies the condition on Line~\ref{alg:timer-zero-cond}, and since $t' - 1 \geq t_0$, $v$ will not witness a signal to satisfy the condition on Line~\ref{alg:locked-check-signal}.
        This yields $\varline{v}{t'-1}{alg:locked-signal-false}{\source} = \false$, which is not updated again in round $t' - 1$, yielding $v_{t'}.\source = \false$ at time $t' \leq t_0 + 2m - 1$.

        We conclude by arguing that $v_t.\source$ remains \false\ for all times $t \in [t', t_0 + 3m]$, where $t' \in [t_0 + m - 1, t_0 + 2m - 1]$ is the time identified above at which $v_{t'}.\source = \false$.
        Again, Claim~\ref{claim:no-signal-maxttl} implies $v_t.\maxttl = m$, ensuring $v$ does not execute Lines~\ref{alg:increase-energy-recv} or~\ref{alg:increase-energy-locked} to avoid contradicting Claim~\ref{claim:no-signal-fixed-m}.
        Also, $t \geq t_0$, so $v$ will not witness a signal to satisfy the conditions on Lines~\ref{alg:locked-check-signal} or~\ref{alg:become-set-cond}.
        Thus, $v$ cannot execute any lines that would make $v.\source = \true$, so $v_{t+1}.\source = v_t.\source = \false$, as desired.
        This argument can be repeated in any round $t \geq t'$ where Claims~\ref{claim:no-signal-fixed-m} and~\ref{claim:no-signal-maxttl} apply which are, at minimum, $[t_0 + 2m - 1, t_0 + 3m]$.
    \end{claimproof}

    \begin{claim} \label{claim:all-reset}
        For all nodes $v \in V$ and times $t \geq t_0 + 3m$, $v_t.\state = \reset$.
    \end{claim}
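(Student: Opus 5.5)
The plan is to apply Lemma~\ref{lem:ttl-updating} on the tail of the interval covered by Claims~\ref{claim:no-signal-fixed-m}--\ref{claim:no-source}. This drives every node's \ttl\ to $0$ by time $t_0 + 3m$, so Lemma~\ref{lem:zero-energy} makes every node \reset. A separate forward induction then shows that an all-\reset\ configuration with no signal is a fixed point of Algorithm~\ref{alg:deterministic-coloring}.

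\emph{Step 1: the time $t_0 + 3m$ itself.} Consider any $t \in [t_0 + 2m - 1, t_0 + 3m - 1]$. Claim~\ref{claim:no-signal-maxttl} gives $v_t.\maxttl = m = m^*(t) = m^*(t+1)$ for all $v$, using Claim~\ref{claim:no-signal-fixed-m}. Claim~\ref{claim:no-source} gives $v_t.\source = v_{t+1}.\source = \false$ for all $v$. So the hypotheses of Lemma~\ref{lem:ttl-updating} hold, and $e^*(t+1) = \max\{0, e^*(t) - 1\}$. By Lemma~\ref{lem:ttl-bound}, $e^*(t_0 + 2m - 1) \leq m$. Iterating over these $m+1 \geq m$ rounds gives $e^*(t_0 + 3m) = 0$. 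Since every node has $\maxttl = m = m^*(t_0+3m)$ at that time, $e^*$ is the maximum \ttl\ over \emph{all} nodes. Hence $v_{t_0+3m}.\ttl = 0$ for every $v$, and Lemma~\ref{lem:zero-energy} yields $v_{t_0+3m}.\state = \reset$ for all $v$.

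\emph{Step 2: persistence for $t > t_0 + 3m$.} Claims~\ref{claim:no-signal-fixed-m}--\ref{claim:no-source} stop at $t_0 + 3m$, so I cannot reuse them beyond it. Instead I will induct on $t \geq t_0 + 3m$ with the hypothesis that every node has $\state = \reset$, $\maxttl = m$, $\ttl = 0$ and $\timer = 0$. The \timer\ part follows because nodes entering \reset\ via Line~\ref{alg:no-energy-timer} or Line~\ref{alg:increase-energy-recv} have $\timer = 0$; I need to check this for $t_0 + 3m$ as well, or simply drop \timer\ from the hypothesis, since \reset\ nodes never touch it before Line~\ref{alg:become-set-cond}.

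Under the hypothesis, every node sends only $\msg{reset}{m}$. Then $\maxrecv = m$, so Line~\ref{alg:cond-max-recv} fails. The sets $L$ and $S$ are empty, so Lines~\ref{alg:increase-energy-locked}--\ref{alg:timer-sync} are skipped. The state is \reset, so Lines~\ref{alg:timer-increment-cond}--\ref{alg:locked-bcast-energy} are skipped. Finally, $s_t = \bot$ because $t \geq t_0$, so no node witnesses a signal and Line~\ref{alg:become-set-cond} fails. Thus every variable is unchanged, which closes the induction.

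\emph{Main obstacle.} The delicate point is the index bookkeeping in Step~1. I must make sure that Lemma~\ref{lem:ttl-updating} is invoked only at times $t$ for which both $t$ and $t+1$ lie in the range of Claim~\ref{claim:no-source}, and that at least $m$ decrements occur. The range $[t_0+2m-1, t_0+3m-1]$ satisfies this. Step~2 is a routine trace of the algorithm.
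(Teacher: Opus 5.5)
Your proposal is correct and matches the paper's proof essentially step for step. The paper likewise applies Lemma~\ref{lem:ttl-updating} to every round in $[t_0+2m-1, t_0+3m-1]$, justified by Claims~\ref{claim:no-signal-fixed-m}--\ref{claim:no-source}, to force $e^*(t_0+3m)=0$; it then uses Lemmas~\ref{lem:ttl-bound} and~\ref{lem:zero-energy} to make every node \reset\ at $t_0+3m$, and inducts forward on the invariant that all nodes are \reset\ with $\maxttl = m$, exactly as in your Step~2. The \timer\ and \ttl\ components of your inductive hypothesis are not needed, since the paper carries only \state\ and \maxttl, but including them is harmless because no variable changes in those rounds.
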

    \begin{claimproof}
        By Claims~\ref{claim:no-signal-fixed-m}--\ref{claim:no-source}, Lemma~\ref{lem:ttl-updating} applies to all times $t' \in [t_0 + 2m - 1, t_0 + 3m - 1]$.
        Successive applications of this lemma yields $e^*(t_0 + 2m + i) = \max\{0, e^*(t_0 + 2m - 1) - (i + 1)\}$ for all $i \in [0, m]$.
        By Claim~\ref{claim:no-signal-fixed-m}, Lemma~\ref{lem:ttl-bound}, and the definition of $e^*(\cdot)$, we have $e^*(t') \leq m^*(t') = m$ for all $t' \in [t_1, t_0 + 3m]$.
        Thus,
        \begin{equation}
            e^*(t_0 + 3m) = \max\{0, e^*(t_0 + 2m - 1) - m - 1\} \leq \max\{0, m - m - 1\} = 0.
        \end{equation}
        Combined with Lemma~\ref{lem:ttl-bound}, this implies $0 \leq v_{t_0 + 3m}.\ttl \leq e^*(t_0 + 3m) = 0$ for all nodes $v \in V$.
        Then, by Lemma~\ref{lem:zero-energy}, $v_{t_0 + 3m}.\state = \reset$ for all nodes $v \in V$.

        Now argue by induction on time $t \geq t_0 + 3m$ that $v_t.\state = \reset$ and $v_t.\maxttl = m$ for all nodes $v \in V$.
        The base case of $t = t_0 + 3m$ follows from the argument in the previous paragraph and Claim~\ref{claim:no-signal-maxttl}.
        So suppose the induction holds through time $t \geq t_0 + 3m$ and consider the execution of Algorithm~\ref{alg:deterministic-coloring} by any node $v$ in round $t$.
        By induction, all nodes have $\maxttl = m$ at time $t$, so $v$ does not run Line~\ref{alg:increase-energy-recv}.
        Also by induction, all nodes are \reset\ at time $t$, so $v$ does not run Lines~\ref{alg:increase-energy-locked} or~\ref{alg:join-set-nodes}.
        This yields $\varline{v}{t}{alg:locked-bcast-energy}{\state} = \varline{v}{t}{alg:timer-sync}{\state} = v_t.\state = \reset$ and $\varline{v}{t}{alg:locked-bcast-energy}{\maxttl} = v_t.\maxttl = m$.
        But $t \geq t_0$, so $v$ does not witness a signal and the condition on Line~\ref{alg:become-set-cond} is not satisfied.
        Thus, no other updates are performed and $v_{t+1}.\state = \reset$ and $v_{t+1}.\maxttl = m$, as desired.
    \end{claimproof}

    \begin{claim} \label{claim:reset-color}
        For all nodes $v \in V$ and times $t \geq 0$, if $v_t.\state = \reset$, then $v_t.\algcolor = c_\bot$.
    \end{claim}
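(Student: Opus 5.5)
This invariant is a line-by-line induction on time $t \geq 0$, in the style of the proof of Lemma~\ref{lem:zero-energy}. At initialization every node has $\state = \set \neq \reset$, so the implication holds vacuously; it also holds because $\algcolor = c_\bot$ initially. For the inductive step, fix a node $v$ and a round $t$, assume the claim at time $t$, and show that $v_{t+1}.\state = \reset$ implies $v_{t+1}.\algcolor = c_\bot$. The key observation is that every line of Algorithm~\ref{alg:deterministic-coloring} that assigns $\state \gets \reset$ (Lines~\ref{alg:increase-energy-recv} and~\ref{alg:no-energy-state}) also assigns $\algcolor \gets c_\bot$ in the same block (Lines~\ref{alg:increase-energy-recv} and~\ref{alg:no-energy-color}). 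So it suffices to check that no later line in the round changes \algcolor\ while leaving $\state = \reset$.

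I will split the argument by the value of \state\ after Line~\ref{alg:timer-sync}. If $\varline{v}{t}{alg:timer-sync}{\state} = \reset$, then either $v$ executed Line~\ref{alg:increase-energy-recv}, which set the color to $c_\bot$, or $v$ modified nothing through Line~\ref{alg:timer-sync} and $v_t.\state = \reset$, in which case the induction hypothesis gives color $c_\bot$. Lines~\ref{alg:def-lock-msgs}--\ref{alg:timer-sync} cannot leave the state at \reset\ while changing the color. Lines~\ref{alg:timer-increment-cond}--\ref{alg:locked-bcast-energy} are skipped, and Line~\ref{alg:become-set-cond} either sets $\state = \set$, making the claim vacuous, or leaves everything unchanged. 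If instead $\varline{v}{t}{alg:timer-sync}{\state} \neq \reset$, then \state\ can only become \reset\ at Line~\ref{alg:no-energy-state}, which occurs together with Line~\ref{alg:no-energy-color}, and the only later color updates are on Line~\ref{alg:become-set-color}, which always accompanies $\state \gets \set$. If \state\ never becomes \reset\ in this round, the claim is vacuous at time $t+1$.

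The main obstacle, if any, is simply the bookkeeping. I need to confirm that Lines~\ref{alg:locked-new-color} and~\ref{alg:update-color} can only execute when the state is non-\reset\ and remains so at the end of the round. Line~\ref{alg:locked-new-color} sets \locked, so the state is non-\reset. For Line~\ref{alg:update-color}, the branch is taken only when $\ttl \neq 0$, so the state is not changed to \reset\ there. No other line can change the state to \reset\ afterward, so the color updates made in those lines never coexist with a final \reset\ state.
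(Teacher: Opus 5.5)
Your proof is correct and is essentially the paper's line-by-line induction on time. The only difference is that the paper first notes every \algcolor\ update is paired with a matching \signal\ update and proves the slightly stronger invariant that \reset\ nodes have $\signal = \bot$, which is the form that carries over to Algorithm~\ref{alg:randomized} where the reset color is sampled; you track \algcolor\ directly, and your case split on \state\ after Line~\ref{alg:timer-sync} (versus the paper's split on whether Line~\ref{alg:no-energy-state} or Line~\ref{alg:increase-energy-recv} executes) is cosmetic.
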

    \begin{claimproof}
        Observe that every update to node's \algcolor\ (Lines~\ref{alg:increase-energy-recv}, \ref{alg:locked-new-color}, \ref{alg:no-energy-color}, \ref{alg:update-color}, and \ref{alg:become-set-color}) is accompanied by a corresponding update to its \signal\ (Lines~\ref{alg:increase-energy-recv}, \ref{alg:locked-new-signal}, \ref{alg:no-energy-signal}, \ref{alg:update-realsignal}, and~\ref{alg:become-set-realsignal}, respectively).
        So it suffices to argue by induction on time $t \geq 0$ that for any node $v \in V$, if $v_t.\state = \reset$, then $v_t.\signal = \bot$.
        At time $t = 0$, initialization sets $v_0.\state = \set \neq \reset$ for all nodes $v \in V$, so the claim holds vacuously (see Table~\ref{tab:algvars}).

        Now suppose the claim holds for any $t \geq 0$ and consider the execution of Algorithm~\ref{alg:deterministic-coloring} by any node $v$ in round $t$ resulting in $v_{t+1}.\state = \reset$.
        If $v$ executes Line~\ref{alg:no-energy-state} which yields $\varline{v}{t}{alg:locked-bcast-energy}{\state} = \reset$, then it also executes Line~\ref{alg:no-energy-signal} which yields $\varline{v}{t}{alg:locked-bcast-energy}{\signal} = \bot$.
        Since $v$ must remain \reset\ to satisfy the supposition that $v_{t+1}.\state = \reset$, it does not make any further updates to $v.\signal$ or $v.\state$, yielding $v_{t+1}.\signal = \bot$, as desired.

        Otherwise, if $v$ does not execute Line~\ref{alg:no-energy-state} but does execute Line~\ref{alg:increase-energy-recv} in round $t$, then we have $\varline{v}{t}{alg:increase-energy-recv}{\signal} = \bot$ and $\varline{v}{t}{alg:increase-energy-recv}{\state} = \reset$.
        It cannot execute Lines~\ref{alg:increase-energy-locked} or~\ref{alg:join-set-nodes}, as this would make it impossible to reach $v_{t+1}.\state = \reset$ without executing Line~\ref{alg:no-energy-state}, contradicting our supposition.
        Thus, $\varline{v}{t}{alg:timer-sync}{\state} = \reset$.
        This does not satisfy the condition on Line~\ref{alg:timer-increment-cond}, so $v$ does not update $v.\signal$ on Lines~\ref{alg:locked-new-signal}, \ref{alg:no-energy-signal}, or~\ref{alg:update-realsignal}.
        This yields $\varline{v}{t}{alg:locked-bcast-energy}{\signal} = \bot$ and $\varline{v}{t}{alg:locked-bcast-energy}{\state} = \reset$, which extends to $v_{t+1}.\signal = \bot$ as in the previous paragraph.

        The only remaining way to reach $v_{t+1}.\state = \reset$ is if $v_t.\state = \reset$ and $v$ does not update $v.\state$ in round $t$.
        By induction, $v_t.\signal = \bot$.
        Since $v$ does not update its \state\ in round $t$, it does not execute Line~\ref{alg:increase-energy-recv}, yielding $\varline{v}{t}{alg:increase-energy-recv}{\signal} = v_t.\signal = \bot$ and $\varline{v}{t}{alg:increase-energy-recv}{\state} = v_t.\state = \reset$.
        The remainder of round $t$ then progresses as in the previous paragraph, yielding $v_{t+1}.\signal = \bot$, as desired.
    \end{claimproof}

    Claims~\ref{claim:all-reset} and~\ref{claim:reset-color} together imply that $v_t.\algcolor = c_\bot$ for all nodes $v \in V$ and times $t' \geq t_0 + 3m$.
\end{proof}

\lempersistsetcolor*
\begin{proof}
    We prove this lemma via a series of claims.
    Our first two claims characterize the maximum \maxttl\ value and the type of nodes that hold it w.r.t.\ $t_1 = \min\{t \geq 0 : m^*(t) = m\}$, the earliest time at which any node obtains $\maxttl = m$.
    Note that this implies $t_1 \leq t_0$.

    \begin{claim} \label{claim:fixed-max-ttl}
        For all times $t \geq t_1$, $m^*(t) = m$.
    \end{claim}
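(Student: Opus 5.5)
The statement to prove is Claim~\ref{claim:fixed-max-ttl}: $m^*(t) = m$ for all $t \geq t_1$. This is essentially the same sandwich argument as in Claim~\ref{claim:no-signal-fixed-m}, except that the upper time limit is now unbounded.

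Fix any time $t \geq t_1$ and split into two cases according to whether $t < t_0$ or $t \geq t_0$.

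\emph{Case $t \geq t_0$.} The claim is immediate from the hypothesis of Lemma~\ref{lem:persist-set-color}, which states that $m^*(t) = m$ for all $t \geq t_0$.

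\emph{Case $t_1 \leq t < t_0$.} Here we use monotonicity. By Lemma~\ref{lem:maxttl-nondecreasing}, each node's \maxttl\ is nondecreasing in time. Since $m^*$ is a maximum over the fixed node set $V$ of nondecreasing functions, $m^*$ is itself nondecreasing. The definition of $t_1$ gives $m^*(t_1) = m$, and the hypothesis gives $m^*(t_0) = m$. Hence
\[
m = m^*(t_1) \leq m^*(t) \leq m^*(t_0) = m.
\]

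We should also check that $t_1$ is well defined and satisfies $t_1 \leq t_0$. It is well defined because $m^*(t_0) = m$, so the set $\{t \geq 0 : m^*(t) = m\}$ is nonempty. It satisfies $t_1 \leq t_0$ because $t_0$ itself belongs to that set, so its minimum is at most $t_0$.

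There is no real obstacle in this claim. The only point needing care is stating explicitly that $m^*$ is nondecreasing, since Lemma~\ref{lem:maxttl-nondecreasing} is phrased per node rather than for the maximum.
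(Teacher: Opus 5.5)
Your proof is correct and follows the paper's argument: the case $t \geq t_0$ is the lemma's hypothesis, and for $t \in [t_1, t_0)$ you bound $m^*(t)$ between $m^*(t_1) = m$ and $m^*(t_0) = m$ using the monotonicity from Lemma~\ref{lem:maxttl-nondecreasing}. Your extra checks, that $m^*$ inherits monotonicity from the per-node statement and that $t_1$ is well defined with $t_1 \leq t_0$, are small details the paper leaves implicit.
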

    \begin{claimproof}
        We have $m^*(t) = m$ for all $t \geq t_0$ by supposition, so consider any $t \in [t_1, t_0)$.
        By the definition of $t_1$ and the fact that \maxttl\ is nondecreasing (Lemma~\ref{lem:maxttl-nondecreasing}), we must have $m = m^*(t_1) \leq m^*(t) \leq m^*(t_0) = m$.
    \end{claimproof}

    \begin{claim} \label{claim:first-max-ttl}
        For all nodes $v \in V$, if $v_{t_1}.\maxttl = m$, then $v_{t_1}.\state = \set$, $v_{t_1}.\source = \true$, and $v_{t_1}.\timer = 1$.
    \end{claim}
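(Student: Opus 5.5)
We argue by cases on whether $t_1 = 0$, tracing one round of Algorithm~\ref{alg:deterministic-coloring} in the nontrivial case. In both cases we show that the only way for a node to hold the value $m$ at time $t_1$ is to have just started a fresh broadcast.

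\emph{Case $t_1 = 0$.} Every node is initialized with $\maxttl = 2$, so $m = 2$. Table~\ref{tab:algvars} gives every node $\state = \set$, $\source = \true$, and $\timer = 1$. The claim holds immediately.

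\emph{Case $t_1 > 0$.} By minimality of $t_1$, every node $u$ has $u_{t_1 - 1}.\maxttl \le m^*(t_1 - 1) < m$. Fix a node $v$ with $v_{t_1}.\maxttl = m$ and trace its execution in round $t_1 - 1$.
\begin{itemize}
\item \emph{Line~\ref{alg:increase-energy-recv} cannot produce $m$.} It sets $v.\maxttl$ to the largest value received from neighbors, which is at most $m^*(t_1-1) < m$. So after Line~\ref{alg:increase-energy-recv} we still have $v.\maxttl < m$.
\item \emph{Hence Line~\ref{alg:increase-energy-locked} must execute.} That line is the only other update to \maxttl, and it doubles the current value. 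By Lemma~\ref{lem:maxttl-nondecreasing}, \maxttl\ is a power of $2$, so the value before doubling is $m/2$ and $m \ge 4$.
\item \emph{Its effect.} This line sets $\state = \set$, $\source = \true$, $\ttl = \maxttl = m$, $\timer = 0$, and $L = \emptyset$.
\item \emph{Lines~\ref{alg:def-set-msgs}--\ref{alg:timer-sync}.} The set $S$ contains only \msg{set}{} messages with $\maxttl = m$. Every neighbor has $\maxttl < m$ at time $t_1 - 1$, so $S = \emptyset$. The timer stays $0$ and the state stays \set.
\item \emph{Line~\ref{alg:timer-increment}.} Since $\state \neq \reset$, the timer is incremented to $(0+1) \bmod m = 1$, using $m \ge 4$. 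The test on Line~\ref{alg:timer-zero-cond} fails, so $v$ does not become \locked\ and \source\ is not re-evaluated.
\item \emph{Lines~\ref{alg:update-energy-cond}--\ref{alg:locked-bcast-energy}.} Since $\source = \true$, only Line~\ref{alg:locked-bcast-energy} runs, setting $\ttl \gets \maxttl$. This leaves state, source, and timer unchanged.
\item \emph{Line~\ref{alg:become-set-cond}.} Its condition requires $\state = \reset$, so it does not fire.
\end{itemize}
Therefore $v_{t_1}.\state = \set$, $v_{t_1}.\source = \true$, and $v_{t_1}.\timer = 1$.

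The main point needing care is the first step of the $t_1 > 0$ case. We must rule out Line~\ref{alg:increase-energy-recv} as the source of the value $m$, and then use the resulting fact that no neighbor carries \maxttl\ equal to $m$ to conclude $S = \emptyset$. Both follow directly from the minimality of $t_1$. The remaining steps are straightforward line tracing.
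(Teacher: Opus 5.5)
Your proof is correct and follows essentially the same route as the paper. It disposes of the initialization case, then traces round $t_1 - 1$ to show that the doubling on Line~\ref{alg:increase-energy-locked} must fire with $S = \emptyset$, which leaves the node \set, a source, and with $\timer = 1$. Your split on $t_1 = 0$ versus $t_1 > 0$ is equivalent to the paper's split on $m = 2$ versus $m > 2$. Your explicit argument that Line~\ref{alg:increase-energy-recv} cannot produce the value $m$, via monotonicity of $m^*$ plus minimality of $t_1$, fills in a step the paper asserts without justification.
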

    \begin{claimproof}
        If $m = 2$, then initialization implies $t_1 = 0$ and $v_0.\state = \set$, $v_0.\source = \true$, $v_0.\maxttl = m$, and $v_0.\timer = 1$ for all nodes $v \in V$, as desired (see Table~\ref{tab:algvars}).
        Otherwise, $m > 2$ implies $t_1 > 0$.
        By definition, $t_1$ is the earliest time at which any node $v$ obtains $v.\maxttl = m$, so this must occur because such a node $v$ executes Line~\ref{alg:increase-energy-locked} in round $t_1 - 1$.
        This yields $\varline{v}{t_1 - 1}{alg:increase-energy-locked}{\state} = \set$, $\varline{v}{t_1 - 1}{alg:increase-energy-locked}{\source} = \true$, and $\varline{v}{t_1 - 1}{alg:increase-energy-locked}{\timer} = 0$.
        This also implies no neighbor $u$ could have sent $v$ a message with $u_{t_1-1}.\maxttl = \varline{v}{t_1 - 1}{alg:def-set-msgs}{\maxttl} = m$, so $v.\timer$ does not change in Line~\ref{alg:timer-sync}. 
        The only other update to $v.\timer$ is Line~\ref{alg:timer-increment}:
        \begin{equation}
            v_{t_1}^{(\ref{alg:timer-increment})}.\timer = (v_{t_1}^{(\ref{alg:increase-energy-locked})}.\timer + 1) \bmod v_{t_1}^{(\ref{alg:increase-energy-locked})}.\maxttl = (0 + 1) \bmod m = 1,
        \end{equation}
        using the supposition that $m > 2$.
        Since $\varline{v}{t_1 - 1}{alg:timer-increment}{\timer} = 1$, the condition on Line~\ref{alg:timer-zero-cond} is not satisfied and thus $\varline{v}{t_1 - 1}{alg:locked-signal-false}{\source} = \varline{v}{t_1 - 1}{alg:increase-energy-locked}{\source} = \true$.
        This, in turn, means the condition on Line~\ref{alg:update-energy-cond} is not satisfied, so $v.\timer$ is not reset on Line~\ref{alg:no-energy-timer}.
        Finally, $\varline{v}{t_1 - 1}{alg:locked-bcast-energy}{\state} = \varline{v}{t_1 - 1}{alg:increase-energy-locked}{\state} = \set$, so $v$ does not execute Lines~\ref{alg:become-set-realsignal}--\ref{alg:become-set-timer}.
        Therefore, we have $v_{t_1}.\state = \set$, $v_{t_1}.\source = \true$, and $v_{t_1}.\timer = 1$, as desired.
    \end{claimproof}

    Let $L_t = \{v \in V : v_t.\state = \locked \text{ and } v_t.\maxttl = m\}$ be the set of nodes that are \locked\ and hold the maximum \maxttl\ value.
    This is either none of the nodes or all of them (Claim~\ref{claim:universal-lock}) and all such nodes have the same \timer\ values (Claim~\ref{claim:lock-timer-sync}).

    \begin{claim} \label{claim:universal-lock}
        For all times $t \geq t_1$, either $L_t = \emptyset$ or $L_t = V$.
    \end{claim}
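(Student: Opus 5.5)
The plan is a direct contradiction argument using 1-interval connectivity, with no induction over time. The key observation is that a mixed configuration, in which some nodes are in $L_t$ and some are not, immediately forces a doubling of \maxttl. That doubling is ruled out by Claim~\ref{claim:fixed-max-ttl}.

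Concretely, fix any time $t \geq t_1$ and suppose for contradiction that $\emptyset \neq L_t \neq V$. The snapshot $G_t$ is connected, so there is an edge $\{w, u\} \in E_t$ with $w \in L_t$ and $u \in V \setminus L_t$. Then $w_t.\state = \locked$ and $w_t.\maxttl = m$, and in round $t$ node $w$ sends \msg{locked}{\signal, m, \ttl} to $u$ by Line~\ref{alg:msg-locked}. We then trace $u$'s execution in round $t$ through Line~\ref{alg:increase-energy-locked}, splitting into two cases according to why $u \notin L_t$.

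\emph{Case 1: $u_t.\maxttl < m$.} Claim~\ref{claim:fixed-max-ttl} gives $m^*(t) = m$, so the value \maxrecv\ on Line~\ref{alg:max-recv} is exactly $m$. Hence $u$ executes Line~\ref{alg:increase-energy-recv}, setting $\maxttl \gets m$ and $\state \gets \reset$. \emph{Case 2: $u_t.\maxttl = m$.} Then $u_t.\state \neq \locked$, since otherwise $u \in L_t$, and $u$ does not execute Line~\ref{alg:increase-energy-recv}. In both cases $u$ reaches Line~\ref{alg:def-lock-msgs} with $\maxttl = m$ and $\state \neq \locked$. The message from $w$ therefore lies in $L$, so the condition on Line~\ref{alg:cond-lock-msgs} holds and $u$ executes Line~\ref{alg:increase-energy-locked}, doubling its \maxttl\ to $2m$. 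No later line decreases \maxttl\ (Lemma~\ref{lem:maxttl-nondecreasing}), so $u_{t+1}.\maxttl = 2m > m$. This gives $m^*(t+1) > m$ with $t+1 > t_1$, contradicting Claim~\ref{claim:fixed-max-ttl}.

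The base time $t = t_1$ needs no separate treatment: the argument above applies to every $t \geq t_1$. It is also consistent with Claim~\ref{claim:first-max-ttl}, which gives $L_{t_1} = \emptyset$ directly. The step I expect to need the most care is the line-by-line trace in Case 1. There I must confirm that after Line~\ref{alg:increase-energy-recv} the node's \maxttl\ equals $m$ exactly, so that $w$'s message is counted in $L$ on Line~\ref{alg:def-lock-msgs}. This uses that \maxrecv\ is bounded by $m^*(t) = m$ and that $w$'s message attains that bound. I must also confirm that nothing between Lines~\ref{alg:increase-energy-recv} and~\ref{alg:cond-lock-msgs} makes $u$ \locked. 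Both follow directly from the algorithm's code and Claim~\ref{claim:fixed-max-ttl}.
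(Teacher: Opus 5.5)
Your proposal is correct and follows the paper's proof essentially step for step. Both take a cut edge of the connected snapshot $G_t$ between $L_t$ and $V \setminus L_t$, split on whether the non-member has $\maxttl < m$ or has $\maxttl = m$ with $\state \neq \locked$, and conclude that it executes Line~\ref{alg:increase-energy-locked}, so $m^*(t+1) \geq 2m$, contradicting Claim~\ref{claim:fixed-max-ttl}. The paper additionally remarks that $t > 0$, which, as your uniform treatment of $t = t_1$ reflects, the argument never actually uses.
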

    \begin{claimproof}
        Suppose to the contrary that both $L_t$ and $\overline{L_t} = V \setminus L_t$ are nonempty at time $t \geq t_1$.
        Initially, all nodes $v \in V$ have $v_0.\state = \set$ (see Table~\ref{tab:algvars}), so we must have $t > 0$.
        Since $G_t$ is connected, there are nodes $u \in L_t$ and $v \in \overline{L_t}$ such that $\{u, v\} \in E_t$.
        So $u$ will send $v$ a message in round $t$.
        In particular, this message will reflect $u_t.\state = \locked$ and $u_t.\maxttl = m$ because $u \in L_t$; moreover, no other node has a larger \maxttl\ value because $m^*(t) = m$.
        Thus, the largest \maxttl\ value that $v$ receives in round $t$ is exactly $m$.
        
        If $v \in \overline{L_t}$ because $v_t.\maxttl \neq m$, then in fact $v_t.\maxttl < m$.
        So the message from $u$ causes Line~\ref{alg:increase-energy-recv} to execute, yielding $\varline{v}{t}{alg:increase-energy-recv}{\state} = \reset$ and $\varline{v}{t}{alg:increase-energy-recv}{\maxttl} = m$.
        Otherwise, $v \in \overline{L_t}$ because $v_t.\maxttl = m$ but $v_t.\state \neq \locked$.
        In either case, Line~\ref{alg:increase-energy-locked} will execute since $u_t.\state = \locked$, $u_t.\maxttl = m = \varline{v}{t}{alg:increase-energy-recv}{\maxttl}$, and $\varline{v}{t}{alg:increase-energy-recv}{\state} \neq \locked$.
        This yields $\varline{v}{t}{alg:increase-energy-locked}{\maxttl} = 2m \leq v_{t+1}.\maxttl$ by Lemma~\ref{lem:maxttl-nondecreasing}, contradicting Claim~\ref{claim:fixed-max-ttl}.
    \end{claimproof}

    \begin{claim} \label{claim:lock-timer-sync}
        For all times $t \geq t_1$, if $u,v \in L_t$, then $u_t.\timer = v_t.\timer$.
    \end{claim}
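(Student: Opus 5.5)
By Claim~\ref{claim:universal-lock}, for every time $t \geq t_1$ either $L_t = \emptyset$ or $L_t = V$, so it suffices to show that all nodes share a common \timer\ value whenever $L_t = V$. I would prove a stronger statement by induction on $t \geq t_1$:
\begin{quote}
\emph{All nodes $v$ with $v_t.\maxttl = m$ and $v_t.\state \neq \reset$ have $v_t.\timer \leq \tau_t$, where $\tau_t$ is the largest such timer value; and if $L_t = V$, then every node has $\timer = \tau_t$.}
\end{quote}
Only the second part is the target. The first part is bookkeeping that lets me track the maximum timer.

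The key observation is how nodes in $L_t$ evolve from one round to the next. Suppose $L_t = V$. No node executes Line~\ref{alg:increase-energy-recv}, because $m^*(t) = m$ by Claim~\ref{claim:fixed-max-ttl} and all nodes already hold $\maxttl = m$. No node executes Line~\ref{alg:increase-energy-locked} either, since every node is \locked. Lines~\ref{alg:cond-set-msgs}--\ref{alg:timer-sync} are skipped for \locked\ nodes, so each node simply increments its timer modulo $m$ in Line~\ref{alg:timer-increment}. If all timers were equal at time $t$, they therefore remain equal after the increment.

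A node may then reset on Line~\ref{alg:no-energy-state}. That would make $L_{t+1} \neq V$, and Claim~\ref{claim:universal-lock} then forces $L_{t+1} = \emptyset$. In that case the claim for $t+1$ is vacuous.

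So the real content is the transition from $L_t = \emptyset$ to $L_{t+1} = V$. Here every node becomes \locked\ in round $t$, which happens only via Line~\ref{alg:become-locked}, i.e.\ its timer wraps to $0$. Every node in $L_{t+1}$ then has timer $0$ after Line~\ref{alg:timer-increment}. I must also check that no later line in round $t$ changes the timer. Line~\ref{alg:no-energy-timer} sets it to $0$ but makes the node \reset, contradicting $L_{t+1} = V$. Line~\ref{alg:become-set-timer} likewise requires the node to be \reset. So all timers equal $0$ at time $t+1$.

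This makes the induction essentially two cases: the persistence case $L_t = L_{t+1} = V$, and the entry case $L_t = \emptyset$, $L_{t+1} = V$. Neither case actually needs the first part of the strengthened hypothesis, so I expect to drop it.

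The main obstacle is verifying the entry case carefully. I need to rule out a node entering $L_{t+1}$ without passing through Line~\ref{alg:become-locked} in round $t$. One route would be a node that gains $\maxttl = m$ in round $t$ and is somehow \locked; another would be a node that was \locked\ with a smaller \maxttl. Neither is possible:
\begin{itemize}
\item Line~\ref{alg:increase-energy-recv} sets \state\ to \reset, and Line~\ref{alg:increase-energy-locked} sets it to \set.
\item The only assignment of \locked\ anywhere is Line~\ref{alg:become-locked}.
\end{itemize}
Hence a node not in $L_t$ but in $L_{t+1}$ must execute Line~\ref{alg:become-locked} in round $t$, with its timer equal to $0$ after Line~\ref{alg:timer-increment}.

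A node already \locked\ with $\maxttl = m$ at time $t$ would belong to $L_t$, contradicting $L_t = \emptyset$, so that case cannot arise. I also need to rule out a node being \locked\ with $\maxttl = m$ while still holding a stale timer. This is covered by the same line-by-line tracing as in the earlier lemmas, checking that Line~\ref{alg:become-locked} is always reached with $\timer = 0$.

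The base case $t = t_1$ follows from Claim~\ref{claim:first-max-ttl}. If $m > 2$, nodes with $\maxttl = m$ are \set\ at time $t_1$, so $L_{t_1} = \emptyset$. If $m = 2$, then $t_1 = 0$ and all nodes are \set\ at initialization, so again $L_{t_1} = \emptyset$. Either way the base case holds trivially.
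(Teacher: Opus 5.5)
Your proposal is correct and follows essentially the same route as the paper: induction from the vacuous base case $L_{t_1} = \emptyset$ (via Claim~\ref{claim:first-max-ttl}), then a split on $L_t = \emptyset$ versus $L_t = V$. In the first case every member of $L_{t+1}$ must reach Line~\ref{alg:become-locked} with $\timer = 0$ and cannot subsequently reset; in the second, all timers increment in lockstep modulo $m$, and the strengthened bookkeeping hypothesis is unnecessary, as you anticipated.
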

    \begin{claimproof}
        Argue by induction on $t \geq t_1$.
        At time $t_1$, any node $w \in V$ with $w_{t_1}.\maxttl = m$ has $w_{t_1}.\state = \set \neq \locked$ by Claim~\ref{claim:first-max-ttl}, so $L_{t_1} = \emptyset$ and the claim holds vacuously.

        Now suppose the claim holds for any $t \geq t_1$ and consider any two nodes $u, v \in L_{t+1}$; if none exist, we are done.
        By Claim~\ref{claim:universal-lock}, either $L_t = \emptyset$ or $L_t = V$.
        If the former, we will show both $u$ and $v$ execute Line~\ref{alg:become-locked} in round $t$, implying $u_{t+1}.\timer = v_{t+1}.\timer = 0$ by the condition in Line~\ref{alg:timer-zero-cond}.
        If the latter, we have $u_t.\timer = v_t.\timer$ by the induction hypothesis and will show that these values remain synchronized in round $t$.

        Suppose that $L_t = \emptyset$.
        Then for each $w \in \{u, v\}$, $w_t.\state \neq \locked$ or $w_t.\maxttl < m$.
        If $w_t.\state \neq \locked$, then we know $w$ must execute Line~\ref{alg:become-locked} in round $t$ to become \locked\ and must not later become \reset\ by executing Line~\ref{alg:no-energy-state} to satisfy the supposition that $w \in L_{t+1}$.
        This is only possible if $\varline{w}{t}{alg:timer-increment}{\timer} = 0$, and this \timer\ value is not changed again.
        So $w_{t+1}.\timer = 0$.
        If $w_t.\maxttl < m$, then in order to obtain $w_{t+1}.\maxttl = m$ and satisfy the supposition that $w \in L_{t+1}$, $w$ must execute Line~\ref{alg:increase-energy-recv} or~\ref{alg:increase-energy-locked} in round $t$.
        Either way, $w_{t+1}^{(\ref{alg:increase-energy-locked})}.\state \neq \locked$.
        This then falls into the prior case where $w$ must execute Line~\ref{alg:become-locked} to become \locked, again yielding $w_{t+1}.\timer = 0$.
        Thus, if $L_t = \emptyset$, then $u_{t+1}.\timer = v_{t+1}.\timer = 0$.

        Now suppose $L_t = V$.
        Then $u, v \in L_t$, implying $u_t.\timer = v_t.\timer$ by induction.
        By definition of $L_t$, we have $u_t.\state = v_t.\state = \locked$ and $u_t.\maxttl = v_t.\maxttl = m$.
        Tracing the execution of Algorithm~\ref{alg:deterministic-coloring} in round $t$ from these values, the only update $u$ and $v$ make to their \timer\ values is the increment in Line~\ref{alg:timer-increment}.
        Like in the previous paragraph, the \timer\ update in Line~\ref{alg:no-energy-timer} is not executed because it implies $u$ and $v$ are \reset, violating the supposition that $u, v \in L_{t+1}$.
        Thus, $u_{t+1}.\timer = (u_t.\timer + 1) \bmod u_t.\maxttl = (v_t.\timer + 1) \bmod v_t.\maxttl = v_{t+1}.\timer$.
    \end{claimproof}

    \begin{claim} \label{claim:continuous-signal}
        For all times $t \geq t_0$, if there is a node $v \in L_t$ with $v_t.\source = \true$, then there is a node $w \in L_{t+1}$ with $w_{t+1}.\source = \true$.
    \end{claim}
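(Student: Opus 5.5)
We argue directly by tracing round $t$ for a node $v \in L_t$ with $v_t.\source = \true$, using Claims~\ref{claim:fixed-max-ttl} and~\ref{claim:universal-lock}. Since $v \in L_t$, Claim~\ref{claim:universal-lock} gives $L_t = V$: at time $t$ every node is \locked\ with $\maxttl = m = m^*(t)$. So in round $t$ no node receives a larger \maxttl\ (Line~\ref{alg:cond-max-recv} fails), and no node runs Line~\ref{alg:increase-energy-locked} or Line~\ref{alg:join-set-nodes}, since all are \locked. Every node therefore enters Line~\ref{alg:timer-increment-cond} still \locked\ with $\maxttl = m$. Moreover, no node can leave $\maxttl = m$ at time $t+1$ without contradicting Claim~\ref{claim:fixed-max-ttl}.

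We then split on whether $\varline{v}{t}{alg:timer-increment}{\timer} = 0$.

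\emph{Case 1: the timer does not wrap.} Then $v$ skips Lines~\ref{alg:become-locked}--\ref{alg:locked-signal-false}. Hence $v.\source$ stays \true, so $v$ executes Line~\ref{alg:locked-bcast-energy} and not the \reset\ branch. Its state stays \locked, and Line~\ref{alg:become-set-cond} is not triggered. Thus $w = v$ works: $v \in L_{t+1}$ and $v_{t+1}.\source = \true$.

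\emph{Case 2: the timer wraps.} By Claim~\ref{claim:lock-timer-sync}, all nodes in $L_t = V$ share the same \timer. None of them synchronizes upward at Line~\ref{alg:timer-sync}, because $S = \emptyset$ (no \set\ neighbors exist). So every node reaches $\timer = 0$ at Line~\ref{alg:timer-increment} simultaneously and executes Line~\ref{alg:become-locked}. Because $t \ge t_0$ and $s_{t_0} \neq \bot$ is persistent, some witness $w \in W_t$ exists, and $w$ sets $\source = \true$ at Line~\ref{alg:locked-signal-true}. That node then executes Line~\ref{alg:locked-bcast-energy}, stays \locked\ with $\maxttl = m$, and skips Line~\ref{alg:become-set-cond}. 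Hence $w \in L_{t+1}$ with $w_{t+1}.\source = \true$.

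The main obstacle is Case 2. There the chosen \source\ node may change, so the argument needs the exact synchronization of timers across all nodes, together with $S = \emptyset$, to ensure the witness's timer also wraps in the same round. It also needs a careful check that nothing later in round $t$ (Lines~\ref{alg:no-energy-state} or~\ref{alg:become-set-state}) moves $w$ out of \locked. The first point is where Claims~\ref{claim:universal-lock} and~\ref{claim:lock-timer-sync} do the real work. The second follows because a \source\ node never enters the $\ttl$-decrement branch.
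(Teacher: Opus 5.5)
Your proof is correct and follows essentially the same route as the paper's. Claim~\ref{claim:universal-lock} gives $L_t = V$, and Claim~\ref{claim:lock-timer-sync} makes all timers equal. Then either no node's \timer\ wraps, so $v$ itself stays a \locked\ \source, or every node's \timer\ wraps together, so a witness of the persistent signal becomes a \locked\ \source; in both cases \maxttl\ stays at $m$.
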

    \begin{claimproof}
        Consider any node $v \in L_t$ with $v_t.\source = \true$; if none exist, we are done.
        Since $L_t \neq \emptyset$, we know $L_t = V$ by Claim~\ref{claim:universal-lock}; then, by Claim~\ref{claim:lock-timer-sync}, we have $u_t.\timer = v_t.\timer$ for all $u \in V$.
        Tracing Algorithm~\ref{alg:deterministic-coloring} from these values we have $\varline{u}{t}{alg:timer-increment}{\state} = \locked$ for all nodes $u \in V$ and either the condition on Line~\ref{alg:timer-zero-cond} is not satisfied by any nodes---implying $\varline{v}{t}{alg:locked-signal-false}{\source} = v_t.\source = \true$---or it is satisfied by all nodes, implying some node $w \in L_t$ has $\varline{w}{t}{alg:locked-signal-false}{\source} = \true$ because the signal is persistent.
        In either case, there is a node $w \in L_t$ for which $\varline{w}{t}{alg:locked-signal-false}{\state} = \locked$ and $\varline{w}{t}{alg:locked-signal-false}{\source} = \true$.
        Again tracing Algorithm~\ref{alg:deterministic-coloring} from these values, $w$ will not update $w.\state$ or $w.\source$ again this round.
        Moreover, since $w_t.\maxttl = m$ is nondecreasing by Lemma~\ref{lem:maxttl-nondecreasing} and $m = m^*(t+1)$ by supposition, we must have $w_{t+1}.\maxttl = m$.
        Taken together, we conclude $w \in L_{t+1}$ and $w_{t+1}.\source = \true$.
    \end{claimproof}

    \begin{claim} \label{claim:set-to-locked}
        For all times $t \geq t_1$, if there is a node $v \in V$ with $v_t.\state = \set$, $v_t.\source = \true$, and $v_t.\maxttl = m$, then there is a time $t' \in [t + 1, t + m - v_t.\timer]$ such that $v_{t'}.\state = \locked$, $v_{t'}.\maxttl = m$, and $v_{t'}.\timer = 0$.
    \end{claim}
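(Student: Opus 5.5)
The plan is to follow $v$ forward round by round while it is a \set\ source with $\maxttl = m$, and show that its \timer\ strictly increases each round until it wraps to $0$. At that moment $v$ becomes \locked. Fix $t \geq t_1$ and let $v$ satisfy $v_t.\state = \set$, $v_t.\source = \true$, and $v_t.\maxttl = m$. By Lemma~\ref{lem:source-ttl}, $v_t.\ttl = m > 0$. By Lemma~\ref{lem:zero-energy}, $v$ is not \reset.

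First consider a generic round $t'' \geq t$ in which $v$ is \set, a \source, and has $\maxttl = m$, and trace Algorithm~\ref{alg:deterministic-coloring} for $v$ in that round.
\begin{itemize}
    \item Line~\ref{alg:increase-energy-recv} cannot execute, since $m^*(t'') = m$ by Claim~\ref{claim:fixed-max-ttl}.
    \item Line~\ref{alg:increase-energy-locked} cannot execute either, because it would set $\maxttl = 2m$ and contradict Claim~\ref{claim:fixed-max-ttl}. Hence $L = \emptyset$.
    \item Line~\ref{alg:timer-sync} may only raise $v.\timer$ to some value $\timer' \geq v_{t''}.\timer$.
    \item Since the state is not \reset, Line~\ref{alg:timer-increment} then sets $\timer \gets (\timer'+1) \bmod m$.
\end{itemize}
If the result is nonzero, then \source\ stays \true. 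Line~\ref{alg:locked-bcast-energy} sets $\ttl = m$. Line~\ref{alg:become-set-cond} does not fire, because the state is \set. So $v_{t''+1}$ is again \set, a \source, and has $\maxttl = m$, with $v_{t''+1}.\timer \geq v_{t''}.\timer + 1$. If instead the result is $0$, Line~\ref{alg:become-locked} makes $v$ \locked.

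Next I show that in the wrap round $v$ remains \locked\ with $\timer = 0$. Whatever the outcome of the signal check on Line~\ref{alg:locked-check-signal}, there are two cases. If $v$ stays a \source, then Line~\ref{alg:locked-bcast-energy} sets $\ttl = m$ and nothing else changes. If \source\ becomes \false, I must rule out $\ttl$ reaching $0$ at Line~\ref{alg:update-energy}. Here $\recvttl$ includes $v$'s own \ttl, which is still $m$, so $\ttl = m - 1 \geq 1$ because $m \geq 2$ by Lemma~\ref{lem:maxttl-nondecreasing}. In either case the \reset\ branch (Lines~\ref{alg:no-energy-signal}--\ref{alg:no-energy-timer}) is avoided. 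The last block is also skipped since the state is \locked. Hence the state is \locked, $\maxttl = m$, and $\timer = 0$ at time $t'+1$, where $t'$ denotes the wrap round.

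Finally I count rounds. By Lemma~\ref{lem:timer-lt-maxttl}, the timer stays in $[0, m-1]$, and it rises by at least one per non-wrap round starting from $v_t.\timer$. So at most $m - 1 - v_t.\timer$ non-wrap rounds can occur before the increment on Line~\ref{alg:timer-increment} produces $0$. Concretely, once the pre-increment value reaches $m - 1$, the increment yields $0$. Therefore the wrap happens in some round at most $t + m - 1 - v_t.\timer$, and the time $t'$ at which $v$ is \locked\ with $\timer = 0$ lies in $[t+1, t + m - v_t.\timer]$.

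The main obstacle is the wrap-round bookkeeping: making sure $\ttl$ does not drop to $0$ when \source\ flips to \false. This hinges on the order of execution within the round: Line~\ref{alg:choose-max-ttl} reads $v$'s own \ttl\ after Line~\ref{alg:locked-bcast-energy} would otherwise have reset it. It also hinges on the fact that the own \ttl\ still equals $m$ from the previous round, which holds by Lemma~\ref{lem:source-ttl}.
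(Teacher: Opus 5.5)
Your proof is correct and follows essentially the same route as the paper's. You rule out Lines~\ref{alg:increase-energy-recv} and~\ref{alg:increase-energy-locked} via Claim~\ref{claim:fixed-max-ttl}, show the \timer\ strictly increases in every non-wrap round so the wrap occurs within $m - 1 - v_t.\timer$ rounds, and use Lemma~\ref{lem:source-ttl} to see that $v$'s own $\ttl = m$ is a candidate for $\recvttl$, which keeps $v$ out of the \reset\ branch in the wrap round. The only blemish is notational: you use $t'$ both for the wrap round and for the time one round later, and only the latter lies in $[t+1, t+m-v_t.\timer]$.
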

    \begin{claimproof}
        Consider any time $t \geq t_1$ and any node $v$ with $v_t.\state = \set$, $v_t.\source = \true$, and $v_t.\maxttl = m$; if none exist, we are done.
        When $v$ executes Algorithm~\ref{alg:deterministic-coloring} in round $t$, it cannot execute Lines~\ref{alg:increase-energy-recv} or~\ref{alg:increase-energy-locked} since it has $v_t.\maxttl = m$ and doing so would contradict Claim~\ref{claim:fixed-max-ttl}.
        Thus, regardless of whether it executes Line~\ref{alg:timer-sync}, $\varline{v}{t}{alg:timer-sync}{\state} = v_t.\state = \set$ and $\varline{v}{t}{alg:timer-sync}{\timer} \geq v_t.\timer$.
        After the \timer\ increment in Line~\ref{alg:timer-increment}, either $\varline{v}{t}{alg:timer-increment}{\timer} = \varline{v}{t}{alg:timer-sync}{\timer} + 1 > v_t.\timer$ or $\varline{v}{t}{alg:timer-increment}{\timer} = 0$.
        In the former case, tracing Algorithm~\ref{alg:deterministic-coloring} shows $v$ will make no other updates to any of its variables in round $t$, yielding $v_{t+1}.\state = \set$, $v_{t+1}.\source = \true$, $v_{t+1}.\maxttl = m$, and $v_{t+1}.\timer > v_t.\timer$.
        But then this argument can be applied to $v$ again to further increase $v.\timer$ in the next round and likewise in any consecutive future round reached from this former case.
        By Lemma~\ref{lem:timer-lt-maxttl}, $v.\timer$ cannot exceed $m - 1$; thus, this argument can only be repeatedly applied at most $m - 1 - v_t.\timer$ times before reaching a round $t' - 1 \leq t + m - 1 - v_t.\timer$ in which $\varline{v}{t'-1}{alg:timer-increment}{\timer} = 0$.
        This satisfies the condition on Line~\ref{alg:timer-zero-cond}, yielding $\varline{v}{t'-1}{alg:become-locked}{\state} = \locked$.

        It remains to show that $v$ does not further update its \state\ or \timer\ in round $t' - 1$.
        Such updates occur if and only if $v$ becomes \reset\ on Line~\ref{alg:no-energy-state}, which in turn occurs if and only if $\recvttl = 1$ (defined on Line~\ref{alg:choose-max-ttl}).
        Since $v_{t'-1}.\source = \true$, Lemma~\ref{lem:source-ttl} implies $v_{t'-1}.\ttl = v_{t'-1}.\maxttl = m$.
        Tracing Algorithm~\ref{alg:deterministic-coloring} shows $v_{t'-1}.\ttl = \varline{v}{t'-1}{alg:update-energy-cond}{\ttl} = m$ is a candidate for $\recvttl$, so $\recvttl \geq m > 1$ (by initialization).
        Therefore, we conclude that $v_{t'}.\state = \locked$, $v_{t'}.\maxttl = m$ (by Lemma~\ref{lem:maxttl-nondecreasing} and Claim~\ref{claim:fixed-max-ttl}), and $v_{t'}.\timer = 0$, satisfying the claim at time $t' \in [t + 1, t + m - v_t.\timer]$.
    \end{claimproof}

    \begin{claim} \label{claim:locked-bcaster}
        For all times $t \geq t_0$, if there is a node $v \in V$ with $v_t.\state = \set$, $v_t.\source = \true$, and $v_t.\maxttl = m$, then there is a time $t' \in [t + 1, t + m]$ and a node $w \in V$ with $w_{t'}.\state = \locked$, $w_{t'}.\source = \true$, and $w_{t'}.\maxttl = m$.
    \end{claim}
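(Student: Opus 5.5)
Our plan is to combine Claim~\ref{claim:set-to-locked} with the persistence of the signal and the synchronization Claims~\ref{claim:universal-lock} and~\ref{claim:lock-timer-sync}.
Fix $t \geq t_0 \geq t_1$ and a node $v$ with $v_t.\state = \set$, $v_t.\source = \true$, and $v_t.\maxttl = m$.
By Claim~\ref{claim:set-to-locked}, there is a time $t'' \in [t+1, t + m - v_t.\timer] \subseteq [t+1, t+m]$ with $v_{t''}.\state = \locked$, $v_{t''}.\maxttl = m$, and $v_{t''}.\timer = 0$.
In particular, $v \in L_{t''}$.
The remaining task is to produce a \locked\ node with $\source = \true$ and $\maxttl = m$ at some time in $[t+1, t+m]$; we take $t' = t''$.

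Consider the round $t''-1 \geq t \geq t_0$ in which $v$ becomes \locked.
The proof of Claim~\ref{claim:set-to-locked} shows that $v$ executes Line~\ref{alg:become-locked} in this round, because $\varline{v}{t''-1}{alg:timer-increment}{\timer} = 0$.
Since $v \in L_{t''}$, Claim~\ref{claim:universal-lock} gives $L_{t''} = V$, so every node is \locked\ with $\maxttl = m$ at time $t''$.
By Claim~\ref{claim:lock-timer-sync}, all of them then have $\timer = 0$.
We want to argue that every node $u$ likewise reached $\timer = 0$ at Line~\ref{alg:timer-zero-cond} in round $t''-1$, and hence re-evaluated $\source$ at Lines~\ref{alg:locked-check-signal}--\ref{alg:locked-signal-false}.

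This step is the delicate one.
A node might already have been \locked\ with some other \timer\ value, or might have obtained $\timer = 0$ by another route; Line~\ref{alg:no-energy-timer} is excluded because that line makes the node \reset.
To handle this, we split on whether $L_{t''-1} = V$ or $L_{t''-1} = \emptyset$, which are the only options by Claim~\ref{claim:universal-lock}.

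If $L_{t''-1} = \emptyset$, each node entering $L_{t''}$ must execute Line~\ref{alg:become-locked} in round $t''-1$; this is exactly the argument in the proof of Claim~\ref{claim:lock-timer-sync}.
If $L_{t''-1} = V$, then all nodes are \locked\ with synchronized timers, and the only \timer\ update is the increment.
Since $v$'s timer wrapped to $0$, every node's timer wraps to $0$ simultaneously.
We must also check that $v$ was in fact \set\ at time $t''-1$, and therefore not in $L_{t''-1}$.
That case then cannot occur unless consistent with $v$; either way, all nodes satisfy the condition on Line~\ref{alg:timer-zero-cond} in round $t''-1$.

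Because $s$ is persistent and $t''-1 \geq t_0$, some node $w$ witnesses $s \neq \bot$ in round $t''-1$.
That node $w$ therefore executes Line~\ref{alg:locked-signal-true}, setting $\source = \true$ with $\state = \locked$.
Exactly as in Claim~\ref{claim:continuous-signal}, tracing the remainder of the round shows $w$ changes neither $\state$ nor $\source$ afterwards.
Its $\maxttl$ stays $m$ by Lemma~\ref{lem:maxttl-nondecreasing} and Claim~\ref{claim:fixed-max-ttl}.
Hence $w_{t'}.\state = \locked$, $w_{t'}.\source = \true$, and $w_{t'}.\maxttl = m$ with $t' = t'' \in [t+1, t+m]$, as the claim requires.
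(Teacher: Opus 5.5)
Your proof is correct and follows the paper's argument: Claim~\ref{claim:set-to-locked} gives the locking time $t''$, Claims~\ref{claim:universal-lock} and~\ref{claim:lock-timer-sync} force every node to wrap its \timer\ to $0$ in round $t''-1$, and persistence of the signal supplies a witness that becomes a \locked\ source. Your case split on $L_{t''-1}$ spells out a step the paper only asserts. Since $v_{t''-1}.\state = \set$ (from the proof of Claim~\ref{claim:set-to-locked}), the case $L_{t''-1} = V$ cannot occur, so you only need the $L_{t''-1} = \emptyset$ branch.
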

    \begin{claimproof}
        By Claim~\ref{claim:set-to-locked} and Lemma~\ref{lem:timer-lt-maxttl}, there is a time $t' \in [t + 1, t + m]$ such that $v \in L_{t'}$ and $v_{t'}.\timer = 0$.     
        By Claims~\ref{claim:universal-lock} and~\ref{claim:lock-timer-sync}, we have $u_{t'}.\timer = 0$ for all $u \in L_{t'} = V$.
        To reach these values, it must be that $\varline{u}{t'-1}{alg:become-locked}{\state} = \locked$ and $\varline{u}{t'-1}{alg:timer-increment}{\timer} = 0$ for all $u \in V$.
        Since the signal is persistent at time $t' - 1 \geq t \geq t_0$, at least one node $w \in V$ witnesses it, yielding $\varline{w}{t'-1}{alg:locked-signal-false}{\source} = \true$.
        This node will not update any other variables in round $t'-1$, yielding $w_{t'}.\state = \locked$, $w_{t'}.\source = \true$, and $w_{t'}.\maxttl = m$, as desired.
    \end{claimproof}
        
    \begin{claim} \label{claim:set-bcaster}
        For all times $t \geq t_1$, if there is a node $v \in V$ with $v_t.\state = \set$ and $v_t.\maxttl = m$, then there is a node $u \in V$ with $u_t.\state = \set$, $u_t.\source = \true$, and $u_t.\maxttl = m$.
    \end{claim}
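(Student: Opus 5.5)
We argue by induction on $t \geq t_1$ that the following invariant holds at time $t$: if some node is \set\ with $\maxttl = m$, then some node is \set\ with $\maxttl = m$ and $\source = \true$. The base case $t = t_1$ is Claim~\ref{claim:first-max-ttl}: the node(s) first obtaining $\maxttl = m$ are \set\ with $\source = \true$. For the inductive step, suppose $v_{t+1}.\state = \set$ and $v_{t+1}.\maxttl = m$. We trace how $v$ came to be in this configuration during round $t$, splitting into cases by the line at which $v$ last became or stayed \set.

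\emph{Case 1: $v$ executes Line~\ref{alg:increase-energy-locked} or Lines~\ref{alg:become-set-realsignal}--\ref{alg:become-set-timer} in round $t$.} Either way it sets $\source = \true$.
\begin{itemize}
\item Line~\ref{alg:increase-energy-locked} would give $\maxttl = 2m$ or more, contradicting Claim~\ref{claim:fixed-max-ttl}, so this sub-case cannot arise.
\item Line~\ref{alg:become-set-state} leaves \state\ and \source\ unchanged afterwards. Hence $v$ itself is the desired \set\ source at time $t+1$.
\end{itemize}

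\emph{Case 2: $v$ was \set\ or \locked\ after Line~\ref{alg:timer-sync} and remained \set.} Then $v$ did not execute Line~\ref{alg:become-locked}, since that makes it \locked\ or, via Line~\ref{alg:no-energy-state}, \reset.
\begin{itemize}
\item If $v$ is already a source, $v$ itself works.
\item If $v_{t+1}.\source = \false$, then Lemma~\ref{lem:source-ttl} and Lemma~\ref{lem:zero-energy} give $0 < v_{t+1}.\ttl < m$. Lemma~\ref{lem:ttl-value} then yields a node $u \in M_t[v]$ with $u_t.\ttl = v_{t+1}.\ttl + 1 \geq 2$ and $u_t.\maxttl = m$.
\item By Line~\ref{alg:choose-max-ttl}, $u$ is \set\ or \locked\ (or is $v$ itself).
\end{itemize}

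The idea is then to follow the maximum-\ttl\ chain backwards. A \ttl\ value strictly between $0$ and $m$ must have been inherited from a strictly larger \ttl\ one round earlier, and a \ttl\ of exactly $m$ means $\source = \true$ (Lemma~\ref{lem:source-ttl}). However, we need the source to exist \emph{now}, at time $t+1$, and to be \set\ rather than \locked. This is the main obstacle.

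To handle it, we use Claims~\ref{claim:universal-lock} and~\ref{claim:lock-timer-sync}. Suppose $u$ is \locked\ with $\maxttl = m$. Then $L_t = V$, so $v_t$ would be \locked\ as well, and all nodes share a timer. Since $v$ is not \locked\ at time $t+1$, $v$ must have reset or flipped, which we rule out case by case. So the relevant predecessors are \set\ nodes with $\maxttl = m$ at time $t$, and the induction hypothesis supplies a \set\ source $u'$ with $\maxttl = m$ at time $t$.

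It remains to show $u'$ is still a \set\ source at time $t+1$, or that it hands off to one. If $u'$ hits $\timer = 0$ in round $t$, it becomes \locked. Since timers of \set\ nodes are synchronized by Line~\ref{alg:timer-sync} towards the maximum, I would argue that $v$'s timer is at least $u'$'s timer after synchronization. Then $v$ would also lock in the same round, contradicting $v_{t+1}.\state = \set$.

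This timer-comparison step is where I expect the real difficulty, because $v$ need not be adjacent to $u'$. If the direct comparison fails, I would fall back on strengthening the invariant. The strengthened version would assert that every \set\ node with $\maxttl = m$ has timer at most that of some \set\ source with $\maxttl = m$. This follows from Line~\ref{alg:timer-sync} together with the fact that sources reset their timer only on locking.
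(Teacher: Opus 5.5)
Your reduction is fine up to the point where the induction hypothesis gives you a \set\ source $u'$ with $\maxttl = m$ at time $t$. The detour through Lemma~\ref{lem:ttl-value} is not needed. A node that is \set\ after Line~\ref{alg:timer-sync} either was already \set\ with $\maxttl = m$, or joined via Line~\ref{alg:join-set-nodes} from such a neighbor, or doubled via Line~\ref{alg:increase-energy-locked} and is itself a source.

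The gap is the final step: showing that some \set\ source survives to time $t+1$. The timer comparison cannot supply it. Line~\ref{alg:timer-sync} only raises a \set\ node's timer to the largest timer among its \set\ neighbors. A wave started later lags behind an older source elsewhere, and nodes reached only by that wave carry the smaller timer. For example, a \reset\ witness executing Lines~\ref{alg:become-set-realsignal}--\ref{alg:become-set-timer} is left with $\timer = 0$. So ``$v$'s timer is at least $u'$'s'' is false in general. In the one round where it matters (when $u'$ wraps) it does end up true. But that is only because $m^*$ never exceeds $m$ (Claim~\ref{claim:fixed-max-ttl}), which forbids a split between \locked\ and non-\locked\ nodes. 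It is not because of anything the timers do. Your fallback invariant points the wrong way. Knowing that $v$'s timer is \emph{at most} some source's timer cannot force $v$ to lock in the round that source wraps; a lagging \set\ node simply stays \set.

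The missing idea is to apply Claim~\ref{claim:universal-lock} at time $t+1$ to $u'$ itself; you already use it at time $t$.
\begin{enumerate}
\item Since $\maxttl$ is frozen at $m$, $u'$ executes neither Line~\ref{alg:increase-energy-recv} nor Line~\ref{alg:increase-energy-locked}. So it is still a \set\ source after Line~\ref{alg:timer-sync}.
\item If its timer does not wrap, nothing later in the round touches its \state\ or \source. It is then the required node at time $t+1$.
\item If its timer wraps, it becomes \locked\ at Line~\ref{alg:become-locked}. It cannot fall to \reset\ at Line~\ref{alg:no-energy-state}, because Lemma~\ref{lem:source-ttl} gives $u'_t.\ttl = m$, hence $\recvttl - 1 \geq m - 1 > 0$.
\item Thus $u' \in L_{t+1}$, so Claim~\ref{claim:universal-lock} gives $L_{t+1} = V$, contradicting $v_{t+1}.\state = \set$.
\end{enumerate}
This is how the paper closes the argument, with no timer comparison at all.

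Separately, your dismissal of Line~\ref{alg:increase-energy-locked} is incorrect. At a time $t \geq t_1$, consider a non-\locked\ node whose closed neighborhood has maximum $\maxttl = m/2$ and which hears a \locked\ neighbor with $\maxttl = m/2$. It doubles to exactly $m$. The case is harmless: $S = \emptyset$, and its timer becomes $1 \neq 0$ since $m \geq 4$, so it is itself a \set\ source at $t+1$. But it must be handled, not excluded.
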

    \begin{claimproof}
        Argue by induction on time $t \geq t_1$.
        At time $t_1$, Claim~\ref{claim:first-max-ttl} shows that any node $v \in V$ with $v_{t_1}.\maxttl = m$ also has $v_{t_1}.\state = \set$ and $v_{t_1}.\source = \true$, as desired.
        Now suppose the claim holds for any $t \geq t_1$ and consider any node $v \in V$ with $v_{t+1}.\state = \set$ and $v_{t+1}.\maxttl = m$; if none exist, then we are done.
        
        First suppose that $v_t.\state = \set$ and $v_t.\maxttl = m$.
        By induction, there is a node $u \in V$ with $u_t.\state = \set$, $u_t.\source = \true$, and $u_t.\maxttl = m$.
        Consider the execution of Algorithm~\ref{alg:deterministic-coloring} by this node in round $t$.
        By Lemma~\ref{lem:maxttl-nondecreasing} and Claim~\ref{claim:fixed-max-ttl}, we know $u_{t+1}.\maxttl = u_t.\maxttl = m$.
        So $u$ does not execute Lines~\ref{alg:increase-energy-recv} or~\ref{alg:increase-energy-locked}, yielding $\varline{u}{t}{alg:timer-sync}{\state} = u_t.\state = \set$ and $\varline{u}{t}{alg:timer-sync}{\source} = u_t.\source = \true$.
        If $u$ does not execute Line~\ref{alg:become-locked}, it also will not execute Lines~\ref{alg:locked-signal-false} or~\ref{alg:no-energy-state} because $\varline{u}{t}{alg:timer-sync}{\source} = \true$, yielding $u_{t+1}.\state = \varline{u}{t}{alg:locked-bcast-energy}{\state} = \varline{u}{t}{alg:timer-sync}{\state} = \set$ and $u_{t+1}.\source = \varline{u}{t}{alg:locked-bcast-energy}{\source} = \varline{u}{t}{alg:timer-sync}{\source} = \true$, satisfying the claim.
        Otherwise, if $u$ does execute Line~\ref{alg:become-locked} and become \locked, then it must also execute Line~\ref{alg:no-energy-state} to become \reset\ instead; without this, we would have $u_{t+1}.\state = \locked$, which by Claim~\ref{claim:universal-lock} implies $u \in L_{t+1} = V$, contradicting the existence of $v$ with $v_{t+1}.\state = \set$.
        But this cannot happen.
        Since $u_t.\source = \true$, we have $u_t.\ttl = u_t.\maxttl = m$ by Lemma~\ref{lem:source-ttl}.
        Thus, $\varline{u}{t}{alg:update-energy}{\ttl} = \recvttl - 1 = m - 1 > 0$ by Lemma~\ref{lem:maxttl-nondecreasing}, so the condition on Line~\ref{alg:no-energy} is not satisfied and Line~\ref{alg:no-energy-state} is not run, a contradiction.

        Otherwise, suppose that $v_t.\state \neq \set$ or $v_t.\maxttl \neq m$.
        Then $v$ must execute at least one of Lines~\ref{alg:increase-energy-locked}, \ref{alg:join-set-nodes}, or~\ref{alg:become-set-state} to satisfy $v_{t+1}.\state = \set$ and at least one of Lines~\ref{alg:increase-energy-recv} or~\ref{alg:increase-energy-locked} to satisfy $v_{t+1}.\maxttl = m$.
        Working backwards, if $v$ executes Line~\ref{alg:become-set-state}, then $v_{t+1}.\source = \varline{v}{t}{alg:become-set-signal}{\source} = \true$, completing the induction.
        Otherwise, if $v$ executes Line~\ref{alg:join-set-nodes}, then it must have received a message from a node $u$ with $u_t.\state = \set$ and $u_t.\maxttl = m$, so the analysis of the previous paragraph completes the induction.
        Otherwise, if $v$ executes Line~\ref{alg:increase-energy-locked}, then $\varline{v}{t}{alg:increase-energy-locked}{\source} = \true$, and this will persist until the end of the round by the same argument as in the previous paragraph.
        Finally, if $v$ executes Line~\ref{alg:increase-energy-recv}, then $\varline{v}{t}{alg:increase-energy-recv}{\state} = \reset$, so $v$ must also execute one of the lines in the previous cases to satisfy $v_{t+1}.\state = \set$, completing the induction.
    \end{claimproof}
        
    \begin{claim} \label{claim:no-bcaster}
        For all times $t \geq t_0$, if $v_t.\source = \false$ and $v_t.\maxttl = m$ for all $v \in V$, then there is a time $t' \in [t, t+m]$ and a node $u \in V$ with $u_{t'}.\source = \true$ and $u_{t'}.\maxttl = m$.
    \end{claim}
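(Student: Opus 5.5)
The plan is a case analysis on the configuration at time $t$, after first narrowing it to two possibilities. Fix $t \geq t_0$ with $v_t.\source = \false$ and $v_t.\maxttl = m$ for all $v \in V$.

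\textbf{Reduction to two cases.} Claim~\ref{claim:set-bcaster} says that any \set\ node with $\maxttl = m$ forces a \set\ \source\ node with $\maxttl = m$. No node is a \source\ at time $t$, so no node is \set\ at time $t$. Hence every node is \reset\ or \locked. Since all nodes have $\maxttl = m$, Claim~\ref{claim:universal-lock} forces $L_t = \emptyset$ or $L_t = V$. So either \emph{all nodes are \reset} or \emph{all nodes are \locked}. The same dichotomy applies at every later time $t''$ at which the hypothesis ``no \source, all $\maxttl = m$'' still holds.

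\textbf{All nodes \reset\ at time $t$.} In round $t$ every received message is a \msg{reset}{m} message. Because $m^*(t)=m$, no node executes Line~\ref{alg:increase-energy-recv}. Because $L = S = \emptyset$, no node executes Line~\ref{alg:increase-energy-locked} or Line~\ref{alg:join-set-nodes}. So every node is still \reset\ when it reaches Line~\ref{alg:become-set-cond}. The signal is persistent, so some node $u$ witnesses it in round $t$. That node executes Lines~\ref{alg:become-set-realsignal}--\ref{alg:become-set-timer}, giving $u_{t+1}.\source = \true$ and $u_{t+1}.\maxttl = m$. Take $t' = t+1$.

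\textbf{All nodes \locked\ at time $t$.} By Claim~\ref{claim:lock-timer-sync}, all nodes share a common timer value $\tau$. For $j = 0, 1, \ldots$, I will argue inductively that as long as every node remains \locked\ with no \source, the nodes stay \locked\ with common timer $(\tau + j) \bmod m$. This holds because the only timer update for a \locked\ node is the increment on Line~\ref{alg:timer-increment}. Within at most $m - \tau \leq m$ rounds, all timers reach $0$ on Line~\ref{alg:timer-zero-cond} in the same round. A witness in that round sets $\source = \true$ on Line~\ref{alg:locked-signal-true}, keeps $\maxttl = m$, and keeps $\ttl = \maxttl$. This gives the required $t' \leq t + m$.

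\textbf{Main obstacle: a premature reset.} Before the timers reach $0$, some node might reach $\ttl = 0$ on Line~\ref{alg:no-energy} and become \reset. If this happens at time $t + j$, then $L_{t+j} \neq V$, so Claim~\ref{claim:universal-lock} forces $L_{t+j} = \emptyset$. No \source\ has appeared yet, so Claim~\ref{claim:set-bcaster} again rules out \set\ nodes, and therefore all nodes are \reset\ at time $t+j$. The \reset\ case then produces a \source\ at time $t + j + 1$. Such a reset can only occur in a round where the timers did not hit $0$. Hence $j \leq m - \tau - 1$, and $t + j + 1 \leq t + m$.

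The delicate part is this time accounting, in particular the subcase $\tau = 0$ at time $t$. The bound still holds there: the common timer returns to $0$ only after exactly $m$ increments, so any premature reset happens at some $j \leq m-1$.
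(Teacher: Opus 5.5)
Your proof is correct, but it takes a different route from the paper's.

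\textbf{The paper's argument.} It argues by contradiction using only the \ttl\ invariants. Suppose no node is a \source\ at any time in $[t, t+m]$. Then Lemma~\ref{lem:ttl-updating} applies in each of the rounds $t, \ldots, t+m-1$, so $e^*(t+m) = \max\{0, e^*(t) - m\} = 0$. By Lemma~\ref{lem:zero-energy}, every node is \reset\ at time $t+m$, and hence was already \reset\ when it reached Line~\ref{alg:become-set-cond} in round $t+m-1$. The witness of that round would then have become a \source, a contradiction. This argument never looks at the states or timers of \locked\ nodes. It needs none of Claims~\ref{claim:set-bcaster}, \ref{claim:universal-lock}, or~\ref{claim:lock-timer-sync}.

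\textbf{Your argument.} You use those three claims to reduce to the dichotomy ``all \reset'' versus ``all \locked''. You then exhibit the \source\ constructively in one of two ways. Either a witness restarts immediately after a global reset, or the synchronized timers wrap to $0$ and a witness becomes a \source\ at that phase boundary. Premature resets collapse back into the first case.

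\textbf{Trade-offs.} Your route gives an explicit and sharper bound: $t' = t+1$ in the reset case and $t' \le t + m - \tau$ in the locked case. It also shows clearly which mechanism produces the new \source. The cost is heavier reliance on the global structural claims, each of which rests on the \maxttl-doubling contradiction, plus more case bookkeeping. The paper's \ttl-decay argument is shorter and self-contained.

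\textbf{Wording to tighten.} You write that a premature reset ``can only occur in a round where the timers did not hit $0$''. Non-witnesses can in fact also reset in the wraparound round. This is harmless, because a witness becomes a \source\ in that same round. With that reading, your time accounting is sound, including the $\tau = 0$ subcase.
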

    \begin{claimproof}
        By Lemma~\ref{lem:maxttl-nondecreasing} and Claim~\ref{claim:fixed-max-ttl}, all nodes $v \in V$ have $v_{t'}.\maxttl = m$ for all $t' \in [t, t+m]$.
        So suppose to the contrary that for all nodes $v \in V$ and times $t' \in [t, t+m]$, $v_{t'}.\source = \false$.
        By repeatedly applying Lemma~\ref{lem:ttl-updating} to times $\{t, \ldots, t + m - 1\}$, the largest \ttl\ value held by any node at time $t + m$ is $\max\{0, e^*(t) - m\}$ which---since $e^*(t) \leq m$ by Lemma~\ref{lem:ttl-bound}---is in fact equal to $0$.
        Lemma~\ref{lem:ttl-bound} also says all \ttl\ values are nonnegative.
        Thus, $v_{t+m}.\ttl = 0$ for all $v \in V$.
        This, in turn, implies $v_{t+m}.\state = \varline{v}{t+m-1}{alg:locked-bcast-energy}{\state} = \reset$ for all $v \in V$ by Lemma~\ref{lem:zero-energy}.
        But $m > 1$ by Lemma~\ref{lem:maxttl-nondecreasing}, so $t + m - 1 > t \geq t_0$; i.e., the signal was present during round $t + m - 1$.
        So at least one node $u$ witnessed it, satisfying the condition in Line~\ref{alg:become-set-cond} and executing Line~\ref{alg:become-set-signal} to obtain $u_{t+m}.\source = \true$, a contradiction.
    \end{claimproof}
        
    \begin{claim} \label{claim:all-maxttl}
        For all nodes $v \in V$ and times $t \geq t_1 + m - 1$, $v_t.\maxttl = m$.
    \end{claim}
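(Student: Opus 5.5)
The plan is a flooding argument for the maximum \maxttl\ value, run from time $t_1$, using 1-interval connectivity. Let $A_t = \{v \in V : v_t.\maxttl = m\}$ for $t \geq t_1$. By Claim~\ref{claim:fixed-max-ttl}, $m^*(t) = m$ for all $t \geq t_1$, so $A_{t_1} \neq \emptyset$.

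The first step is to show that $A_t \subseteq A_{t+1}$ for all $t \geq t_1$. This follows because \maxttl\ is nondecreasing by Lemma~\ref{lem:maxttl-nondecreasing} and is bounded above by $m^*(t+1) = m$.

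The second step is to show that $|A_{t+1}| \geq \min\{n, |A_t| + 1\}$. If $A_t \neq V$, then since $G_t$ is connected there is an edge $\{u, v\} \in E_t$ with $u \in A_t$ and $v \notin A_t$. Every node sends a message in every state, and each message type (Lines~\ref{alg:msg-reset}, \ref{alg:msg-set}, \ref{alg:msg-locked}) includes \maxttl. Hence $v$ receives $\maxrecv \geq m > v_t.\maxttl$, where the strict inequality holds because $v_t.\maxttl \leq m^*(t) = m$ and $v \notin A_t$. Also, $\maxrecv \leq m^*(t) = m$. So Line~\ref{alg:increase-energy-recv} sets $v.\maxttl = m$.

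After that line, $v.\maxttl$ can change only via Line~\ref{alg:increase-energy-locked}, which doubles it to $2m$. That would give $m^*(t+1) \geq 2m$, contradicting Claim~\ref{claim:fixed-max-ttl}. Therefore $v_{t+1}.\maxttl = m$, so $v \in A_{t+1}$.

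The conclusion then follows by induction. Starting from $|A_{t_1}| \geq 1$, the set $A_t$ grows by at least one node per round until it equals $V$. So $A_t = V$ for all $t \geq t_1 + n - 1$.

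The remaining step, which I expect to be the main obstacle, is replacing $n - 1$ by $m - 1$, since $m$ could be smaller than $n$. I would first try to show that if $A_{t_1+m-1} \neq V$, then some node gets pushed to $2m$, giving a contradiction. The route is to track the nodes in $A_{t_1}$: by Claim~\ref{claim:first-max-ttl} they are \set\ sources with $\timer = 1$.

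By the timer argument of Claim~\ref{claim:set-to-locked}, such a source becomes \locked\ with $\maxttl = m$ within $m - 1$ rounds, that is, by time $t_1 + m - 1$. At that moment $L_{t_1+m-1} \neq \emptyset$, so Claim~\ref{claim:universal-lock} forces $L_{t_1+m-1} = V$. In particular, every node has $\maxttl = m$ at that time.

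Applying the first step from then on gives $A_t = V$ for all $t \geq t_1 + m - 1$. The delicate points are two. First, I must check that the source in $A_{t_1}$ is not reset before locking; I expect to use Lemma~\ref{lem:source-ttl}, which keeps its \ttl\ at $m$. Second, I must handle the case where it locks earlier because its timer is advanced by synchronization; this still happens within the window $[t_1 + 1, t_1 + m - 1]$ given by Claim~\ref{claim:set-to-locked}.
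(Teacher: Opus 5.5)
Your final argument is the paper's proof. Claim~\ref{claim:first-max-ttl} gives a \set\ source with $\timer = 1$ at time $t_1$. Claim~\ref{claim:set-to-locked} then makes it \locked\ with $\maxttl = m$ at some $t' \in [t_1+1, t_1+m-1]$. Claim~\ref{claim:universal-lock} gives $L_{t'} = V$. Lemma~\ref{lem:maxttl-nondecreasing} together with Claim~\ref{claim:fixed-max-ttl} carries $\maxttl = m$ forward to all $t \geq t_1 + m - 1$.

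Three small remarks. The flooding step giving $t_1 + n - 1$ is correct but unnecessary. You should apply Claim~\ref{claim:universal-lock} at the lock time $t'$ rather than asserting $L_{t_1+m-1} \neq \emptyset$, since the lock may occur strictly earlier. Both of your ``delicate points'' are already handled inside the proof of Claim~\ref{claim:set-to-locked}.
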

    \begin{claimproof}
        By the definition of time $t_1$, there is a node $u \in V$ with $u_{t_1}.\maxttl = m$; by Claim~\ref{claim:first-max-ttl}, we also have $u_{t_1}.\state = \set$, $u_{t_1}.\source = \true$, and $u_{t_1}.\timer = 1$.
        Thus, by Claim~\ref{claim:set-to-locked}, there is a time $t' \in [t_1 + 1, t_1 + m - 1]$ when $u_{t'}.\state = \locked$, $u_{t'}.\maxttl = m$, and $u_{t'}.\timer = 0$.
        So $u \in L_{t'} = V$ by Claim~\ref{claim:universal-lock}, implying that $v_{t'}.\maxttl = m$ for all nodes $v \in V$.
        By Lemma~\ref{lem:maxttl-nondecreasing} and Claim~\ref{claim:fixed-max-ttl}, $v_t.\maxttl = m$ for all times $t \geq t_1 + m - 1 \geq t'$.
    \end{claimproof}
        
    \begin{claim} \label{claim:reset-maxttl-time}
        For all times $t \geq t_1$, if there is a node $v \in V$ with $v_t.\state = \reset$ and $v_t.\maxttl = m$, then $t \geq t_1 + m$.
    \end{claim}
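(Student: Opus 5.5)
I will prove the claim by showing a stronger invariant by induction on $t \in [t_1, t_1 + m - 1]$. The invariant says: every node $v$ with $v_t.\maxttl = m$ satisfies
\[
v_t.\state = \set, \qquad v_t.\ttl \geq m - (t - t_1), \qquad 1 \leq v_t.\timer \leq t - t_1 + 1 .
\]
Since $m - (t - t_1) \geq 1$ on this interval, Lemma~\ref{lem:zero-energy} then rules out $\reset$ nodes with $\maxttl = m$ at any time $t < t_1 + m$, which is exactly the claim. (Time $t_1 + m - 1$ is covered because the $\ttl$ bound is still $1$ there.)

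\textbf{Base case and a consequence of the invariant.} The base case $t = t_1$ is Claim~\ref{claim:first-max-ttl} together with Lemma~\ref{lem:source-ttl}: every node with $\maxttl = m$ has $\ttl = m$ and $\timer = 1$. The invariant has one immediate consequence. For $t < t_1 + m - 1$ there are no \locked\ nodes with $\maxttl = m$, so $L_t = \emptyset$. Combined with Claim~\ref{claim:fixed-max-ttl}, this means no node ever executes Line~\ref{alg:increase-energy-locked} in such a round; otherwise some node would obtain $\maxttl = 2m$.

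\textbf{Inductive step.} Fix a round $t \leq t_1 + m - 2$ and a node $v$ with $v_{t+1}.\maxttl = m$. There are two cases.
\begin{itemize}
\item If $v_t.\maxttl < m$, then $v$ executes Line~\ref{alg:increase-energy-recv}. It must have a neighbor $u$ with $u_t.\maxttl = m$, and by induction $u$ is \set. So $S \neq \emptyset$, and $v$ becomes \set\ at Line~\ref{alg:join-set-nodes}. It synchronizes its timer to $\maxtimer \leq t - t_1 + 1$ and increments it to at most $t - t_1 + 2 < m$. So it does not become \locked. If it is not a source, it takes $\recvttl \geq u_t.\ttl \geq m - (t - t_1)$, giving a new $\ttl \geq m - (t + 1 - t_1) \geq 1$. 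So it does not reset at Line~\ref{alg:no-energy}.
\item If $v_t.\maxttl = m$, then $v$ is \set\ by induction, and the same computation applies with $v$'s own $\ttl$ as a candidate for $\recvttl$. If $v$ is a source, Lemma~\ref{lem:source-ttl} gives $\ttl = m$ directly.
\end{itemize}
In both cases the timer lower bound $\geq 1$ holds because the increment from a value below $m - 1$ never wraps to $0$. The witness lines (Line~\ref{alg:become-set-cond}) only act on \reset\ nodes, so they cannot break the invariant.

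\textbf{Main obstacle.} The delicate part is the timer bookkeeping. I must ensure that no $\maxttl = m$ node reaches $\timer = 0$, and hence becomes \locked\ and possibly \reset, before time $t_1 + m - 1$. This requires that synchronization at Line~\ref{alg:timer-sync} only pulls timers toward the maximum, which is bounded by $t - t_1 + 1$. Nodes with smaller $\maxttl$ contribute no timers, since $S$ filters on equal $\maxttl$. So the bound propagates cleanly.
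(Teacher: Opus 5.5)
There is a genuine gap: your strengthened invariant is false at its last time point, so the induction fails in its final step. Take the node $u$ from Claim~\ref{claim:first-max-ttl}. It has $\timer = 1$ at $t_1$, and by your own upper bound its timer increments without wrapping until it reaches $m-1$ at time $t_1+m-2$. In round $t_1+m-2$ it wraps to $0$ and $u$ becomes \locked. So at $t_1+m-1$ the node $u$ violates both ``$v_t.\state=\set$'' and ``$\timer\ge 1$''. This is exactly what Claim~\ref{claim:all-maxttl} uses, and you can see it in the running example of Appendix~\ref{app:runningexample}: there $t_1=8$, $m=4$, and every node is \locked\ at $t=11$. The failing step is ``increments it to at most $t-t_1+2<m$'', which is false at $t=t_1+m-2$ because there $t-t_1+2=m$; for the same reason ``the increment from a value below $m-1$ never wraps'' does not apply.

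A second problem is your claim that no node executes Line~\ref{alg:increase-energy-locked} before $t_1+m-1$. The facts $L_t=\emptyset$ and Claim~\ref{claim:fixed-max-ttl} only forbid doubling by a node whose \maxttl\ is already $m$ after Line~\ref{alg:increase-energy-recv}. A node whose \maxttl\ after that line is $m/2$ can receive a \locked\ message carrying $m/2$ and double to exactly $m$. Such a node has no neighbor with \maxttl\ $m$, so your first bullet does not cover it. The case is harmless, since the node ends the round \set\ with \source\ $=\true$, $\ttl=m$ and $\timer=1$, but you must handle it.

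The conclusion survives, and there are two easy repairs. One repair keeps the \state/\timer\ parts of your invariant only for $t\le t_1+m-2$ and treats round $t_1+m-2$ separately. In that round, a node with \maxttl\ $m$ whose timer wraps either becomes a source, so $\ttl=m$, or computes $\recvttl\ge 2$ from its own \ttl\ or that of a \set\ neighbor with \maxttl\ $m$. Either way it still ends with $\ttl\ge 1$.

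The simpler repair, which is what the paper does, drops \state\ and \timer\ entirely and inducts on the \ttl\ bound alone for all $t\ge t_1$ using Lemma~\ref{lem:ttl-value}. If $v_{t+1}.\source=\true$ then $v_{t+1}.\ttl=m$. Otherwise $v_{t+1}.\ttl=\max\{0,\max_{u\in M_t[v]}u_t.\ttl-1\}$, and every $u\in M_t[v]$ has $u_t.\maxttl=m$, so the hypothesis applies directly. You should still check that $M_t[v]\neq\emptyset$ in this case: a node that doubles at Line~\ref{alg:increase-energy-locked} is still a source at the end of that round. This route needs no timer bookkeeping, which is exactly where your version breaks.
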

    \begin{claimproof}
        We will show that for all times $t \geq t_1$, each node $u \in V$ with $u_t.\maxttl = m$ has $u_t.\ttl \geq m - (t - t_1)$.
        This implies the claim: if there is a node $v \in V$ with $v_t.\state = \reset$ and $v_t.\maxttl = m$, then $v_t.\ttl = 0$ by Lemma~\ref{lem:zero-energy}, so we conclude $t \geq t_1 + m$.

        Argue by induction on $t \geq t_1$.
        If $t = t_1$, then by Claim~\ref{claim:first-max-ttl}, any node $v \in V$ with $v_{t_1}.\maxttl = m$ also has $v_{t_1}.\source = \true$.
        By Lemma~\ref{lem:source-ttl}, this implies $v_{t_1}.\ttl = v_{t_1}.\maxttl = m = m - (t - t_1)$, as desired.

        Now suppose the claim holds for any $t \geq t_1$ and consider any node $v \in V$ with $v_{t+1}.\maxttl = m$.
        Then $v_{t+1}.\ttl$ is characterized by Lemma~\ref{lem:ttl-value}.
        Since $t + 1 > t_1 \geq 0$, we can ignore the $t = 0$ case.
        If $v_{t+1}.\source = \true$, then $v_{t+1}.\ttl = v_{t+1}.\maxttl = m > m - (t + 1 - t_1)$, as desired.
        Otherwise, by the induction hypothesis,
        \begin{equation}
            v_{t+1}.\ttl = \max\left\{0, \max_{u \in M_t[v]} u_t.\ttl - 1\right\}
            \geq m - (t - t_1) - 1
            \geq m - (t + 1 - t_1),
        \end{equation}
        as desired.
        This completes the induction and thus proves the claim.
    \end{claimproof}
        
    \begin{claim} \label{claim:all-locked}
        For all times $t \geq t_0 + 2m$, $L_t = V$.
    \end{claim}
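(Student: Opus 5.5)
The plan is to show two things: that by time $t_0 + 2m$ there is a \locked\ node with $\maxttl = m$ and $\source = \true$, and that once such a node exists it never disappears. Claim~\ref{claim:universal-lock} then upgrades ``$L_t \neq \emptyset$'' to ``$L_t = V$''.

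The persistence half is direct. Suppose some $w \in L_{t'}$ has $w_{t'}.\source = \true$ for some $t' \geq t_0$. Repeated application of Claim~\ref{claim:continuous-signal} gives, for every $t \geq t'$, a node in $L_t$ with $\source = \true$. In particular $L_t \neq \emptyset$ for all $t \geq t'$, and hence $L_t = V$ by Claim~\ref{claim:universal-lock}. So it suffices to find such a $t' \le t_0 + 2m$.

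To find $t'$, I would split on the configuration in the window starting at $t_0$.
\begin{itemize}
\item \emph{Case 1.} Some node $v$ has $v_t.\state = \set$, $v_t.\source = \true$ and $v_t.\maxttl = m$ at some $t \in [t_0, t_0 + m]$. Then Claim~\ref{claim:locked-bcaster} yields a \locked\ source with $\maxttl = m$ by time $t + m \leq t_0 + 2m$.
\item \emph{Case 2.} Some node at $t_0$ is already in $L_{t_0}$ with $\source = \true$. This is exactly the persistence hypothesis, with $t' = t_0$.
\item \emph{Case 3.} Neither of the above holds, and no node with $\maxttl = m$ has $\source = \true$.
\begin{itemize}
\item If some node is \set\ with $\maxttl = m$, Claim~\ref{claim:set-bcaster} contradicts the assumption, so all nodes with $\maxttl = m$ are \locked\ or \reset.
\item If all nodes are \locked\ with $\maxttl = m$ ($L_{t_0} = V$), Claim~\ref{claim:lock-timer-sync} says their timers are synchronized. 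Within fewer than $m$ rounds they all reach $\timer = 0$ together at Line~\ref{alg:timer-zero-cond}. A current witness then sets $\source = \true$ while staying \locked, because its own $\ttl$ makes $\recvttl \ge 1$. This happens unless $\ttl$ values expire first. If they do expire, some node becomes \reset, so $L$ becomes empty and we fall into the next sub-case.
\item Otherwise, I would use Claim~\ref{claim:no-bcaster} once all nodes have $\maxttl = m$. By Claim~\ref{claim:all-maxttl} this holds from $t_1 + m - 1$ onward. If $t_0 < t_1 + m - 1$, I would instead use Claim~\ref{claim:first-max-ttl}: the node that first obtained $\maxttl = m$ is a \set\ source, and Claim~\ref{claim:set-bcaster} keeps such a source alive, which puts us in Case 1. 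Claim~\ref{claim:no-bcaster} produces a node with $\source = \true$ and $\maxttl = m$ within $m$ rounds. If it is \locked, we are done; if it is \set, apply Claim~\ref{claim:locked-bcaster}.
\end{itemize}
\end{itemize}

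I expect the timing accounting to be the main obstacle, specifically keeping the total at $2m$ rather than $3m$. The bad scenario is this: all nodes are \locked\ without a source at $t_0$; their $\ttl$ values drain until they \reset; then a witness becomes a \set\ source via Line~\ref{alg:become-set-cond}; and it needs up to $m$ more rounds to lock. To handle it, I would tie the two waits together.
\begin{itemize}
\item \locked\ nodes without a source have synchronized timers. Their $\ttl$ was last refreshed at the start of the current phase, so $\ttl$ expiry (Lemma~\ref{lem:ttl-updating}) and the timer wrap to $0$ are both bounded by the same $m$-round phase.
\item A \set\ source created at time $\tau$ starts at $\timer = 0$, so by Claim~\ref{claim:set-to-locked} it locks by $\tau + m$. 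Claim~\ref{claim:reset-maxttl-time} pushes $\tau$ late only when the \reset\ happens early.
\end{itemize}
If this combined bound still fails to reach $2m$, I would fall back on a finer invariant: the earliest moment, after $t_0$, at which a node with $\maxttl = m$ has $\source = \true$ is at most $t_0 + m$. I would prove it by induction using Lemma~\ref{lem:ttl-value}, together with the fact that witnesses re-check $\source$ exactly when $\timer = 0$ or when they become \set.
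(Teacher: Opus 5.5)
There is a genuine gap: the case $L_{t_0}=V$ with no source is never closed. Your reduction to finding a \locked\ source with $\maxttl=m$ by $t_0+2m$ (then propagating with Claims~\ref{claim:continuous-signal} and~\ref{claim:universal-lock}) is sound, and so are your Cases~1 and~2.

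In that case you follow the dynamics forward. You send the ttl-expiry branch into the \reset\ sub-case at some later time $\tau$ and apply Claim~\ref{claim:no-bcaster} there, which only yields $\tau+2m$. The repair and the fallback invariant are then left unproved. Moreover, this branch is the only one that occurs. In a \locked\ phase with no source, $e^*$ is at most $m-1$ at the phase boundary and drops by one per round (Lemma~\ref{lem:ttl-updating}), so the nodes always reset before the timer can wrap.

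The missing observation is that Claim~\ref{claim:no-bcaster} already applies at $t_0$ itself. $L_{t_0}=V$ means every node has $\maxttl=m$, and you are assuming no node has $\source=\true$. The claim therefore gives $t'\in[t_0,t_0+m]$ and a node $v$ with $v_{t'}.\source=\true$ and $v_{t'}.\maxttl=m$. Then $v_{t'}.\ttl=m>0$, so $v$ is not \reset\ (Lemmas~\ref{lem:source-ttl} and~\ref{lem:zero-energy}). Either $v\in L_{t'}$ and you are done, or $v$ is a \set\ source and Claim~\ref{claim:locked-bcaster} gives a \locked\ source by $t'+m\le t_0+2m$. This is how the paper handles the case, and it removes the ``ttl drain versus timer wrap'' bookkeeping entirely. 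Your tie-together idea can be completed: the reset happens by $t_0+m-1$, and a witness becomes a \set\ source in that same round, which is your Case~1. It is simply unnecessary.

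Your treatment of $t_0<t_1+m-1$ in the last sub-case is also not valid as written. Claim~\ref{claim:set-bcaster} does not keep a source alive over time. It only produces a \set\ source at a time when some \set\ node with $\maxttl=m$ exists, and that sub-case excludes exactly this at $t_0$. The first source may also have locked and dropped $\source$ at its timer wrap before $t_0$.

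The branch is in fact empty. In that sub-case, Claim~\ref{claim:fixed-max-ttl} gives some node with $\maxttl=m$ at $t_0$, and it can be neither \locked\ nor \set, so it is \reset. Claim~\ref{claim:reset-maxttl-time} then forces $t_0\ge t_1+m$, and Claim~\ref{claim:all-maxttl} gives every node $\maxttl=m$, so every node is \reset. A witness becomes a \set\ source at $t_0+1$, and Claim~\ref{claim:locked-bcaster} gives a \locked\ source by $t_0+1+m\le t_0+2m$. With these two corrections your case split coincides with the paper's.
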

    \begin{claimproof}
        By Claims~\ref{claim:universal-lock} and~\ref{claim:continuous-signal}, it suffices to identify a node $v \in V$ and a time $t' \in [t_0, t_0 + 2m]$ such that $v \in L_{t'}$ and $v_{t'}.\source = \true$.
        If such a node exists at time $t' = t_0$, we are done.

        Suppose instead there is a node $u \in V$ such that $u_{t_0}.\state = \set$ and $u_{t_0}.\maxttl = m$.
        By Claim~\ref{claim:set-bcaster}, there is a node $v \in V$ such that $v_{t_0}.\state = \set$, $v_{t_0}.\source = \true$, and $v_{t_0}.\maxttl = m$.
        Then by Claim~\ref{claim:locked-bcaster}, there is a node $w \in V$ and a time $t' \in [t_0 + 1, t_0 + m]$ such that $w \in L_{t'}$ and $w_{t'}.\source = \true$, as desired.

        Now suppose there is a node $u \in L_{t_0}$ (implying $L_{t_0} = V$ by Claim~\ref{claim:universal-lock}) but $u_{t_0}.\source = \false$ for all $u \in V$.
        By Claim~\ref{claim:no-bcaster}, there is a node $v \in V$ and a time $t' \in [t_0, t_0 + m]$ such that $v_{t'}.\source = \true$ and $v_{t'}.\maxttl = m$.
        By Lemmas~\ref{lem:maxttl-nondecreasing} and~\ref{lem:source-ttl}, $v_{t'}.\ttl = v_{t'}.\maxttl > 1$.
        This, in turn, implies $v_{t'}.\state \neq \reset$ by Lemma~\ref{lem:zero-energy}.
        If $v_{t'}.\state = \locked$, we are done; otherwise, $v_{t'}.\state = \set$.
        Again by Claim~\ref{claim:locked-bcaster}, there is a node $w \in V$ and a time $t'' \in [t' + 1, t' + m] \leq t_0 + 2m$ such that $w \in L_{t''}$ and $w_{t''}.\source = \true$, as desired.

        Finally, suppose that none of the previous cases apply.
        By Claim~\ref{claim:fixed-max-ttl}, there is some node $v \in V$ with $v_{t_0}.\maxttl = m$.
        Since none of the previous cases apply, $v_{t_0}.\state = \reset$.
        This implies $t_0 \geq t_1 + m$ by Claim~\ref{claim:reset-maxttl-time}, which in turn implies all nodes $u \in V$ have $u_{t_0}.\maxttl = m$ by Claim~\ref{claim:all-maxttl} and $u_{t_0}.\state = \reset$ by the supposition that none of the previous cases apply.
        Tracing Algorithm~\ref{alg:deterministic-coloring} from these values, all nodes $u \in V$ have $\varline{u}{t_0}{alg:locked-bcast-energy}{\state} = \reset$.
        Since the signal $s \neq \bot$ is persistent at time $t_0$, at least one node $w \in V$ witnesses it, satisfying the condition on Line~\ref{alg:become-set-cond} and yielding $w_{t_0 + 1}.\state = \set$ and $w_{t_0 + 1}.\source = \true$; also, $w_{t_0 + 1}.\maxttl = m$ by Lemma~\ref{lem:maxttl-nondecreasing} and Claim~\ref{claim:fixed-max-ttl}.
        Thus, by Claim~\ref{claim:locked-bcaster}, there is a node $x \in V$ and a time $t' \in [t_0 + 2, t_0 + 1 + m] < t_0 + 2m$ such that $x \in L_{t'}$ and $x_{t'}.\source = \true$, as desired.
    \end{claimproof}
        
    \begin{claim} \label{claim:source-color}
        For all times $t \geq t_0 + 3m$, if $v_t.\source = \true$ then $v_t.\algcolor = c_s$.
    \end{claim}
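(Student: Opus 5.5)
The plan is to trace, for a node $v$ with $v_t.\source = \true$ at some time $t \geq t_0 + 3m$, the round in which $v.\source$ last became \true. I will show that this round is after $t_0$ and that it set $\algcolor = c_s$. I will then show that \algcolor\ cannot change while \source\ stays \true.

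First I would list every line that can set $\source \gets \true$: Lines~\ref{alg:increase-energy-locked}, \ref{alg:locked-signal-true}, and~\ref{alg:become-set-signal}.

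Line~\ref{alg:increase-energy-locked} cannot run in any round $t'' \geq t_0 + m - 1$. By Claim~\ref{claim:all-maxttl} (using $t_1 \leq t_0$), every node then has $\maxttl = m$. Executing Line~\ref{alg:increase-energy-locked} would give $\maxttl = 2m$, contradicting Claim~\ref{claim:fixed-max-ttl}. The same reasoning rules out Line~\ref{alg:increase-energy-recv} in those rounds.

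The other two lines execute only when $v$ witnesses a signal. In any round $t'' \geq t_0$ that signal is $s$, because the signal is persistent. These lines also set $\algcolor \gets c_s$ at the same moment (Lines~\ref{alg:locked-new-color} and~\ref{alg:become-set-color}).

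Next I would show that \algcolor\ is frozen while \source\ remains \true. The lines that change \algcolor\ are Lines~\ref{alg:increase-energy-recv}, \ref{alg:locked-new-color}, \ref{alg:no-energy-color}, \ref{alg:update-color}, and~\ref{alg:become-set-color}.
\begin{itemize}
    \item Line~\ref{alg:increase-energy-recv} is excluded after $t_0 + m - 1$ by the argument above.
    \item Lines~\ref{alg:no-energy-color} and~\ref{alg:update-color} lie inside the branch guarded by $\source = \false$ (Line~\ref{alg:update-energy-cond}).
    \item Lines~\ref{alg:locked-new-color} and~\ref{alg:become-set-color} only ever write $c_s$ in rounds $\geq t_0$.
\end{itemize}
So if $v.\source$ becomes \true\ in a round $t'' \in [t_0 + m - 1, t)$ and stays \true\ through time $t$, then $v_t.\algcolor = c_s$.

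The key step is ruling out that $v.\source$ has been \true\ continuously since before $t_0 + m - 1$, which would let a stale color survive. I would use Claim~\ref{claim:all-locked}: for every time $t'' \geq t_0 + 2m$ we have $L_{t''} = V$. By Claim~\ref{claim:lock-timer-sync} all nodes share one timer, and by Lemma~\ref{lem:timer-lt-maxttl} it increments modulo $m$ on Line~\ref{alg:timer-increment}. Hence there is a round $t^\dagger \in [t_0 + 2m, t_0 + 3m)$ in which every node reaches $\timer = 0$ on Line~\ref{alg:timer-zero-cond}. In that round every node re-evaluates \source\ on Lines~\ref{alg:locked-check-signal}--\ref{alg:locked-signal-false}.

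Since $t^\dagger < t$, if $v_t.\source = \true$ then $v.\source$ was \true\ immediately after round $t^\dagger$ or was reset to \true\ in some later round $< t$. In either case, the last round in which it was set to \true\ lies in $[t_0 + m - 1, t)$. The previous paragraph then gives $v_t.\algcolor = c_s$.

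I expect this re-evaluation step to be the main obstacle. It requires checking two things carefully: that all nodes stay \locked\ with $\maxttl = m$ through the interval, and that in the timer-zero round no node instead executes Line~\ref{alg:no-energy-state} in a way that skips the \source\ update. The second check is harmless, since \source\ is assigned on Lines~\ref{alg:locked-signal-true}--\ref{alg:locked-signal-false} before Line~\ref{alg:update-energy-cond} is reached.
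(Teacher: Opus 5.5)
Your proof is correct and uses the same key idea as the paper's. By Claims~\ref{claim:all-locked} and~\ref{claim:lock-timer-sync} there is a synchronized timer-zero round in $[t_0+2m, t_0+3m)$ in which every node re-evaluates \source: witnesses set $\algcolor = c_s$ together with $\source = \true$, all other nodes clear \source, and afterwards no node with \source\ still \true\ changes its color. The only difference is presentation: you reason backward from the last assignment $\source \gets \true$, excluding Lines~\ref{alg:increase-energy-recv} and~\ref{alg:increase-energy-locked} via Claims~\ref{claim:all-maxttl} and~\ref{claim:fixed-max-ttl}, whereas the paper inducts forward from that round using the fact that all nodes stay \locked.
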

    \begin{claimproof}
        By Claim~\ref{claim:all-locked}, $v \in L_t$ for all nodes $v \in V$ and all times $t \geq t_0 + 2m$.
        Moreover, by Claim~\ref{claim:lock-timer-sync}, there is a fixed value $\timer_t \in \{0, \ldots, m - 1\}$ for each $t \geq t_0 + 2m$ such that $v_t.\timer = \timer_t$ for all nodes $v \in L_t = V$.
        Tracing Algorithm~\ref{alg:deterministic-coloring} from these values, it is clear that $\timer_t$ increments modulo $m$ once per round.
        Let $t' \in [t_0 + 2m, t_0 + 3m)$ be the earliest time at which $\timer_{t'} = m - 1$.
        Clearly, all nodes $v \in L_{t'} = V$ will set
        \begin{equation}
            \varline{v}{t'}{alg:timer-increment}{\timer} = \left(\varline{v}{t'}{alg:timer-sync}{\timer} + 1\right) \bmod \varline{v}{t'}{alg:timer-sync}{\maxttl} = (\timer_{t'} + 1) \bmod m = 0,
        \end{equation}
        satisfying the condition on Line~\ref{alg:timer-zero-cond}.
        All nodes $v \in V$ then set $\varline{v}{t'}{alg:become-locked}{\state} = \locked$ and either set $\varline{v}{t'}{alg:locked-new-color}{\algcolor} = c_s$ and $\varline{v}{t'}{alg:locked-signal-true}{\source} = \true$ if they witness the signal or $\varline{v}{t'}{alg:locked-signal-false}{\source} = \false$ otherwise.
        The witnesses $w \in W_{t'}$ will not change their \algcolor\ or \source\ values again in round $t'$ since they do not satisfy the conditions on Lines~\ref{alg:update-energy-cond} or~\ref{alg:become-set-cond}, yielding $w_{t'+1}.\algcolor = c_s$ and $w_{t'+1}.\source = \true$, as desired.
        Any non-witnesses $v \not\in W_t$ do not satisfy Line~\ref{alg:become-set-cond} either, so they end round $t'$ with $v_{t'+1}.\source = \false$.

        Thus, the claim holds at time $t' + 1 \in (t_0 + 2m, t_0 + 3m]$.
        Now suppose by induction that the claim holds for some $t \geq t_0 + 3m$.
        Consider any node $v$ with $v_{t+1}.\source = \true$; if none exist, we are done.
        Since $t > t_0 + 2m$, Claim~\ref{claim:all-locked} implies $v \in L_t$ and $v \in L_{t+1}$.
        The former implies that when executing Algorithm~\ref{alg:deterministic-coloring} in round $t$, $v$ does not make any variable updates through Line~\ref{alg:timer-sync}, yielding $v_t.\algcolor = \varline{v}{t}{alg:timer-sync}{\algcolor}$ and $v_t.\source = \varline{v}{t}{alg:timer-sync}{\source}$.
        The latter means that when tracing this same execution backwards, $v$ did not satisfy the condition on Line~\ref{alg:become-set-cond} and thus $v_{t+1}.\source = \varline{v}{t}{alg:locked-bcast-energy}{\source} = \true$ and $v_{t+1}.\algcolor = \varline{v}{t}{alg:locked-bcast-energy}{\algcolor}$.
        Moreover, since no updates to \source\ occur in Lines~\ref{alg:update-energy-cond}--\ref{alg:locked-bcast-energy}, we must have $\varline{v}{t}{alg:locked-signal-false}{\source} = \varline{v}{t}{alg:locked-bcast-energy}{\source} = \true$ and $\varline{v}{t}{alg:locked-signal-false}{\algcolor} = \varline{v}{t}{alg:locked-bcast-energy}{\algcolor}$. 
        This is only possible if either $v$ executes both Lines~\ref{alg:locked-new-color} and~\ref{alg:locked-signal-true}, yielding $v_{t+1}.\algcolor = \varline{v}{t}{alg:locked-signal-false}{\algcolor} = c_s$, as desired, or if $v$ does not in fact make any updates to $v.\algcolor$ or $v.\source$ at all in round $t$, yielding $v_{t+1}.\algcolor = \varline{v}{t}{alg:timer-sync}{\algcolor} = v_t.\algcolor = c_s$ by the induction hypothesis.
        Either case completes the induction.
    \end{claimproof}
        
    \begin{claim} \label{claim:non-source-color}
        For all times $t = t_0 + 3m + i$ where $i \geq 0$, if $v_t.\ttl \geq \max\{1, m - i\}$, then $v_t.\algcolor = c_s$.
    \end{claim}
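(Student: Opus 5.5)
I will prove this claim by induction on $i \geq 0$, using the characterization of \ttl\ values from Lemma~\ref{lem:ttl-value} together with the fact that, after time $t_0 + 2m$, all nodes are in $L_t = V$ (Claim~\ref{claim:all-locked}). In particular, every node has $\maxttl = m$ and $\state = \locked$, so every neighbor's message is in $L$.

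\textbf{Base case.} For $i = 0$, take any node $v$ with $v_t.\ttl \geq m$ at $t = t_0 + 3m$. Lemma~\ref{lem:ttl-bound} forces $v_t.\ttl = m = v_t.\maxttl$. Lemma~\ref{lem:source-ttl} then gives $v_t.\source = \true$, and Claim~\ref{claim:source-color} gives $v_t.\algcolor = c_s$.

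\textbf{Inductive step.} Suppose the claim holds for $t = t_0 + 3m + i$, and let $v$ satisfy $v_{t+1}.\ttl \geq \max\{1, m - i - 1\}$.
\begin{itemize}
\item If $v_{t+1}.\source = \true$, Claim~\ref{claim:source-color} finishes.
\item Otherwise $v$ is \locked\ with $\source = \false$ throughout round $t$, so it executes Lines~\ref{alg:choose-max-ttl}--\ref{alg:update-color}. Its new \ttl\ is $\recvttl - 1 \geq 1$, so the nonzero branch runs and $v_{t+1}.\algcolor = c_{\recvsignal}$.
\end{itemize}
Here $\recvsignal$ is chosen among the closed-neighborhood nodes $u$ (all in $M_t[v]$) with $u_t.\ttl = \recvttl$. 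Each such $u$ satisfies $u_t.\ttl = \recvttl = v_{t+1}.\ttl + 1 \geq \max\{2, m-i\} \geq \max\{1, m-i\}$. By the induction hypothesis, every such $u$ has $u_t.\algcolor = c_s$.

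\textbf{Main obstacle.} The algorithm propagates \signal, not \algcolor, and picks the largest \signal\ among ties. So the hypothesis must really be about \signal. I will strengthen the claim to assert $v_t.\signal = s$ as well, and $v_t.\algcolor = c_s$ then follows. This is justified because every line that updates \algcolor\ updates \signal\ in tandem (as noted in Claim~\ref{claim:reset-color}), and Claim~\ref{claim:source-color} extends the same way since sources set $\signal \gets s$ on Line~\ref{alg:locked-new-signal}. With the strengthened hypothesis, all tied candidates carry $\signal = s$, so $\recvsignal = s$ and $v_{t+1}.\signal = s$, $v_{t+1}.\algcolor = c_s$.

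\textbf{Remaining check.} I also need that $v$'s own stale values cannot spoil the tie. Its own \ttl\ is included among the candidates, but if it attains $\recvttl$ it satisfies the same bound, so the induction hypothesis applies to it too. Likewise, a source's variables are unchanged by Line~\ref{alg:locked-bcast-energy}, so a source remains covered by the source case.
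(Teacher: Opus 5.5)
Your proof is correct and follows the paper's own argument: induction on $i$, a base case via Lemmas~\ref{lem:ttl-bound} and~\ref{lem:source-ttl} and Claim~\ref{claim:source-color}, and an inductive step showing that every node in $v$'s closed neighborhood attaining $\recvttl$ satisfies the bound $\max\{1, m-i\}$ at time $t$. Your split on $v_{t+1}.\source$ is equivalent to the paper's split on whether $v_{t+1}.\ttl = m$. Carrying $\signal = s$ through the induction is a worthwhile tightening: the paper's step that $v$ ``adopts the \signal\ and \algcolor\ of $u$'' silently uses the invariant $\algcolor = c_{\signal}$, and your strengthened hypothesis makes that explicit.
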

    \begin{claimproof}
        Argue by induction on $i \geq 0$.
        If $i = 0$, then $t = t_0 + 3m$.
        Consider any node $v$ with $v_t.\ttl \geq \max\{1, m - i\} = m$; if none exist, we are done.
        By Claim~\ref{claim:all-locked}, $v_t.\maxttl = m$.
        By Lemma~\ref{lem:source-ttl}, this implies $v_t.\source = \true$, which in turn implies $v_t.\algcolor = c_s$ by Claim~\ref{claim:source-color}.

        Now suppose the claim holds for any $i \geq 0$ and let $t = t_0 + 3m + i$.
        Consider any node $v$ with $v_{t+1}.\ttl \geq \max\{1, m - (i + 1)\}$; again, if none exist, we are done.
        If $v_{t+1}.\ttl = m$, then we have $v_{t+1}.\algcolor = c_s$ by the same argument as in the base case.
        Otherwise, if $v_{t+1}.\ttl < m$, then $v_{t+1}.\ttl \neq v_{t+1}.\maxttl$ by Claim~\ref{claim:all-locked} and thus $v_{t+1}.\source = \false$ by Lemma~\ref{lem:source-ttl}.
        We also have $v_{t+1}.\ttl \geq 1$, so $v_{t+1}.\state \neq \reset$ by Lemma~\ref{lem:zero-energy}.
        Tracing the execution of Algorithm~\ref{alg:deterministic-coloring} by $v$ in round $t$ backwards from $v_{t+1}.\state \neq \reset$ and $v_{t+1}.\source = \false$ shows $\varline{v}{t}{alg:timer-sync}{\state} \neq \reset$ and $\varline{v}{t}{alg:locked-signal-false}{\source} = \false$, satisfying the conditions on Lines~\ref{alg:timer-increment-cond} and~\ref{alg:update-energy-cond}, respectively.
        So $v_{t+1}.\ttl = \varline{v}{t}{alg:update-energy}{\ttl} = \recvttl - 1$, where $\recvttl$ (Line~\ref{alg:choose-max-ttl}) is whichever is larger: $\varline{v}{t}{alg:update-energy-cond}{\ttl}$ or the largest \ttl\ value $v$ received from any of its \set\ or \locked\ neighbors $u$ with $u_t.\maxttl = m$ at the start of round $t$.
        Since $L_t = V$ by Claim~\ref{claim:all-locked}, this complicated formulation simplifies to $\recvttl = \max_{u \in V} u_t.\ttl$.
        Consider any node $u$ with $u_t.\ttl = \recvttl$; if there are multiple, choose any with the largest $u_t.\signal$.
        Then
        \begin{equation}
            u_t.\ttl = \recvttl = v_{t+1}.\ttl + 1 \geq \max\{1, m - (i + 1)\} + 1 \geq \max\{1, m - i\}.
        \end{equation}
        By the induction hypothesis, $u_t.\algcolor = c_s$.
        Since we supposed $v_{t+1}.\ttl < m$, $v$ must execute Lines~\ref{alg:update-realsignal} and~\ref{alg:update-color}, adopting the \signal\ and \algcolor\ of $u$.
        Thus, $v_{t+1}.\algcolor = c_s$.
    \end{claimproof}

    We are now ready to prove the full lemma.
    Consider any node $v \in V$ and time $t \geq t_0 + 4m$, which can be rewritten as $t = t_0 + 3m + i$ for some $i \geq m$.
    By Claim~\ref{claim:all-locked}, $v_t.\state = \locked$.
    By Lemmas~\ref{lem:zero-energy} and~\ref{lem:ttl-bound}, this implies $v_t.\ttl \geq 1$.
    Since $v_t.\ttl \geq 1 = \max\{1, m - i\}$, we conclude that $v_t.\algcolor = c_s$ by Claim~\ref{claim:non-source-color}.
\end{proof}

\lemmaxttlupperbound*
\begin{proof}
Suppose for contradiction there exists a time $t_0$ such that $m^*(t_0) > 2n$ and $t_0$ is the minimal such time.
This implies $m^*(t_0) > 2$ and thus $\maxttl$ values have increased after initialization.
The only way for this to occur is for a node $v$ with $\varline{v}{t_0 - 1}{alg:increase-energy-recv}{\maxttl} = \frac{1}{2}m^*(t_0)$ to run line~\ref{alg:increase-energy-locked} in round $t_0 - 1$.
Let $m = m^*(t_0 - 1)$ and $t_1 = \min\{t \geq 0 : m^*(t) = m\}$.
Note that $m < m^*(t_0)$ by minimality of $t_0$ and since $m^*$ can only be decreased by a doubling at line~\ref{alg:increase-energy-locked} it must be that $m = \frac{1}{2}m^*(t_0) > n$.

For convenience, define 
$$M_t = \{v \in V : v_t.\maxttl = m\}$$
$$W_t = \{v \in M_t \mid v_t.\source = \true\}$$
$$s^*(t) = \max_{v \in M_t \mid v_t.\state = \set} (v_t.\timer)$$
$$S_t = \{v \in M_t \mid v_t.\state = \set \text{ and } v_t.\timer = s^*(t) \}$$
and for any set $X$ let $X_t^{\leq i} = \{v \in X_t \mid v_t.\ttl \leq i\}$.
We prove that for all $t \geq t_1$ either
\begin{enumerate}
	\item
	\label{cond:locked}
	\begin{enumerate}
		\item
		\label{cond:all-maxttl}
		$M_t = V$,
		
		\item
		\label{cond:all-locked}
		$v_t.\state = \locked$ for all $v \in V$,
		
		\item
		\label{cond:timer-sync}
		$v_t.\timer = w_t.\timer$ for all $v,w \in V$, and exactly one of the following cases is true:
			
		\item
		\label{cond:locked-source} 
		If there exists $v \in V$ with $v_t.\source = \true$, then $\lvert V_t^{\leq i} \rvert \leq \max(0, i - m + n)$ for $i = 0, \dots m$
		
		\item
		\label{cond:locked-no-source}
		Otherwise, if $v_t.\source = \false$ for all $v \in V$, then 
		\begin{enumerate}
			\item
			\label{cond:small-ttl}
			$\lvert V_t^{\leq i} \rvert \leq \max(0, i -m + n)$ for $i = 0,\dots,m-v_t.\timer-2$,
			
			\item
			\label{cond:large-ttl}
			$v_t.\ttl \leq m - v_t.\timer - 1$ for all $v \in V$, and
		\end{enumerate}
	\end{enumerate}
	Or
	\item
	\label{cond:setting}
	\begin{enumerate}
		\item
		\label{cond:small-maxttl}
		$v_t.\maxttl \leq m$ for all $v \in V$.
	
		\item
		\label{cond:not-locked}
		$v_t.\state \neq \locked$ for all $v \in M_t$,
		
		\item
		\label{cond:set-source}
		There exists $v \in S_t$ such that $v_t.\source = \true$.
		
		\item
		\label{cond:num-set-ttls}
		$\lvert S_t^{\leq i} \rvert \leq \max(0, i-m+|S_t|)$ for $i = 0,\dots,m$, and
		
		\item
		\label{cond:num-set}
		$\lvert S_t \rvert \geq \min(n, s^*(t))$.
	\end{enumerate}
\end{enumerate}

We prove by induction.
As a base case consider time $t_1$ and we will show that condition~\ref{cond:setting} is true.
Condition~\ref{cond:small-maxttl} must be true since $m^*(t_1) = m$ by definition.
Now let $v \in M_{t_1}$.
By the same logic as claim~\ref{claim:first-max-ttl} in Lemma~\ref{lem:persist-set-color}, $v_{t_1}.\state = \set$, $v_{t_1}.\source = \true$, and $v_{t_1}.\timer = 1$.
By Lemma~\ref{lem:source-ttl}, this also implies that $v_{t_1}.\ttl = v_{t_1}.\maxttl = m$.
Since this is true for all $v \in M_{t_1}$, clearly \ref{cond:not-locked} and~\ref{cond:set-source} are true.
Note $|S_{t_1}^{\leq m}| = |S_t| = \max(0, m - m +|S_t|)$.
For any $i < m$, $|S_{t_1}^{\leq i}| = 0 \leq \max(0, i - m + |S_t|)$.
Thus condition~\ref{cond:num-set-ttls} is true.
Finally, $|S_{t_1}| \geq 1 = s^*(t_1) \geq \min(n, s^*(t_1))$ so \ref{cond:num-set} is true.

Suppose that we have proven up to some $t \geq 0$.
We will show that one of these two cases will still be true at time $t+1$.
First, suppose that condition~\ref{cond:locked} was true at time $t$.
Consider the execution of the algorithm by any node $v$ in round $t$.
By condition~\ref{cond:all-maxttl}, no node receives a $\maxttl$ value different than its own, so line~\ref{alg:increase-energy-recv} is not run.
By condition~\ref{cond:all-locked}, all nodes are locked, so the first line nodes will execute is line~\ref{alg:timer-increment}
For all $v, w \in V$, $\varline{v}{t}{alg:timer-increment-cond}{\timer} = \varline{w}{t}{alg:timer-increment-cond}{\timer}$ by condition~\ref{cond:timer-sync} and because nodes did not change  $\timer$ in the execution up to this point.
We split into several cases at this point.

\begin{itemize}
	\item
	\textit{Case 1}.
	Condition~\ref{cond:locked-no-source} was true at time $t$ and all nodes have $\varline{v}{t}{alg:timer-zero-cond}{\timer} \leq m-2$.
	
	We will show that condition~\ref{cond:locked} with subcondition~\ref{cond:locked-no-source} will still be true at time $t+1$.
	Clearly the condition on line~\ref{alg:timer-zero-cond} is $\false$ for all nodes.
	Since all nodes have $v_t.\source = \false$, and $\source$ values have not changed, they all run line~\ref{alg:update-energy}.
	We claim that $\varline{v}{t}{alg:update-energy}{\ttl} > 0$ for all nodes $v$.
	If this fails for some $v$, then by definition of $l^*$ and since all nodes have non-zero \ttl, $v$ and all its neighbors had $v_t.\ttl = 1$.
	By the assumption of our case, it must be that $v_t.\timer = m - 3$ for all nodes.
	Thus $m - v_t.\timer - 2 = 1$ and so by induction, condition~\ref{cond:small-ttl}, and since $m \geq n + 1$, $|V_t^{\leq 1}| \leq \max(0, 1 - m + n) = 0$.
	So thus it is impossible for a node to have $\varline{v}{t}{alg:update-energy}{\ttl} = 0$.
	Thus nodes do not change any variables for the rest of the algorithm except maybe $\signal$ and $\algcolor$.
	Thus, conditions~\ref{cond:all-maxttl}, \ref{cond:all-locked}, and \ref{cond:timer-sync} will all be true at time $t+1$.
	Also, all nodes will have $v_{t+1}.\source = \false$ so it remains to show that condition~\ref{cond:locked-no-source} is true at $t+1$.
	
	Let $r$ be the unique (by condition~\ref{cond:timer-sync}) $\timer$ value for nodes at time $t$ and thus $r+1$ will be the \timer\ value at time $t+1$.
	First, we will show that condition~\ref{cond:small-ttl} holds at time $t+1$.
	Let $i \in [0, m - (r+1) - 2]$ and consider a node $v \in V_{t+1}^{\leq i}$.
	Since~\ref{alg:update-energy} is the last place nodes update $\ttl$ in round $t$, $v$ must have had $l^* \leq i + 1$ in round $t$.
	Since $v_t.\ttl$ is a candidate for $l^*$, this implies $v_t.\ttl \leq l^* \leq i + 1$.
	Thus $v \in V_t^{\leq i + 1}$ so $V_{t+1}^{\leq i} \subseteq V_t^{\leq i + 1}$.
	If $|V_t^{\leq i + 1}| = 0$, this implies that $|V_{t+1}^{\leq i}| = 0 \leq \max(0, i - m + n)$ and we are done.
	Otherwise, $|V_t^{\leq i+1}| \geq 1$.
	Note that $i+1 \leq m - r - 2$ and thus by induction and condition~\ref{cond:small-ttl}, $|V_t^{\leq i+1}| \leq \max(0, (i+1)-m+n)$.
	Thus $(i+1)-m+n \geq 1$; we will use this fact later.
	Also, since $i \leq m - r - 3$ and $r \geq 0$, we have $(i + 1) - m + n \leq n - 2$.
	Thus, $V_t^{\leq i + 1} \neq V$.
	Since $V_t^{\leq i + 1}$ and its complement are nonempty and $G$ is connected at time $t$, there must be nodes $v \in V_t^{\leq i+1}$ and $w \notin V_t^{\leq i+1}$ such that $vw \in E_t$.
	Thus, during round $t$, $v$ must have $l^* \geq w_t.\ttl > i + 1$.
	Since line~\ref{alg:update-energy} is the last place $v$ updates $\ttl$ in round $t$, we have $v_{t+1}.\ttl \geq l^* - 1  > i$, and thus $v \notin V_{t+1}^{\leq i}$.
	We now know that $V_{t+1}^{\leq i} \subseteq V_t^{\leq i+1} \setminus \{v\}$.
	Thus, 
	$$|V_{t+1}^{\leq i}| \leq |V_t^{\leq i +1}| - 1 \leq \max(0, (i+1)-m+n) - 1 = \max(0, i - m + n)$$
	where this final equality is because $(i+1)-m+n \geq 1$ as shown earlier.
	
	Next, we show that condition~\ref{cond:large-ttl} holds at time $t+1$.
	By induction, all nodes have $v_t.\ttl \leq m - r - 1$.
	Thus any node $w$ will have $l^* \leq m - r - 1$ in round $t$ and so $w_{t+1}.\ttl = l^* - 1 \leq m - r - 2$.
	
	\item
	\textit{Case 2}.
	Condition~\ref{cond:locked-no-source} was true at time $t$ and all nodes have $\varline{v}{t}{alg:timer-zero-cond}{\timer} = m-1$.
	
	We will show that condition~\ref{cond:setting} will be true at time $t+1$.
	As in case 1, the condition on line~\ref{alg:timer-zero-cond} will be $\false$ for all nodes.
	Thus all nodes run line~\ref{alg:update-energy}.
	Since $v_t.\timer +1 \mod{m} = m - 1$ and $m \geq 2$ (by Lemma~\ref{lem:maxttl-nondecreasing}), $v_t.\timer = m - 2$.
	Thus, by induction, condition~\ref{cond:small-ttl} says $|V_t^{\leq 0}| \leq \max(0, 0 - m + n) = 0$ since $n < m$ and condition~\ref{cond:large-ttl} says all nodes have $v_t.\ttl \leq 1$.
	Thus, all nodes $v$ must have $v_t.\ttl = 1$.
	During the execution in round $t$, all nodes therefore set $l^* = 1$ and thus $\varline{v}{t}{alg:update-energy}{\ttl} = l^* - 1 = 0$.
	Thus, all nodes have $\varline{v}{t}{alg:no-energy-state}{\state} = \reset$ and $\varline{v}{t}{alg:no-energy-timer}{\timer} = 0$.
	Since nodes can not become $\locked$ after these lines, $v_{t+1} \neq \locked$ for all $v$, and thus condition~\ref{cond:not-locked} is true.
	Note that all nodes end with $v_{t+1}.\timer = 0$ and so $s^*(t+1) = 0$.
	Any node with $v \in S_{t+1}$ must have run line~\ref{alg:become-set-energy}, and thus has $v_{t+1}.\ttl = v_{t+1}.\maxttl = m$.
	Thus $|S_t^{\leq i}| = 0 \leq \max(0, i - m + |S_t|)$ for any $i \neq m$, and $|S_t^{\leq m}| \leq \max(0, m - m + |S_t|) = |S_t|$ clearly must be true.
	Thus condition~\ref{cond:num-set-ttls} is true.
	Finally, $\min(n, s^*(t)) = 0$ so condition~\ref{cond:num-set} is true and all nodes have $v_{t+1}.\maxttl = m$ so condition~\ref{cond:small-maxttl} is true.
	
	\item
	\textit{Case 3}.
	Condition~\ref{cond:locked-source} was true at time $t$, $\varline{v}{t}{alg:timer-zero-cond}{\timer} = 0$ for all $v \in V$, and there is no signal in round $t$.
	
	We will prove that condition~\ref{cond:locked} with subcondition~\ref{cond:locked-no-source} will be true at time $t+1$.
	Clearly, by the conditions of this case, all nodes have $\varline{v}{t}{alg:locked-signal-false}{\source} = \false$.
	Thus, all nodes will execute line~\ref{alg:update-energy}.
	Proceeding in a very similar way to case 1, we first note that all nodes have $\varline{v}{t}{alg:update-energy}{\ttl} > 0$ since $|V_t^{\leq 1}| \leq \max(0, 1 - m + 1) = 0$.
	Then conditions~\ref{cond:all-maxttl}, \ref{cond:all-locked}, and \ref{cond:timer-sync} will all be true at time $t+1$.
	Also, by the assumptions of the case, all nodes have $v_{t+1}.\timer = 0$.
	For $i \in [0, m - 2]$, $V_{t+1}^{\leq i} \subseteq V_t^{\leq i + 1}$.
	If $V_t^{\leq i + 1} = \emptyset$, then $V_{t+1}^{\leq i} = 0 \leq \max(0, i - m + n)$.
	Otherwise, $1 \leq |V_t^{\leq i+1}| \leq \max(0, (i+1)-m+n) \leq n - 1$ where the second inequality is by induction and thus $i+1-m+n \geq 1$.
	Thus, there are nodes $v \in V_t^{\leq i+1}$ and $w$ with $w \notin V_t^{\leq i + 1}$ such that $vw \in E_t$.
	This implies that $v_{t+1}.\ttl \geq w_t.\ttl - 1 \geq i + 1$.
	Thus $V_{t+1}^{\leq i} \subseteq V_t^{\leq i+1} \setminus \{v\}$ and so $|V_{t+1}^{\leq i}| \leq |V_t^{\leq i+1}| - 1 \leq \max(0, i+1-m+n) - 1 = \max(0, i-m+n)$.
	Since this is true for all $i = 0, \ldots, m - 2$ and $m - v_{t+1}.\timer - 2 = m - 2$ for all $v$, condition~\ref{cond:small-ttl} will be true at $t+1$.
	Note that all nodes have $v_t.\ttl \leq v_t.\maxttl = m$ by Lemma~\ref{lem:ttl-bound}.
	Thus, all nodes have $v_{t+1}.\ttl = \varline{v}{t}{alg:update-energy}{\ttl} = l^* - 1 \leq m - 1 = m - v_{t+1}.\timer - 1$.
	Thus, condition~\ref{cond:large-ttl} is also true at time.
	
	\item
	\textit{Case 4}.
	Condition~\ref{cond:locked-source} was true at time $t$, and $\varline{v}{t}{alg:timer-zero-cond}{\timer} \neq 0$ for all $v \in V$, or there is a signal in round $t$.
	
	We will show that condition~\ref{cond:locked} with subcondition~\ref{cond:locked-source} will be true at time $t+1$.
	If $\varline{v}{t}{alg:timer-zero-cond}{\timer} \neq 0$ for all nodes, then by induction there is a node such that $\varline{v}{t}{alg:update-energy-cond}{\source = \true}$.
	Otherwise, if $\varline{v}{t}{alg:timer-zero-cond}{\timer} = 0$ but the signal is present, then all nodes run line~\ref{alg:locked-check-signal} and some node will witness the signal and set $\varline{v}{t}{alg:locked-signal-true}{\source} = \true$.
	Thus, we will again have some node with $\varline{v}{t}{alg:update-energy-cond}{\source = \true}$.
	In any case, there will be a node with $v_{t+1}.\source = \true$.
	Note that $|V_t^{\leq m}| \leq n = \max(0, m - m + n)$ trivially.
	Since there is at least one node with $v_{t+1}.\source$, and thus $v_{t+1}.\ttl = v_{t+1}.\maxttl = m$ by Lemma~\ref{lem:source-ttl}, we have $|V_t^{\leq m-1}| \leq n-1 = \max(0, (m-1)-m + n)$.
	Then conditions~\ref{cond:all-maxttl}, \ref{cond:all-locked}, and \ref{cond:timer-sync} are true and the case of $i = 0, \ldots, m-2$ for condition~\ref{cond:locked-source} can be proved exactly as they were for case 3 on the nodes that execute line~\ref{alg:update-energy}.
\end{itemize}

Now suppose that condition~\ref{cond:setting} was true at time $t$.
We once again split into cases.
\begin{itemize}
	\item
	\textit{Case 1}.
	$s^*(t) \leq m - 2$.
	
	We will show that condition~\ref{cond:setting} will continue to be true at time $t+1$.
	Clearly no node runs line~\ref{alg:increase-energy-locked} since $v_t.\state \neq \locked$ for all $v$ by induction.
	Thus, condition~\ref{cond:small-maxttl} will be true at $t+1$.
	Note that $s^*(t+1) = s^*(t) + 1$ since all nodes $v \in S_t$ have $\varline{v}{t}{alg:timer-sync}{\timer} = s^*(t)$ and thus $\varline{v}{t}{alg:timer-increment}{\timer} = s^*(t) + 1 \mod m = s^*(t) + 1$.
	Also, since no node has $\timer = m-1$, $\varline{v}{t}{alg:timer-increment}{\timer} \neq 0$ for all $v$. 
	Thus no nodes will execute line~\ref{alg:become-locked}, and condition~\ref{cond:not-locked} is true at time $t+1$.
	By induction, there is a node $v \in S_t$ with $v_t.\source = \true$.
	By tracing the algorithm, it is clear that $v_{t+1}.\source = \true$ and thus condition~\ref{cond:set-source} is true at $t+1$.
	Let $U = \{u \in V : u \notin S_t ~\land~ \exists v \in S_t [uv \in E_t]\}$.
	Note that $S_{t+1} = S_t \cup U$.
	If $S_t = V$ then $S_{t+1} = V$ and condition~\ref{cond:num-set} is obviously true at $t+1$.
	Otherwise, if $S_t \neq V$, then $U$ must be nonempty since $G$ is connected.
	Then $|S_{t+1}| = |S_t| + |U| \geq \min(n, s^*(t)) + 1 \geq \min(n, s^*(t) + 1) = \min(n, s^*(t+1))$ and so condition~\ref{cond:num-set} is true at $t+1$ in this case too.
	
	Finally, we prove that condition~\ref{cond:num-set-ttls} will be true at time $t+1$.
	Note that the condition is true for $i = m$ because $|S_{t+1}^{\leq m}| \leq |S_{t+1}| = \max(0, m - m + |S_{t+1}|)$.
	Also, since we have shown there is a node $v \in S_{t+1}$ with $v_{t+1}.\source = \true$ and thus $v_{t+1}.\ttl = m$ by Lemma~\ref{lem:source-ttl}, we must have $|S_{t+1}^{\leq m -1}| \leq |S_{t+1}| - 1 = \max(0, (m-1) - m + |S_{t+1}|)$ and thus the condition is true for $i = m-1$.
	Finally, let $i = 0, \ldots m - 2$.
	As in previous cases, note that $S_{t+1}^{\leq i} \subseteq S_t^{\leq i + 1} \cup U$.
	If $S_t^{\leq i + 1} = \emptyset$ then note that no node can have $l^* \leq i + 1$ in round $t$.
	Thus, all nodes must have $v_{t+1}.\ttl \geq l^* - 1 \geq i + 1$ so $|S_{t+1}^{\leq i}| = 0$ so the condition is true for $i$.
	Otherwise, by induction, $1 \leq |S_t^{\leq i+1}| \leq \max(0, i +1- m + |S_t|) \leq |S_t| - 1$.
	Thus, there must be nodes $v \in S_t^{\leq i+1}$ and $w \notin S_t^{\leq i+1}$ such that $vw \in E_t$.
	This would imply $v \notin S_{t+1}^{\leq i}$ and thus $S_{t+1}^{\leq i} \subseteq S_t^{\leq i+1} \setminus \{v\} \cup U$.
	Thus, $|S_{t+1}^{\leq i}| \leq |S_t^{\leq i+1}| - 1 + |U| \leq (i+1-m+|S_t|) -1 + |U| = i-m+|S_{t+1}| \leq \max(0, i-m+|S_{t+1}|)$ so condition~\ref{cond:num-set-ttls} is true.
	
	\item
	\textit{Case 2}.
	$s^*(t) = m - 1$. 
	
	We argue that condition~\ref{cond:locked} will be true at time $t+1$.
	Note that $m - 1 \geq n - 1$.
	Thus, condition~\ref{cond:num-set} says that $|S_t| \geq \min(n, m -1) \geq n - 1$.
	Thus, if we define $U$ as in case 1, either $U$ is empty or a singleton set.
	In any case, if $u \in U$, after line~\ref{alg:increase-energy-recv}, it must have $u_{t+1}.\maxttl = m$ since all other nodes are in $S_t$ and thus also have $v_{t+1}.\maxttl = m$.
	Thus, condition~\ref{cond:all-maxttl} will be true at time $t+1$.
	After line~\ref{alg:timer-sync}, all nodes $u \in U$ have $\varline{u}{t}{alg:timer-sync}{\timer} = s^*(t) = m - 1$.
	Thus timers will remain synced, so condition~\ref{cond:timer-sync} will be true at time $t+1$.
	Note that all nodes will have $\varline{v}{t}{alg:timer-increment}{\timer} = 0$.
	Then all nodes will become and stay locked after line~\ref{alg:become-locked}, so condition~\ref{cond:all-locked} will be true at time $t+1$.
	We now split into two more subcases depending on if there is a signal in round $t$.
	
	First, suppose the signal is present in round $t$.
	Then some node will have $\varline{v}{t}{alg:locked-signal-true}{\source} = \true$.
	Thus $v_{t+1}.\source = \true$ so we must prove condition~\ref{cond:locked-source} is true.
	When $i=m$ the condition is trivial to prove, as in other cases.
	By Lemma~\ref{lem:source-ttl}, $v_{t+1}.\ttl = m$.
	Thus, $|V_t^{\leq m-1}| \leq n-1 \leq \max(0, i-m+n)$ so the condition is true for $i = m-1$.
	For $i = 0\ldots m-2$ we can argue exactly as we did in case 1 to show that the condition holds.
	
	Finally, suppose the signal is not present in round $t$.
	Then all nodes will have $\varline{v}{t}{alg:locked-signal-false}{\source} = \false$ and thus $v_{t+1}.\source = \false$.
	In this case we must prove that condition~\ref{cond:locked-no-source} is true at time $t+1$.
	By Lemmas~\ref{lem:source-ttl} and~\ref{lem:ttl-bound}, all nodes have $v_{t+1}.\ttl \leq m - 1 = m - v_{t+1}.\timer - 1$.
	Thus, condition~\ref{cond:large-ttl} is true at $t+1$.
	Finally condition~\ref{cond:small-ttl} only needs to be proved for $i = 0,\ldots m-2$, which can be done in the same way as case 1.
\end{itemize}
Recall that we had previously defined $t_0$ so that there exists a $v$ such that $\varline{v}{t_0}{alg:increase-energy-recv}{\maxttl} = m$ and $v_{t_0}.\maxttl = 2m \geq 2n$.
In order to run line~\ref{alg:increase-energy-locked}, $v$ must have received from a node $u$ with $u_{t_0}.\maxttl = m$ and $u_{t_0}.\state = \locked$.
Also, either $v_{t_0}.\state = \reset$ or $v_{t_0}.\maxttl < m$.
However, $u_{t_0}.\state = \locked$ implies that condition~\ref{cond:locked} is true.
Thus, by conditions~\ref{cond:all-maxttl} and~\ref{cond:all-locked} $v_{t_0}.\maxttl = m$ and $v_{t_0}.\state = \locked$, a contradiction.
\end{proof}

\end{document}